\documentclass[11pt]{article}
\usepackage[T1]{fontenc}
\usepackage{lmodern}
\usepackage[margin=1in]{geometry}
\usepackage{amsmath,amssymb,amsthm,mathtools}
\usepackage{enumitem}
\usepackage{microtype}
\usepackage{needspace}
\usepackage{xcolor}
\usepackage{framed}
\usepackage{tikz}
\usetikzlibrary{arrows.meta,decorations.pathreplacing}
\newcounter{algorithm}
\usepackage[colorlinks=true,linkcolor=blue!55!black,citecolor=blue!55!black,urlcolor=blue!55!black]{hyperref}
\usepackage[nameinlink,noabbrev,capitalise]{cleveref}
\crefname{algorithm}{Algorithm}{Algorithms}
\setlist[enumerate]{topsep=5pt,itemsep=3pt}
\setlist[itemize]{topsep=5pt,itemsep=3pt}
\numberwithin{equation}{section}
\newtheorem{theorem}{Theorem}[section]
\newtheorem{lemma}[theorem]{Lemma}
\newtheorem{corollary}[theorem]{Corollary}

\theoremstyle{definition}
\newtheorem{definition}[theorem]{Definition}
\theoremstyle{remark}
\newtheorem{remark}[theorem]{Remark}
\DeclareMathOperator{\cw}{cw}
\DeclareMathOperator{\pw}{pw}
\DeclareMathOperator{\td}{td}
\DeclareMathOperator{\tw}{tw}
\DeclareMathOperator{\VS}{VS}
\DeclareMathOperator{\MS}{MS}
\DeclareMathOperator{\MRS}{MRS}
\DeclareMathOperator{\LP}{LP}
\DeclareMathOperator{\DLP}{DLP}
\DeclareMathOperator{\LPVS}{LP_{VS}}
\DeclareMathOperator{\Vol}{Vol}

\DeclareMathOperator{\cc}{cc}
\newcommand{\R}{\mathbb{R}}
\newcommand{\eps}{\varepsilon}
\newcommand{\set}[1]{\left\{#1\right\}}
\newcommand{\abs}[1]{\left|#1\right|}

\title{Greedy Region Growing: Improved Approximations for Cutwidth, Pathwidth, Flow-Expander Decomposition, and More}
\author{Aditya Anand\thanks{Email: \href{mailto:adanand@umich.edu}{\texttt{adanand@umich.edu}}. Part of this work was done while the author was a Ph.D. student at the University of Michigan.}\\Max Planck Institute for Informatics}
\date{}

\begin{document}
\maketitle

\begin{abstract}
\normalsize
We give a new rounding technique for spreading-metric linear programs, called \emph{Greedy Region Growing}, which adapts the classical technique of region growing to min-max graph layout problems and flow-expander decomposition. Using this technique, we obtain state-of-the-art approximation algorithms for several fundamental problems.
\begin{enumerate}[leftmargin=*,label=\arabic*.]
\item An $O(\log n)$-approximation for \textsc{Cutwidth} and \textsc{Pathwidth}, improving upon the previous bound of $O(\log n\exp(O(\sqrt{\log\log n})))$ due to Bansal, Katzelnick, and Schwartz (FOCS 2024). This matches the $\Omega(\log n)$ integrality gap of the standard LP relaxations.

\item An $O(\log n)$-approximation for their directed analogues, \textsc{Minimum Storage} and \textsc{Minimum Register Sufficiency} on directed acyclic graphs, improving the previous bound of $O(\log^{3/2} n)$.
\item A $\phi$-flow-expander decomposition that cuts at most an $O(\phi\log n\log\log n)$ fraction of the edges. This improves the $O(\phi\log n\exp(O(\sqrt{\log\log n})))$ bound of Bansal, Jambulapati, and Saranurak (ICALP 2026), and is within an $O(\log\log n)$ factor of the graph-theoretic lower bound.
\end{enumerate}
The vertex version of our algorithm also gives an $O(\log n)$-approximation for \textsc{Treedepth}. In particular, our results answer several open problems of~\cite{BKS24}. Technically, we also solve the conceptual question raised by~\cite{BKS24}: our technique shows how to adapt Seymour recursion to min-max objectives, and constructs $O(1)$-weak local partitions in spreading metrics.
\end{abstract}

\clearpage
\setcounter{tocdepth}{1}
\tableofcontents
\clearpage

\section{Introduction}\label{sec:intro}
Graph partitioning is a central topic in approximation algorithms. Fundamental problems such as \textsc{Sparsest Cut}~\cite{LR99,LLR95,AR98,ARV09,ALN08}, \textsc{Multiway Cut}~\cite{DJPSY94,CKR00,SV14}, \textsc{Multicut}~\cite{GVY96}, \textsc{Feedback Vertex Set}~\cite{BG96,BBF99}, \textsc{Directed Feedback Arc Set}~\cite{Sey95,ENSS98}, \textsc{Treewidth} and \textsc{Pathwidth}~\cite{BGHK95,FHL05,BKS24} have led to a rich theory of approximation algorithms and metric embeddings. A common starting point for many of these problems is a linear program that assigns distances to vertex pairs and imposes \emph{spreading constraints}: sufficiently large vertex sets must be spread out in the resulting metric. The goal is then to round this metric into a cut, an ordering, or a decomposition. This approach was pioneered by the multicommodity-flow framework of Leighton and Rao~\cite{LR99} and the spreading-metric framework of Even, Naor, Rao, and Schieber~\cite{ENRS00}.

Leighton and Rao~\cite{LR99} introduced \emph{region growing} to round the linear programs for \textsc{Sparsest Cut} and \textsc{Balanced Separator}, obtaining an $O(\log n)$-approximation for the former with uniform demands and a bicriteria $O(\log n)$-approximation for the latter. In region growing, we repeatedly grow a ball in the metric until its boundary is cheap, and cut it out. For many problems, including \textsc{Minimum Linear Arrangement}, \textsc{Cutwidth} and expander decomposition~\cite{KVV04,ST11,SW19}, the natural approach is to decompose the graph recursively using balanced separators. This typically gives an $O(\log^2 n)$-approximation: since every separator is balanced, the pieces shrink by a constant factor at each level, so there are $O(\log n)$ levels of recursion, and at each level we lose an $O(\log n)$ factor in approximating \textsc{Balanced Separator}. The celebrated $O(\sqrt{\log n})$-approximation for balanced separators of Arora, Rao, and Vazirani~\cite{ARV09}, and its extension to vertex separators~\cite{FHL05}, improves the second logarithm, giving $O(\log^{3/2} n)$-approximations for several of these problems.

Seymour~\cite{Sey95}, in the context of feedback set problems in directed graphs, introduced a powerful refinement of region growing that trades the cost of a cut against the size of the piece it separates. Skipping many details, the idea is as follows. We may cut off pieces of any size (so the cuts need not be balanced), but how much we may pay depends on the size of the piece: a piece containing a constant fraction of the graph must be cut at nearly constant approximation cost, whereas a piece of constant size may be cut at cost $O(\log n)$. In general, we pay for a piece of size $s$ with a factor of $\log(n/s)$. The magic is that along a root-to-leaf path of the recursion tree, these factors telescope to a single $O(\log n)$: if the pieces along the path have sizes $n = s_1 \geq s_2 \geq \cdots \geq s_k = 1$, then the total factor is
\[
O\left(\log\frac{s_1}{s_2} + \log\frac{s_2}{s_3} + \cdots + \log\frac{s_{k-1}}{s_k}\right) = O\left(\log \frac{s_1}{s_k}\right) = O(\log n).
\]
The actual implementation works with LP volumes of metric balls rather than their sizes, and loses an additional $O(\log\log n)$ factor. Using this, Even, Naor, Rao, and Schieber~\cite{ENRS00} soon gave $O(\log n \log \log n)$-approximations for numerous graph partitioning problems, beating the $O(\log^2 n)$ barrier of balanced recursion. For min-sum objectives such as \textsc{Minimum Linear Arrangement}, Rao and Richa~\cite{RR05} improved the approximation to $O(\log n)$, and subsequent SDP-based techniques give $O(\sqrt{\log n}\log\log n)$-approximations~\cite{FL07,CHKR10}.

Min-max objectives such as \textsc{Cutwidth} and \textsc{Pathwidth} are much less well understood, and Seymour recursion does not directly apply to them. In \textsc{Cutwidth}, we must find an ordering of the vertices that minimizes the maximum number of edges leaving a prefix. Even if the cuts produced by Seymour recursion are individually cheap, their boundaries may all pile up on a single prefix. This is why, for a long time, the best approximation for these problems remained the $O(\log^{3/2} n)$ given by balanced separators. In a recent breakthrough, Bansal, Katzelnick, and Schwartz~\cite{BKS24} overcame this obstacle using a multiscale metric decomposition, obtaining $O(\log n\exp(O(\sqrt{\log\log n})))$-approximations for \textsc{Cutwidth} and \textsc{Pathwidth}. They explicitly point out why adapting Seymour recursion seems to fail, and they proposed \emph{weak local partitions} as a possible route to an $O(\log n\log\log n)$-approximation, leaving their construction open. Moreover, their multiscale argument does not directly extend to directed min-max layouts, where the metric is asymmetric.

The same obstacle appears in flow-expander decomposition. Previously, the best overhead was $O(\log^2 n)$ for flow expansion ($O(\log^{3/2} n)$ for cut expansion). Recently, Bansal, Jambulapati, and Saranurak~\cite{BJS26} broke this barrier by adapting the decomposition of~\cite{BKS24}, obtaining a flow-expander decomposition with overhead $O(\log n\exp(O(\sqrt{\log\log n})))$.

This raises two natural questions. The first is conceptual: \emph{can Seymour recursion be adapted to min-max graph layout problems and to expander decomposition?} The second asks how far one can go. For all the layout problems we consider, the spreading-metric LP has an integrality gap of $\Omega(\log n)$ on bounded-degree expanders (\cite{BKS24}, \cref{app:gaps}),\footnote{For \textsc{Cutwidth}, the gap compares $\cw(G)$ with $\max_{\emptyset\ne S\subseteq V(G)}\LP(G[S])/|S|$, and analogously for the other layout problems.} so $O(\log n)$ is the best one can hope for with this LP approach. The same gap holds for the stronger direct \textsc{Cutwidth} LP of~\cite{BKS24}, as we show in \cref{app:gaps}. \emph{Can we reach this bound, removing even the $O(\log\log n)$ loss of Seymour recursion?}

\paragraph{Our contribution.}
We answer both questions in the affirmative. We give two rounding algorithms, which together we call \emph{Greedy Region Growing}. Both repeatedly carve the heaviest remaining ball out of the metric, so that the cut costs are controlled not only piece by piece but on every prefix of the carving at once; this is exactly what min-max objectives require. They differ in how the radius of each cut is chosen: locally for each center, or globally for all centers at once.
\begin{enumerate}[leftmargin=*,label=(\arabic*)]
\item \emph{Greedy Seymour Carving} chooses the radius locally: it is a greedy variant of Seymour recursion, in which each center chooses its radius by Seymour's region growing. This answers the first question. With it, we obtain a flow-expander decomposition with $O(\log n\log\log n)$ overhead, and an $O(\log n\log\log n)$-approximation for \textsc{Cutwidth}. In fact, Greedy Seymour Carving gives $O(\log n\log\log n)$ bounds for all the problems in this paper; since Global Carving does better for the layout problems, we present it in full only for expander decomposition.
\item \emph{Global Carving} goes beyond Seymour recursion and answers the second question. It uses a single radius shared by all centers, which shrinks at a rate proportional to the volume of the heaviest remaining ball. The algorithm is simple and applies uniformly to all our layout problems, including directed layouts, where the metric is asymmetric. With it, we remove the $O(\log\log n)$ factor and obtain $O(\log n)$-approximations, matching the integrality gap, for \textsc{Cutwidth}, \textsc{Pathwidth}, \textsc{Treedepth}, \textsc{Minimum Storage} and \textsc{Minimum Register Sufficiency}. Flow-expander decomposition is the only remaining case, and we leave removing the $O(\log\log n)$ factor there open.
\end{enumerate}
We state these results below.

\subsection{Cutwidth and pathwidth}
\textsc{Cutwidth} and \textsc{Pathwidth} are classical graph layout problems whose approximation algorithms have developed alongside multicommodity-flow methods and balanced-separator techniques~\cite{LR99,BGHK95}. For an ordering $\pi=(v_1,\ldots,v_n)$ of an undirected graph, its cutwidth is the maximum number of edges with exactly one endpoint in a prefix of $\pi$. The cutwidth of the graph is the minimum over all orderings. \textsc{Pathwidth} has an equivalent ordering formulation: the cost of a prefix is the number of vertices in the prefix with a neighbor outside it. The minimum, over all orderings, of the maximum prefix cost is the vertex-separation number, which equals the pathwidth~\cite{Kin92}.

As a function of $n$, the best known approximation for both problems, due to Bansal, Katzelnick, and Schwartz~\cite{BKS24}, is $O(\beta(n)\log n)$, where $\beta(n)=\exp(O(\sqrt{\log\log n}))$. Assuming the Small Set Expansion Hypothesis, \textsc{Cutwidth} admits no constant-factor approximation~\cite{WAPL14}. We obtain $O(\log n)$-approximations for both problems.

\begin{theorem}[Cutwidth and pathwidth]\label{thm:main-layout}
There are deterministic polynomial-time algorithms that, given an $n$-vertex undirected graph $G$, compute an ordering of cutwidth at most
\[
O(\log n)\cw(G)
\]
and a path decomposition of width at most
\[
O(\log n)\pw(G),
\]
respectively.
\end{theorem}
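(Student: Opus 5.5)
We will prove \cref{thm:main-layout} by rounding the spreading-metric LP with Global Carving inside a recursive divide-and-conquer layout. For \textsc{Cutwidth}, the relaxation of \cite{BKS24} is as follows. For a vertex set $S$, let $\LP(G[S])$ be the minimum of $\sum_{e\in E(G[S])} d(e)$ over semimetrics $d$ on $S$ subject to the spreading constraints
\[
\sum_{v\in T} d(u,v)\ \geq\ c\,|T|^2\qquad\text{for all } u\in T\subseteq S .
\]
The lower bound we use is $\cw(G)\geq \Omega(1)\cdot\max_{\emptyset\neq S\subseteq V}\LP(G[S])/|S|$. It is proved by taking an optimal ordering of $G[S]$ and setting $d(u,v)$ equal to the number of positions between $u$ and $v$, normalised by $\cw(G)$. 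Denote this quantity by $\lambda:=\max_S \LP(G[S])/|S|$.

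The vertex version treats pathwidth the same way. The metric now lives on vertices, with vertex lengths $x_v$ and a spreading condition on the induced path lengths. The lower bound is $\pw(G)\gtrsim\LPVS(G[S])/|S|$, and an ordering of vertex-separation $O(\log n)\lambda$ converts into a path decomposition by \cite{Kin92}. So it suffices to produce, for every $S$, an ordering of $G[S]$ whose every prefix boundary is $O(\log n)\lambda$.

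The algorithm is recursive. On current set $S$ with $|S|=s$, we solve the LP to get $d$ of cost at most $\lambda s$. We then run Global Carving. Define the ball volume $\Vol(B(v,r)) = |B(v,r)|\cdot$(unit) $+\ \tfrac{1}{r_0}\sum_{e\subseteq B}d(e)$ (Leighton--Rao style seeding $\lambda s/s$ per vertex). We then repeatedly pick the center whose ball at the current shared radius $r$ is heaviest, cut it out, and append it (recursively ordered) to the layout. The radius $r$ is a single global threshold that decreases at a rate proportional to the volume of the heaviest remaining ball. Spreading forces every ball of radius $\Theta(s)$ to contain at most a constant fraction of $S$, so all pieces have size $\leq s/2$. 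The piece sequence yields the ordering $\pi = (\pi(P_1),\pi(P_2),\ldots)$.

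The key invariant to prove is a prefix bound. For every $j$, the number of edges leaving $P_1\cup\cdots\cup P_j$ within $S$ is $O(\log(s/|P_j|))\cdot\lambda\cdot$ (something telescoping), or more precisely $O(\lambda\log n)$ in total along any root-to-leaf path. The argument is a volume-potential one. Since radii are shared and shrinking, an edge cut by the $j$-th carving but not yet separated is charged to the $d$-length it has inside the band $[r_{j+1},r_j]$. Because the heaviest ball is removed, the remaining maximal ball volume $V_j$ is nonincreasing. By choosing $dr/dV\propto 1/V$, the total boundary of the carved union at any time is at most $\lambda\cdot O(\log(V_{\text{start}}/V_{\text{now}}))$.

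The prefix cost of the final ordering at a cut point inside piece $P_j$ is bounded by two contributions. The first is the outer boundary of $P_1\cup\cdots\cup P_{j-1}\cup P_j$ in $S$, which is at most $O(\lambda\log(s/|P_j|))$ by the invariant. The second is the prefix cost inside the recursive ordering of $P_j$. Unrolling, the costs along the nested chain $S=S_1\supset S_2\supset\cdots$ telescope to $O(\lambda\log n)$. The ancestors' boundaries also add up, but each ancestor contributes only its own level's term, giving the same telescoping sum.

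The main obstacle is the prefix invariant: we must show that carving the heaviest ball with the global shrinking radius keeps every prefix union's boundary at $O(\lambda\log(s/|P_j|))$, rather than merely each piece's boundary. Here edges cut by earlier pieces pile up. I would first try an averaging argument over a random global radius offset, comparing $\sum_{e}$ band-length against $\int dV/V$. Then I would derandomise by choosing the best threshold, using that the heaviest-ball choice ensures each later ball has volume at most the current one. The vertex (pathwidth) analogue should follow verbatim with vertex lengths.
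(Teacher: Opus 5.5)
Your outline has the right architecture: an ordered partition into pieces of at most half the vertices, recursion inside each piece, and a telescoping sum along the chain of nested instances. Your size-based target would even suffice, since $\Vol(X)\ge\frac{\LP(G)}{n}|X|$ makes it weaker than the paper's volume bound. However, the prefix bound, which you flag as ``the main obstacle'', is the whole content of the theorem, and it is not proved. A random global offset followed by derandomization does not explain why the boundaries of successive pieces do not pile up.

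The missing idea is a deterministic per-step charging against a monotone \emph{level}. Let $K_j$ be the heaviest ball volume at the current radius $\rho_j$, use a window of width $w_j=\frac{n}{160}\frac{K_j}{\LP(G)}$, and set $L_j=\log\frac{2e\,\LP(G)}{K_j}$. Let $\Delta_j$ be the deficit growth of the chosen center $u_j$ across $(\rho_j-w_j,\rho_j)$. Averaging over the window gives a cut of size at most $\frac{K_j}{w_j}(1-e^{-\Delta_j})\le160\frac{\LP(G)}{n}\min\{1,\Delta_j\}$. If also $L_{j+1}\ge L_j+\Delta_j$, the costs telescope to $320\frac{\LP(G)}{n}L_i$ for the prefix ending at $X_i$.

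Two details of your plan block this argument.
\begin{itemize}
\item \emph{The volume.} Yours omits the partial lengths of edges leaving the ball, so its derivative is not the cut size. Then $\int|\delta(B(u,r))|\,dr\le\Vol(\rho)-\Vol(\rho-w)$ fails: an edge from inside the inner ball to outside the outer ball is cut throughout the window but adds nothing to the difference.
\item \emph{The center.} You take the center heaviest at the current radius. The level rise needs $u_j$ to maximize volume at the \emph{inner} radius $\rho_j-w_j$. Then, after removal, every remaining ball of the new radius weighs at most $\Vol_{u_j}(\rho_j-w_j)=e^{-\Delta_j}\Vol_{u_j}(\rho_j)\le e^{-\Delta_j}K_j$. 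Under your rule, $u_j$ may have a steep profile (an expensive cut) while another center's profile is nearly flat, so $K_{j+1}\approx K_j$ and the cut is never paid for.
\end{itemize}

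You also never check that the radius stays in the range where spreading caps piece sizes. This is not automatic when it shrinks in proportion to the heaviest volume. The paper proves $K_j\le2W_j+2(K_j-K_{j+1})$, again using the inner-radius choice, and combines it with $\sum_jW_j\le2\LP(G)$ to get $\sum_jK_j\le8\LP(G)$. Hence the total window width is at most $n/20$.

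Your statement of the lower bound is also garbled: the line metric itself is feasible, and no normalization by $\cw(G)$ is involved.

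Finally, the pathwidth case is not ``verbatim''. It needs core/separator sets built from the vertex intervals $[d(u,v)-x_v,\,d(u,v)+x_v]$, with a volume whose derivative is the separator size. It also needs a way to assemble cores and separators; the paper builds an elimination forest and takes a depth-first preorder.
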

This answers open questions of~\cite{BKS24}. The \textsc{Cutwidth} result is proved in \cref{sec:cutwidth}. The \textsc{Pathwidth} result follows from the vertex version of the carving algorithm in \cref{sec:vertex}.

\subsection{Directed min-max layouts}
In directed layout problems, the input is a directed acyclic graph and the output ordering must be topological. In \textsc{Minimum Storage}, the cost of an ordering is the maximum number of arcs crossing a prefix. In \textsc{Minimum Register Sufficiency}, the cost is the maximum number of vertices in a prefix with an out-neighbor outside the prefix. These classical problems model peak memory and register requirements in precedence-constrained computations. Approximation algorithms for \textsc{Minimum Register Sufficiency} date back to the early 1990s~\cite{KARR90}.

The best previous bound, $O(\log^{3/2} n)$, follows by combining balanced recursion with the directed separator algorithms of~\cite{ACMM05}. For \textsc{Minimum Register Sufficiency}, this combination with the framework of~\cite{RAK91} is noted explicitly in~\cite{WAPL14}; the edge-separator version gives the same bound for \textsc{Minimum Storage}. Bansal, Katzelnick, and Schwartz~\cite{BKS24} asked whether almost logarithmic approximations are possible; since the natural metric is asymmetric, the packing and covering properties used by their multiscale decomposition no longer hold.

\begin{theorem}[Directed layouts]\label{thm:main-directed}
There are polynomial-time $O(\log n)$-approximation algorithms for \textsc{Minimum Storage} and \textsc{Minimum Register Sufficiency} on directed acyclic graphs.
\end{theorem}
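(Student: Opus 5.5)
The plan is to follow the undirected Cutwidth argument, replacing the symmetric spreading metric with an asymmetric, DAG-respecting one and running Global Carving with out-balls only.

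\textbf{The LP.} For \textsc{Minimum Storage} on a DAG $G$ with target width $W$, the variables are lengths $d(u,v)\ge 0$ on ordered pairs. They satisfy the directed triangle inequality, and we set $d(u,v)=0$ whenever $v$ is reachable from $u$ (reading $d(u,v)$ as ``how far $v$ is placed before $u$''). The spreading constraints require
\[
\sum_{v\in S} d(u,v)\ \ge\ c\,\frac{|S|^2}{W}
\]
for every $S$ and every $u\in S$, and the volume $\sum_{(x,y)\in E} d(x,y)$ is bounded by $|V|$ times a suitable power of the LP value, mirroring the undirected normalization.

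\emph{Feasibility.} Any topological ordering $\pi$ of width $W$ gives a feasible solution: put $d(u,v)=(\pi(u)-\pi(v))^+/W$, or its cut-counting analogue (the number of prefixes separating $v$ before $u$, divided by $W$). Reachability forces these lengths to vanish in the right direction. This is checked prefix by prefix exactly as in the undirected case, using only that each prefix is cut by at most $W$ arcs. In the algorithm we will rather use the recursive gap $\max_S \LP(G[S])/|S|$ as the lower bound, as in the Cutwidth section.

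\textbf{Rounding.} We run Global Carving. The centers are ranked by the LP volume of their out-balls $B^+(u,r)=\{v: d(u,v)\le r\}$, and we repeatedly carve the heaviest remaining out-ball. Every carved piece is closed under predecessors within the remaining graph: if $d(u,v)\le r$ and $w$ reaches $v$, then $d(u,w)\le d(u,v)+d(v,w)=d(u,v)\le r$. Hence, if carved pieces are placed first and the remainder is recursed on, every output prefix is predecessor-closed, and concatenating topological orders of the pieces yields a topological order. This is where asymmetry enters: the boundary of an out-ball consists only of arcs leaving it forward, which are exactly the arcs counted at the prefix.

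The Global Carving analysis for Cutwidth uses a common radius that shrinks in proportion to the volume of the heaviest remaining ball. Its charging argument (each prefix's boundary is charged to the volume shells of the carved balls, telescoping to $O(\log n)\cdot W$) uses only the directed triangle inequality along the direction of growth, together with the spreading constraint applied to out-balls. It never uses ball packing or covering. I would therefore re-derive that analysis with out-balls verbatim, and recurse inside each piece with the induced LP. The invariant is that each piece $S$ satisfies $\LP(G[S])\le |S|\cdot \mathrm{OPT}$, so the min-max bound follows by the same induction.

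\textbf{Register Sufficiency and the main obstacle.} For \textsc{Minimum Register Sufficiency} we use the vertex version of Section~\ref{sec:vertex}. There, vertex lengths are charged to the vertices of a ball having an out-neighbor outside it, i.e.\ the tails of forward-crossing arcs, with the same directed LP. The main obstacle I expect is the spreading/volume step for out-balls. The undirected argument lower-bounds the size of the heaviest ball by an averaging over centers that may implicitly use $d(u,v)=d(v,u)$. I would first try to replace it by a one-sided statement: the spreading constraint at $u$ alone already forces $|B^+(u,r)|\le O(rW)$. Upper bounds of this one-sided kind are all that the charging uses. If that fails, I would add the spreading constraint with $\min(d(u,v),d(v,u))$ replaced by $d(u,v)+d(v,u)$, which is still valid for orderings.
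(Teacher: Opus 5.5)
\textbf{There is a genuine gap: with out-balls alone, the carved pieces need not be small.} You flagged this step as the main obstacle, and it is where the plan breaks.

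\emph{The one-sided LP is not a relaxation.} Your constraint $\sum_{v\in S}d(u,v)\ge c|S|^2/W$ is violated by your own integral solution $d(u,v)=(\pi(u)-\pi(v))^+/W$ at the first vertex $u$ of $\pi$, where $d(u,\cdot)\equiv0$.

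\emph{No valid relaxation can make out-balls small.} Precedence and the directed triangle inequality can force $d(s,v)=0$ for every $v$ in every feasible solution. Take a DAG with a universal source $s$ under your stated convention ($d(u,v)=0$ when $v$ is reachable from $u$). Or take a universal sink under the opposite convention, which is the one your predecessor-closure argument actually uses. Then $B^+(s,r)=V$ for every $r>0$, and every ball other than $V$ has strictly smaller volume. So Global Carving carves $X_1=V$ and the recursion makes no progress. The bound $|X_i|\le cn$ with $c<1$ is what gives depth $O(\log n)$ and makes the logarithms telescope.

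\emph{Your fallback is the right LP but does not rescue out-balls.} Spreading on $\bar d(u,v)=d(u,v)+d(v,u)$ is exactly \eqref{eq:directed-lp}. It only controls $|B^+(u,r)\cap B^-(u,r)|$, which lies in the $\bar d$-ball of radius $2r$; it says nothing about $B^+(u,r)$ alone. Separately, under your stated convention the objective $\sum_{(x,y)\in E}d(x,y)$ is identically zero, so the convention must be reversed.

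\textbf{The missing idea is to carve with both orientations and output the pieces in a two-ended order.}
\begin{itemize}
\item Since $|B^+(u,n/20)\cap B^-(u,n/20)|\le n/2$, we get $|B^+(u,n/20)|+|B^-(u,n/20)|\le 3n/2$. So every vertex has a sign $\sigma(u)$ whose ball of radius $n/20$ has at most $3n/4$ vertices.
\item Fix $\sigma$ once at the start. Then oriented balls and their volumes stay monotone as the residual set shrinks, which is needed for $K_{j+1}\le\Vol_{u_j}^{U_j}(\rho_{j+1})$. Global Carving with oriented balls then satisfies the same prefix bound in removal order.
\item A sink piece has only incoming arcs in the residual graph, so placing carved pieces first fails. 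Output the source pieces in removal order, followed by the sink pieces in reverse removal order.
\item This order is topological. A prefix $P$ whose last vertex lies in $X_i$ is crossed only by arcs of $G[X_i]$ and arcs in $\delta_G(X_j)$ with $j\le i$. This holds because either $P\setminus X_i$ is a union of earlier source pieces, or $V\setminus(P\cup X_i)$ is a union of earlier sink pieces. This is what transfers the removal-order bound to the final ordering.
\end{itemize}

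\textbf{Minimum Register Sufficiency.} Your vertex-version sketch inherits the same problem. It also lacks a directed core--separator lemma: for a sink piece, the tails of the crossing arcs lie outside the piece. The paper avoids this by reducing to weighted \textsc{Minimum Storage} (\cref{app:register}). Each vertex $v$ becomes an arc $v^-v^+$ of weight $1$. Each arc $uv$ yields arcs $u^-v^-$ and $v^-u^+$ of weight $1/(10m)$. Polynomially bounded weights are then handled by replacing each arc with parallel two-arc paths.
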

We prove the \textsc{Minimum Storage} result in \cref{sec:directed}. The reduction from \textsc{Minimum Register Sufficiency} to weighted \textsc{Minimum Storage} is given in \cref{app:register}.

\subsection{Treedepth}
Treedepth is a central parameter in structural graph theory, particularly in the study of sparse graph classes~\cite{CNP21}. An elimination forest of a graph is a rooted forest in which the endpoints of every edge are comparable in the ancestor order. The treedepth of the graph is the minimum height of such a forest. The vertex version of our algorithm gives the following result.

\begin{theorem}[Treedepth]\label{thm:main-treedepth}
There is a polynomial-time algorithm that, given an $n$-vertex graph $G$, constructs an elimination forest of height
\[
O(\log n)\td(G).
\]
\end{theorem}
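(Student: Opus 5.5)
We will derive the treedepth bound from the vertex version of Global Carving (developed for \textsc{Pathwidth} in \cref{sec:vertex}), applied recursively to produce an elimination forest rather than a linear order.

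\textbf{LP and its value.} The natural vertex-spreading LP assigns each vertex $v$ a length $x_v\in[0,1]$, which induces shortest-path vertex distances $d$. The constraint is that every set $S$ and vertex $u\in S$ satisfy $\sum_{w\in S} d(u,w)\ge c\,|S|^2$ (or a similar spreading constraint). Given an elimination forest of height $h$, set $x_v=1/h$ for every $v$. Any connected set $S$ lies in a subtree, and a path between two vertices of $S$ must pass through their common ancestors. A counting argument over depths, as in the pathwidth LP, then shows that the spreading constraints hold up to constants. This gives $\max_{S}\LP(G[S])/|S| = O(\td(G))$. I will treat this LP bound as the analogue of the one used for pathwidth, and I will assume the vertex carving lemma is already stated.

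\textbf{Rounding to a separator hierarchy.} Recursively, on a connected component $H$, we solve the LP for $H$ and run vertex Global Carving. Carving removes a set of boundary vertices $B_1,B_2,\dots$ in order, and each removed set separates off a ball. The key property of Global Carving is that the boundary costs are controlled on every prefix of the carving. Concretely, the number of vertices removed while carving pieces whose volume is in the range $(\Vol/2^{i+1},\Vol/2^i]$ is $O(\LP/|H|)$, up to a telescoping term in $i$. We place the removed boundary vertices on a root path, in carving order, and recurse on each carved ball below them. The main concern is depth accounting. A vertex $v$ ends up at depth equal to the total number of separator vertices placed above it across all recursion levels. Global Carving charges a piece of volume $s$ a factor of $\log(n/s)$ times $\LP/n$, and along any root-to-leaf chain of pieces these factors telescope exactly as in the Seymour calculation in the introduction. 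The total is therefore $O(\log n)\cdot\max_S \LP(G[S])/|S| = O(\log n)\td(G)$.

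\textbf{Main obstacle.} The hard step is that in treedepth the separator vertices for a given piece must lie on that piece's ancestor path. It is not enough to count only the boundary of that piece, because all earlier boundaries in the same carving round (those for other balls) also sit above the piece on the chain. This is exactly the min-max prefix issue, and it is why we need Global Carving's guarantee that every prefix of the carving is cheap, rather than per-piece Seymour bounds. My first approach would be to reuse the prefix-cost bound from the pathwidth analysis verbatim: the pathwidth prefix cost bounds the number of carved vertices preceding the current ball, which is exactly the depth added in that round. If that bound includes an additive $O(\log(n/s))\cdot \LP/n$ for the current ball of size $s$, then summing over the rounds along a chain telescopes to $O(\log n)$.
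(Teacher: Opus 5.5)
Your architecture matches the paper's: recursive vertex Global Carving, with the separators $D_1,\dots,D_k$ on a chain in carving order and the recursive forests of the $i$-th core hung below the last vertex of $D_i$. The per-level depth is bounded by the prefix bound $\sum_{j\le i}|D_j|\le 240\,\frac{\LPVS(G)}{n}\log\frac{3e\,\LPVS(G)}{W_i}$, which then telescopes down the recursion. The genuine gap is the LP lower bound, which is where $\td(G)$ enters, and your argument for it fails. No uniform assignment $x_v\equiv\lambda$ can satisfy spreading. The star $K_{1,n-1}$ has treedepth $2$, but with uniform vertex lengths every distance in it is $O(\lambda)$. So $\sum_{w\in S}d(u,w)=O(\lambda|S|)$ rather than $\Omega(|S|^2)$. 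With your box constraint $x_v\in[0,1]$, the LP is not even feasible on this star for large $n$. No counting over depths repairs this. Spreading forces a line-like metric, so the lengths must be very nonuniform: in the solution coming from the ordering that puts the center first, the center gets length $n-1$ and every leaf gets length $0$.

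What is missing is the detour through pathwidth. For a vertex-separation ordering $\pi$ of $H$, the paper sets $d(u,v)=|\pi(u)-\pi(v)|$, which spreads by \cref{lem:line-spreading}. It sets $x_u=\max(\{\pi(v)-\pi(u):uv\in E(H)\}\cup\{0\})$, so that $\sum_ux_u$ counts, over all prefixes, the processed vertices with an unprocessed neighbor. This gives $\LPVS(H)\le|V(H)|\pw(H)$ (\cref{lem:vertex-lower}). It then uses $\pw(H)\le\td(H)-1$: a depth-first preorder of an elimination forest of height $h$ has vertex-separation cost at most $h-1$, since every processed vertex with an unprocessed neighbor is a strict ancestor of the next vertex (\cref{lem:forest-order}). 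So the height is proved to be $O(\pw(G)\log n)$ and converted to treedepth only at the end; ``the analogue of the pathwidth bound'' is usable only once you have $\pw\le\td$.

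Your telescoping also glosses two ingredients. First, you must recurse on the components of the core $S_i$ (the ball minus its separator), not on the carved ball. The inequality $\LPVS(G[C])\le\sum_{v\in C}x_v\le\frac12W_i$ holds precisely because a core vertex has its whole interval $[d(u_i,v)-x_v,\,d(u_i,v)+x_v]$ below $r_i$, and so contributes $2x_v$ to the volume. Second, the sum of logarithms is $O(\log n)$ only because $|S_i|\le n/2$ gives recursion depth $O(\log n)$, and every connected instance with at least two vertices has $\LPVS\ge 3/16$.
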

Recursive balanced vertex separators~\cite{FHL05} give an $O(\log^{3/2}n)$-approximation. It can be shown that the separator construction in Section~5 of~\cite{BKS24} also implies an $O(\beta(n)\log n)$-approximation for \textsc{Treedepth}, although their paper states the result only for \textsc{Pathwidth}. We obtain an $O(\log n)$-approximation for treedepth as well.

There are also incomparable guarantees in terms of width parameters. Czerwi\'nski, Nadara, and Pilipczuk~\cite{CNP21} compute an elimination forest of height $O(kt\log^{3/2}t)$, where $k=\td(G)$ and $t=\tw(G)$; this is better when $t$ is small. On the hardness side, it is NP-hard to approximate \textsc{Treedepth} within a factor of $1.0003$~\cite{BNS25}.

\subsection{Flow-expander decomposition and the fine-grained flow-cut gap}
Expander decomposition is one of the most fundamental techniques in modern graph algorithms~\cite{KVV04,ST11,SW19}. We work with the stronger notion of flow expansion, as opposed to the traditional cut-based expansion. An undirected graph is a $\phi$-flow expander if every multicommodity demand respecting its vertex degrees can be routed with congestion at most $1/\phi$. A $\phi$-flow-expander decomposition is a set of edges whose deletion leaves only $\phi$-flow-expander components. Flow expansion certifies the routability of all degree-respecting demands, rather than only a lower bound on the capacity of every cut; in particular, a $\phi$-flow expander is also a $\phi$-cut expander.

A simple algorithm either certifies that the graph is a $\phi$-flow expander or finds an approximate sparsest cut and recurses, deleting $O(\phi m\log^2 n)$ edges. The best known bound, $O(\phi m\log n\exp(O(\sqrt{\log\log n})))$, is due to Bansal, Jambulapati, and Saranurak~\cite{BJS26}. We obtain the following.

\begin{theorem}[Flow-expander decomposition]\label{thm:main-expander}
There is a polynomial-time algorithm that, given an unweighted graph $G=(V,E)$ with $n$ vertices and $m$ edges and a rational parameter $0<\phi\le1$, returns $C\subseteq E$ such that every connected component of $G-C$ is a $\phi$-flow expander and
\[
|C|=O(\phi m\log n\log\log n).
\]
\end{theorem}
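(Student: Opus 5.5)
We will prove \cref{thm:main-expander} with the Greedy Seymour Carving scheme from the introduction. The plan has three layers: an LP certificate, a carving step that charges cut edges to LP volume, and an outer loop. The certificate is the standard duality between multicommodity flow and spreading metrics. For a vertex set $S$, the induced graph $G[S]$ (with degrees measured in $G$, or in $G[S]$ plus boundary terms, following the usual convention) is a $\phi$-flow expander iff the concurrent-flow LP routing the degree-product demand $\deg(u)\deg(v)/\mathrm{vol}(S)$ has congestion at most $1/\phi$. By LP duality, failure yields a length function $d:E(S)\to\R_{\ge0}$ with $\sum_e d(e)\le 1$ and
\[
\sum_{u,v\in S}\frac{\deg(u)\deg(v)}{\mathrm{vol}(S)}\,\mathrm{dist}_d(u,v) > \phi^{-1}.
\]
Scaling gives a spreading metric on $S$ of total edge length $\phi\,\mathrm{vol}(S)$, in which degree-weighted balls of small radius contain at most about half the volume. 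Solving this LP in polynomial time either certifies a component or produces such a metric, and the metric is what we carve.

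The carving step treats each piece $S$ that fails the test. We define the LP volume of a ball $B(v,r)$ Seymour-style: the length inside the ball, plus the fractional contributions of boundary edges, plus a seed term $\phi\,\mathrm{vol}(S)/n$ or a degree-based seed $\phi\deg(v)$. We then repeatedly pick the center whose ball of a fixed radius $\rho=\Theta(1)$ (in the normalized metric) has the heaviest volume. Around that center we choose a radius in $[0,\rho]$ by Seymour's region growing, so that the cut costs at most
\[
O\Bigl(\log\frac{\mathrm{Vol}(\text{region})}{\mathrm{Vol}(\text{ball})}\cdot\log\log n\Bigr)
\]
times the ball's LP volume, normalized by $\rho$. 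We remove the ball and continue carving the rest of $S$.

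The spreading property ensures every carved ball has degree-volume at most a constant fraction of $\mathrm{vol}(S)$. Each piece therefore shrinks geometrically in volume, which is what the telescoping needs. We charge each cut edge to the LP volume of the piece it separates. Along any chain of nested pieces $\mathrm{vol}=m_1\ge m_2\ge\cdots$, the logarithmic factors telescope to $O(\log n)$, with the extra $O(\log\log n)$ coming from the standard Seymour/ENRS discretization. So each unit of LP length is charged $O(\log n\log\log n)$ in total, and the total LP length at every recursion level is $\phi$ times that level's volume.

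The main obstacle is avoiding double counting across recursion levels, because the LP is re-solved on each piece. Summing $\phi\,\mathrm{vol}$ over levels naively gives a $\log n$ depth factor. To avoid it, I would not charge fresh LP mass per level. Instead, each piece $S'$ at the next level pays $\phi\,\mathrm{vol}(S')\cdot O(\log(\mathrm{vol}(S)/\mathrm{vol}(S'))\log\log n)$, that is, the telescoping is done in units of $\phi$ times degree-volume rather than LP mass. This is where the greedy heaviest-ball rule matters. It guarantees that a carved ball's LP volume is comparable to $\phi$ times its degree-volume, with the spreading constraint bounding LP mass inside a ball by its degree-volume via the uniform seed. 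A vertex $v$ then accumulates $\phi\deg(v)\sum_i O(\log(m_i/m_{i+1})\log\log n)=O(\phi\deg(v)\log n\log\log n)$ over its chain. Summing over $v$ gives $|C|=O(\phi m\log n\log\log n)$. I expect the delicate part to be this verification that the heaviest-ball greedy order keeps the per-piece cost proportional to the removed piece's degree-volume and not to the remaining set's.
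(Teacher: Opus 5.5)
\textbf{There is a genuine gap, at exactly the step you flag as delicate.} Region growing pays for a cut only with the LP volume of the ball. Your telescoping needs the cut paid with $\phi$ times the degree-volume of the piece. Your bridge between the two is that the heaviest-ball rule makes a ball's LP mass comparable to $\phi$ times its degree-volume, and that small balls carry at most about half the weight. Neither holds for this LP.

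The dual of the flow LP gives only $\sum_e\ell_e\le1$ and the single aggregate constraint $\sum\Pi(u,v)d(u,v)>1/(2\phi)$. There are no per-vertex spreading constraints as in the layout LPs. The length can therefore concentrate on a region of tiny weight. For example, take a path of length $\gg1/\phi$ hanging off a large bounded-degree expander; then the whole expander can sit inside a ball of radius essentially zero.

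The same example shows that your framework cannot be patched locally. You carve every failing instance completely into pieces of at most a constant fraction of its volume. Every such partition of this instance cuts the expander and deletes $\Omega(m)$ edges. That is far above $O(\phi m\log n\log\log n)$ when, say, $\phi=1/\log^3 n$. Separately, selecting the center by volume at a fixed radius and then region growing to a smaller radius does not order the pieces by size; this is the obstruction of \cref{sec:ov-difficulty}.

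\textbf{The missing idea is to control weight and LP volume at the same time, and to delete only one cut per instance.} The paper fixes $A=\deg_G$ and proves the fine-grained flow-cut gap (\cref{lem:size-cut}): a single set $S$ with $0<A(S)\le|A|/2$ and $|\delta(S)|=O(\phi\log\log|A|)\,A(S)\log(|A|/A(S))$.

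If some ball of radius $R=1/(16\phi|A|)$ carries a third of the weight, a threshold cut around it already gives $|\delta(S)|\le8\phi A(S)$ (\cref{lem:heavy-ball}). This is the case your spreading claim misses.

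Otherwise, Greedy Seymour Carving keeps a persistent pointer per vertex into annuli that are stable for both the volume deficit $F$ and the weight deficit $\Gamma$. It always carves around a vertex whose ball at the outer end of its annulus has maximum \emph{weight}, which gives $\Gamma(X_i)\le2\Gamma(X_j)$ for $i<j$. Boundaries are paid with volume, and the inequality $p\log(e/p)/\log(e/q)\le p+q/e$ converts this into $\sum_i|\delta_G(X_i)|/\Gamma(X_i)=O(\phi|A|\log\log|A|)$.

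This is only an aggregate bound, since a single light piece may hold much of the LP mass, so the partition is discarded. Averaging over the weights $A(X_i)$ yields one piece with $|\delta_G(X_i)|=O(\phi\log\log|A|)\,A(X_i)\Gamma(X_i)$. Only its boundary is deleted, and the recursion continues on both sides. Your final charging to vertices, telescoping $\sum_t\log(M_t/A(S_t))\le\log|A|$, then goes through essentially as you wrote, charging only the lighter side of each cut.
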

The theorem can be extended to edge-weighted graphs, with the total weights of $C$ and of $E$ in place of $|C|$ and $m$. The bound is within an $O(\log\log n)$ factor of the $\Omega(\log n)$ graph-theoretic lower bound on the overhead, which already holds for cut-expander decompositions (see~\cite{BJS26}).

The key ingredient, which gives the decomposition by simple recursion, is the following algorithmic \emph{fine-grained flow-cut gap}. Here the volume $\mathrm{vol}(S)$ of a set $S$ is the sum of the degrees of its vertices, and $\delta_G(S)$ is the set of edges with exactly one endpoint in $S$.

\begin{theorem}[Fine-grained flow-cut gap]\label{thm:fine-grained}
There is a polynomial-time algorithm that, given a connected graph $G=(V,E)$ and a parameter $\phi$, either certifies that $G$ is a $\phi$-flow expander, or outputs a set $S$ with $0<\mathrm{vol}(S)\le\mathrm{vol}(V)/2$ and
\[
|\delta_G(S)| \leq O(\phi\log \log n)\,\mathrm{vol}(S)\log \frac{\mathrm{vol}(V)}{\mathrm{vol}(S)}.
\]
\end{theorem}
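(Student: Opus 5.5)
We argue through LP duality and a spreading-metric relaxation. Being a $\phi$-flow expander means that every degree-respecting multicommodity demand can be routed with congestion $1/\phi$. By LP duality (the concurrent-flow dual), $G$ fails to be a $\phi$-flow expander exactly when there are a metric $d$ on $V$ (a length function on edges, with $d$ the induced shortest-path metric) and a degree-respecting demand $D$ such that
\[
\sum_{e}d(e) < \phi\sum_{u,v}D(u,v)\,d(u,v).
\]
We can find the worst demand and the metric by solving the dual LP in polynomial time. Either the dual certifies expansion, or we obtain a length function $d$ with $\sum_e d(e)=1$, normalized so that the degree-weighted "spread" is large: for every vertex $v$, $\sum_u \deg(u)\min(d(u,v),\cdot)$ is bounded in a suitable sense. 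We would use the standard reformulation where, after rescaling, the metric satisfies a spreading constraint. In that form, any set $S$ with $\mathrm{vol}(S)$ large has average radius at least $\Omega(1/(\phi\,\mathrm{vol}(V)))$ times a logarithmic factor, and the total edge length is at most $m$-normalized. Concretely, I would reduce to the following situation: each vertex $v$ gets the fractional volume $\Vol(B(v,r)) = \sum_{u\in B(v,r)}\deg(u) + \sum_{e \text{ inside}}\ell_e\cdot\text{(scale)}$, and heavy balls are necessarily spread.

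The core step is Greedy Seymour Carving. Repeatedly take the center $v$ whose ball of some fixed radius $\Delta$ (a constant fraction of the spreading scale) has the heaviest remaining volume. Grow a ball around $v$ by Seymour's region-growing rule, choosing a radius $r\in[0,\Delta]$ with
\[
|\delta(B(v,r))| \le O\!\left(\tfrac{1}{\Delta}\log\tfrac{\mathrm{vol}(V)}{\Vol(B(v,r))}\cdot\log\log n\right)\Vol(B(v,r)).
\]
The $\log\log n$ comes from Seymour's dyadic-scale argument over volume levels $2^{2^i}$. The spreading constraint guarantees that no ball of radius $\Delta$ contains more than half the volume, so each carved piece $S$ has $\mathrm{vol}(S)\le\mathrm{vol}(V)/2$. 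Plugging $1/\Delta=O(\phi)$ into the rule gives the claimed inequality for this single piece. If some single carved ball already satisfies the bound with positive volume, we output it.

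The main obstacle is that the LP volume $\Vol(B)$ counts fractional edge length as well as vertex degree. The desired bound is stated in terms of the true $\mathrm{vol}(S)$, so we must make sure that the fractional part and the seed volume do not dominate. I would handle this in the standard way. Add a seed volume $\mathrm{vol}(V)/n$-type term, or the degree of the center, so that $\log(\mathrm{vol}(V)/\Vol)$ is at most $O(\log n)$. Then use the fact that the edge-length mass inside a ball of radius $r\le\Delta$ is at most $r\cdot|E(B)|\le\Delta\,\mathrm{vol}(B)$, so with $\Delta$ scaled appropriately $\Vol(B)=\Theta(\mathrm{vol}(B))$. If instead no individual ball works, we argue by averaging. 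The carving exhausts the vertices, and the sum of boundaries is at most $O(\log\log n\cdot\phi)$ times the sum of $\mathrm{vol}\log(\mathrm{vol}(V)/\mathrm{vol})$ over the pieces. This contradicts the dual inequality $\sum_e d(e)<\phi\sum D\,d$, because the demand, being degree-respecting, must pay distance at least $\Delta$ between distinct far pieces. Hence some piece meets the bound, and we return it.
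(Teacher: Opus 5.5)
Two steps fail as stated.

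\emph{First gap: the flow dual does not give the spreading you invoke.} As an aside, the worst degree-respecting demand is not computed by a single LP. The paper tests only the product demand $\Pi(u,v)=A(u)A(v)/|A|$ with $A=\deg_G$, and loses a factor $2$ by \cref{lem:product-certificate}. What the dual gives you is the averaged condition $\sum_{\{u,v\}}\Pi(u,v)d(u,v)>1/(2\phi)$ with $\sum_e\ell_e\le1$. Nothing in it prevents one ball of radius $\Delta$ from containing almost all the weight. For example, if all the length sits on the edges of a single sparse cut, the large side can have diameter zero. So you need a separate case:
\begin{itemize}
\item If some ball $Y$ of radius $R=1/(16\phi|A|)$ has $A(Y)\ge|A|/3$, the averaged condition gives $\sum_vA(v)d(v,Y)>1/(4\phi)$. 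A threshold cut on $d(\cdot,Y)$ then yields $|\delta_G(S)|\le8\phi A(S)$ (\cref{lem:heavy-ball}).
\item Only when all such balls are light can you carve pieces of weight below $|A|/3$.
\end{itemize}

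\emph{Second gap, and the crux: $\Vol(B)=\Theta(\mathrm{vol}(B))$ is false.} Work in units where the derivative of the volume is $|\delta(B)|$. The region-growing bound $O(\log\log n/\Delta)\Vol(B)F$ becomes $O(\phi\log\log n)\,\mathrm{vol}(B)F$ only if the vertex term is at most about $\phi\Delta\,\mathrm{vol}(B)$. But the edges inside the ball may carry length of order $\Delta\,\mathrm{vol}(B)$, a factor $1/\phi$ more, and this ratio does not depend on how you scale $\Delta$. Indeed, if your claim held, the first ball would already satisfy the bound and no carving would be needed.

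So no single piece need be good, and the averaging must treat LP volume and weight as different quantities. The paper does this as follows:
\begin{itemize}
\item It carves all of $V$ and uses that the carving-time volumes sum to at most $2$.
\item It applies $p\log(e/p)/\log(e/q)\le p+q/e$, with $p$ half the volume of $X_i$ and $q=A(X_i)/|A|$. This gives $\sum_i|\delta_G(X_i)|/\Gamma(X_i)=O(\phi|A|\log\log|A|)$, where $\Gamma(X)=\log(e|A|/A(X))$.
\item It then averages with weights $A(X_i)$.
\end{itemize}

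Each boundary is divided by its own $\Gamma(X_i)$. Region growing controls only the residual boundaries $\delta_{G[U_i]}(X_i)$, so passing to $\delta_G(X_i)$ requires $\Gamma(X_i)\le2\Gamma(X_j)$ for $i<j$ (\cref{lem:expander-order}). Your greedy rule does not give this. Keys measured at a fixed radius do decrease, but Seymour's rule may stop anywhere in $[0,\Delta]$, so the weight of the carved piece is unrelated to its key. \Cref{alg:expander} fixes exactly this. It keeps persistent pointers to annuli of $[R/2,R]$ that are stable for both the volume deficit and $\Gamma$, and takes as key the weight at the outer end of the current annulus.

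Finally, the averaging does not contradict the dual inequality. It uses only $\sum_e\ell_e\le1$; the spreading serves only to make balls light.
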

This is the case $A=\deg_G$ of \cref{lem:size-cut}, which holds more generally for every node-weighting with all weights at least one. The classical flow-cut gap of Leighton and Rao~\cite{LR99} gives the same statement with the right-hand side replaced by $O(\phi\log n)\,\mathrm{vol}(S)$, and the work of~\cite{BJS26} implicitly gives a similar bound with $O(\log\log n)$ replaced by $\exp(O(\sqrt{\log\log n}))$. We conjecture that the $O(\log\log n)$ factor can also be removed from this fine-grained flow-cut gap, and leave this as an interesting open question.

\section{Technical Overview}\label{sec:overview}
We use \textsc{Cutwidth} to develop the main rounding idea of Greedy Region Growing: a greedy carving order that controls the cumulative boundary of every prefix. We first present the local version, Greedy Seymour Carving, and then Global Carving, which removes the remaining $O(\log\log n)$ factor. In \cref{sec:ov-other}, we briefly describe the additional ingredients needed for \textsc{Pathwidth} and \textsc{Treedepth}, directed layouts, and flow-expander decomposition.

\subsection{The LP and the recursive framework}\label{sec:ov-framework}
We begin with the LP relaxation for \textsc{Cutwidth} used in~\cite{BKS24}.\footnote{In fact, this is the standard relaxation for \textsc{Minimum Linear Arrangement}.} Recall that the goal is to find an ordering of the vertices that minimizes the maximum number of edges leaving a prefix of the ordering.
\begin{equation}\label{eq:cw-lp-overview}
\begin{aligned}
\LP(G)=\min\quad &\sum_{uv\in E(G)}d(u,v)\\
\text{s.t.}\quad &d(u,w)\le d(u,v)+d(v,w) &&u,v,w\in V(G),\\
&\sum_{v\in S}d(u,v)\ge\frac{|S|^2-1}{4} &&u\in V(G),\ S\subseteq V(G),\\
&d(u,v)=d(v,u) &&u,v\in V(G),\\
&d(u,u)=0,\quad d(u,v)\ge0 &&u,v\in V(G).
\end{aligned}
\end{equation}
For every induced subgraph $H$ of $G$, the quantity $\LP(H)/|V(H)|$ is a lower bound on the cutwidth of $G$. Indeed, let $d$ be the metric induced by an optimal cutwidth ordering of $H$, so that the vertices at positions $i$ and $j$ are at distance $|i-j|$. This metric is feasible, and its objective is the sum of the prefix cuts, which is at most $|V(H)|$ times the cutwidth of $H$, and hence at most $|V(H)|\cw(G)$.

Our algorithm is recursive. Suppose we find an ordered partition of $V(G)$ into pieces $X_1, X_2, \ldots, X_k$. This fixes the order between pieces: vertices of the first piece come first, then those of the second, and so on. We then recursively order each induced subgraph $G[X_1], G[X_2], \ldots, G[X_k]$.

\begin{figure}[t]
\centering
\begin{tikzpicture}[x=1cm,y=1cm,>=Latex,
  part/.style={draw=black!65,rounded corners=1pt,minimum width=1.72cm,minimum height=0.52cm,align=center,font=\footnotesize},
  before/.style={part,fill=black!12},
  path/.style={part,draw=blue!70!black,fill=blue!10,thick},
  after/.style={part,fill=white},
  edge/.style={draw=black!50,thin},
  chosen/.style={draw=blue!70!black,thick}]
  \node[path] (hzero) at (4.1,5.35) {$H_0=G$};
  \node[before,minimum width=2.25cm] (aone) at (1.15,4.18) {$X_1,\ldots,X_{i-1}$};
  \node[path,minimum width=2.0cm] (hone) at (4.1,4.18) {$H_1=G[X_i]$};
  \node[after,minimum width=2.25cm] (bone) at (7.05,4.18) {$X_{i+1},\ldots,X_k$};
  \draw[edge] (hzero.south) -- (aone.north);
  \draw[chosen] (hzero.south) -- (hone.north);
  \draw[edge] (hzero.south) -- (bone.north);
  \node[before,minimum width=2.25cm] (atwo) at (1.15,2.92) {earlier pieces};
  \node[path,minimum width=2.0cm] (htwo) at (4.1,2.92) {$H_2$};
  \node[after,minimum width=2.25cm] (btwo) at (7.05,2.92) {later pieces};
  \draw[edge] (hone.south) -- (atwo.north);
  \draw[chosen] (hone.south) -- (htwo.north);
  \draw[edge] (hone.south) -- (btwo.north);
  \node[font=\large] at (4.1,2.18) {$\vdots$};
  \node[path,minimum width=2.0cm] (hlast) at (4.1,1.48) {$H_T$};
  \draw[chosen,densely dotted] (htwo.south) -- (hlast.north);
  \node[before,minimum width=2.25cm] (alast) at (1.15,0.30) {earlier vertices};
  \node[path,minimum width=0.8cm] (v) at (4.1,0.30) {$v$};
  \node[after,minimum width=2.25cm] (blast) at (7.05,0.30) {later vertices};
  \draw[edge] (hlast.south) -- (alast.north);
  \draw[chosen] (hlast.south) -- (v.north);
  \draw[edge] (hlast.south) -- (blast.north);
  \draw[densely dashed,red!70!black,thick] (4.52,-0.09) -- (4.52,0.69);
  \node[font=\footnotesize,red!70!black,anchor=north] at (4.52,-0.16) {prefix $P$ ends at $v$};
  \node[font=\footnotesize,anchor=west] at (8.45,5.35) {contribution of each level};
  \node[font=\footnotesize,anchor=west] at (8.45,4.18) {$c_0=\alpha\cw(G)\log\dfrac{2e\,\LP(G)}{\LP(X_i)}$};
  \node[font=\footnotesize,anchor=west] at (8.45,2.92) {$c_1=\alpha\cw(G)\log\dfrac{2e\,\LP(H_1)}{\LP(H_2)}$};
  \node[font=\large,anchor=west] at (8.65,2.18) {$\vdots$};
  \node[font=\footnotesize,anchor=west] at (8.45,0.30) {$c_T=\alpha\cw(G)\log(2e|V(H_T)|)$};
  \node[font=\footnotesize,anchor=west] at (8.45,-0.78) {$|\delta_G(P)|\le\displaystyle\sum_{t=0}^T c_t=O(\alpha\log n)\cw(G)$};
\end{tikzpicture}
\caption{The cost of a prefix $P$ ending at a vertex $v\in X_i$. At each level, the shaded earlier pieces belong to $P$, the unshaded later pieces do not, and the blue piece contains $v$. At the top level, these are the pieces $X_1,\ldots,X_k$ of $G$. The cut receives one contribution $c_t$ from each level $t$; the weak local partition bound \eqref{eq:wlp-overview} makes their sum $O(\alpha\log n)\cw(G)$.}
\label{fig:recursive-overview}
\end{figure}

The guarantee we want from the partition is the following. For a set $X\subseteq V(G)$, write $\LP(X)$ for $\LP(G[X])$, and $\delta_G(X)$ for the set of edges with exactly one endpoint in $X$. Every piece should have at most $n/2$ vertices, and for every $i\le k$,
\begin{equation}\label{eq:wlp-overview}
\sum_{j\le i}|\delta_G(X_j)|
\;\le\;\alpha\,\frac{\LP(G)}{n}\,\log\frac{2e\,\LP(G)}{\LP(X_i)}.
\end{equation}
The left-hand side upper bounds the number of edges leaving the prefix $X_1\cup\cdots\cup X_i$.\footnote{Since $\LP(X_i)$ may be zero, the actual definition (\cref{def:wlp}) uses instead the \emph{volume} of $X_i$: the total LP length of the edges inside $X_i$, plus $\frac{\LP(G)}{n}$ per vertex. It is at least $\LP(X_i)$, so the actual condition implies \eqref{eq:wlp-overview}. The factor $2e$ only keeps the logarithm at least one.} Following the terminology of~\cite{BKS24}, we call such an ordered partition an \emph{$\alpha$-weak local partition}.\footnote{Our ordered definition differs slightly from that of~\cite{BKS24}, but our construction also satisfies theirs (\cref{app:bks-wlp}).}

Why does this suffice? Consider a prefix $P$ of the final ordering that ends at a vertex $v$ of the piece $X_i$, as illustrated in \cref{fig:recursive-overview}. Intuitively, the current level contributes about $\alpha\cw(G)\log\frac{\LP(G)}{\LP(X_i)}$ edges to its cut, while the recursion inside $X_i$ contributes about $\alpha\cw(G)\log\LP(X_i)$; the two logarithms add up to $\log\LP(G)=O(\log n)$. Concretely, let $G=H_0\supseteq H_1\supseteq\cdots\supseteq H_T$ be the recursive instances containing $v$: $H_1=G[X_i]$, and in general $H_{t+1}$ is induced by the piece of $H_t$ that contains $v$. Since $\LP(H_t)/|V(H_t)|\le\cw(G)$, applying \eqref{eq:wlp-overview} at each level shows that $P$ is crossed by at most
\[
\alpha\,\cw(G)\left(\sum_{t<T}\log\frac{2e\,\LP(H_t)}{\LP(H_{t+1})}+\log(2e|V(H_T)|)\right)
\;=\;O(\alpha\log n)\cw(G)
\]
edges, since the depth $T$ is $O(\log n)$ and the LP values lie between $\Omega(1)$ and $\mathrm{poly}(n)$. For the rest of the overview, we therefore focus on constructing $\alpha$-weak local partitions: first with $\alpha=O(\log\log n)$, using Greedy Seymour Carving (\cref{sec:ov-region,sec:ov-difficulty,sec:ov-greedy,sec:ov-payoff}), and then with $\alpha=O(1)$, using Global Carving (\cref{sec:ov-paced}).

\subsection{Balls, volumes and region growing}\label{sec:ov-region}
We first recall region growing, and give an alternative proof, different from the one in the classical literature, of Seymour's fine-grained version. This argument is the basic building block of Greedy Seymour Carving.

Fix an optimal solution $d$ of \eqref{eq:cw-lp-overview}. For a vertex $v$ and a radius $r\ge0$, the \emph{ball} $B_d(v,r)=\{u:d(v,u)<r\}$ is the set of vertices at distance less than $r$ from $v$. For every vertex $v$, the ball of radius $r=\frac{n}{10}$ contains at most $\frac n2$ vertices. Otherwise, it can be easily shown that the set $S=B_d(v,r)$ violates the spreading constraint of \eqref{eq:cw-lp-overview} with center $u=v$.

The \emph{volume} $\Vol_v(r)$ of the ball $B_d(v,r)$ is the total LP length of the edges inside it, plus $\frac{\LP(G)}{n}$ for every vertex in the ball. An edge on the boundary of the ball contributes the part of its length that lies inside. For example, if an edge $wz$ has $d(v,w)=0.2$ and $d(v,z)=0.3$, and $r=0.21$, then it contributes $0.21-0.2=0.01$. With this definition, the volume grows continuously as long as no new vertex enters the ball, and its derivative is exactly the cut-size of $B_d(v,r)$. When a new vertex enters the ball, the volume \emph{jumps} up: the term of the vertex is added, and its edges to vertices already in the ball now count with their full length. The term $\frac{\LP(G)}{n}$ per vertex keeps volumes positive: every ball has volume between $\frac{\LP(G)}{n}$ and $2\LP(G)$. We call
\[
F_v(r)=\log\frac{2e\,\LP(G)}{\Vol_v(r)}
\]
the \emph{deficit} of the ball. It satisfies $1\le F_v(r)\le\log(2en)$, and it is nonincreasing in $r$.

\paragraph{Classical region growing.} Given radii $a<b$, there is a radius $r\in[a,b]$ such that
\[
|\delta_G(B_d(v,r))| \leq O\left(\frac{\Vol_v(r)}{b-a} \log \frac{2\Vol_v(b)}{\Vol_v(a)}\right).
\]
To see this, divide $[a,b]$ into $k=\lceil\log_2(2\Vol_v(b)/\Vol_v(a))\rceil$ intervals of equal length. The volume cannot double across every one of them, since otherwise $\Vol_v(b)\geq 2^k\Vol_v(a)\geq 2\Vol_v(b)$. So some interval $[a',b']$ is \emph{stable}, meaning $\Vol_v(b') \leq 2\Vol_v(a')$. The average increase of $\Vol_v$ over this interval is at most $\frac{\Vol_v(a')}{b'-a'}$, so at some point the derivative is at most this quantity, since the jumps only increase $\Vol_v$. This gives a radius $r \in [a',b']$ with
\[
|\delta_G(B_d(v,r))| \leq \frac{\Vol_v(a')}{b'-a'} \leq \frac{k\,\Vol_v(r)}{b-a} = O\left(\frac{\Vol_v(r)}{b-a}\log\frac{2\Vol_v(b)}{\Vol_v(a)}\right).
\]

\paragraph{The fine-grained version.} The logarithm above involves $\Vol_v(a)$, the volume at the \emph{inner} radius, which may be as small as $\LP(G)/n$. So it can be $\Theta(\log n)$ even if the ball we eventually cut out is huge. We now show that there is a radius $r\in[a,b]$ satisfying
\[
|\delta_G(B_d(v,r))| \leq O\left(\frac{\log\log n}{b-a}\,\Vol_v(r)\,F_v(r)\right).
\]
The proof is the same argument applied one level up, to the deficit $F_v$ instead of the volume $\Vol_v$. Divide $[a,b]$ into $\kappa=\lceil\log_2\log(2en)\rceil+1$ intervals of equal length. Moving inward, the deficit cannot double across every one of them, since otherwise $F_v(a)>2^\kappa F_v(b)\ge2^\kappa>\log(2en)$. So some interval $[a'',b'']$ is stable for the deficit, meaning $F_v(a'')\le 2F_v(b'')$.\footnote{A related stability condition for logarithmic ball-size deficits appears in~\cite[Section~3.2]{BKS24}, in their multiscale decomposition.} Applying the classical bound inside $[a'',b'']$, whose length is $\Theta\big(\frac{b-a}{\log\log n}\big)$, gives a radius $r\in[a'',b'']$ with
\[
|\delta_G(B_d(v,r))| \leq O\left(\frac{\log\log n}{b-a}\,\Vol_v(r)\log\frac{2\Vol_v(b'')}{\Vol_v(a'')}\right).
\]
The point is that this logarithm is now small:
\[
\log\frac{\Vol_v(b'')}{\Vol_v(a'')} = F_v(a'')-F_v(b'') \le F_v(b'') \le F_v(r),
\]
by stability and then by monotonicity of $F_v$, since $r\le b''$. As $F_v(r)\geq 1$, the additive $\log 2$ is absorbed, which proves the claim. The logarithm is now controlled by the volume of the ball we actually cut out rather than by the volume at the inner radius; the price is the factor $O(\log\log n)$ from working in a shorter interval.

Taking $[a,b]=[\frac{n}{20},\frac{n}{10}]$, so that every ball we consider has at most $\frac n2$ vertices, we obtain the following Seymour-type guarantee (\cref{lem:one-step}): around any center $v$, we can find a ball $X=B_d(v,r)$ of volume $W=\Vol_v(r)$ with
\begin{equation}\label{eq:seymour-overview}
|\delta_G(X)| \leq O\Big(\log\log n\cdot\frac{W}{n}\,\log\frac{2e\,\LP(G)}{W}\Big).
\end{equation}
Since $W$ contains the length of every edge inside $X$, and restricting $d$ to $G[X]$ gives a feasible solution, we have $W\ge\LP(X)$. So the logarithm in \eqref{eq:seymour-overview} is at most $\log\frac{2e\,\LP(G)}{\LP(X)}$, the quantity in \eqref{eq:wlp-overview}.

\subsection{Why one Seymour step is not enough}\label{sec:ov-difficulty}
Our first goal is an $O(\log\log n)$-weak local partition: an ordered partition $X_1,\ldots,X_k$ of $V(G)$ into pieces of at most $n/2$ vertices such that, for every $i\le k$,
\[
\sum_{j\le i}|\delta_G(X_j)|
\;\le\;O(\log\log n)\,\frac{\LP(G)}{n}\,\log\frac{2e\,\LP(G)}{\LP(X_i)}.
\]
The most direct approach is to apply \eqref{eq:seymour-overview} repeatedly: cut out one ball, remove it, apply \eqref{eq:seymour-overview} again in the remaining graph, and output the pieces in the order in which they were removed. Each piece then pays for its own boundary with its own deficit. The problem is not obtaining one such piece; it is controlling the total cut-size of a prefix, since every piece $X_j$ with $j\le i$ contributes to the prefix ending at $X_i$.

Let $W_i$ be the ball volume of $X_i$ when it is carved, measured in the graph from which it is removed, and let $F_i=\log\frac{2e\,\LP(G)}{W_i}$ be its deficit. Two simple facts are useful. First, charging each edge when the first of its endpoints is removed shows that the volumes of all pieces sum to at most $2\LP(G)$: at most $\LP(G)$ from the vertex terms and at most $\LP(G)$ from the edges. Second, each edge between two pieces is cut, in the remaining graph, when the earlier of the two pieces is removed, and it appears at most twice on the left-hand side of \eqref{eq:wlp-overview}. Using \eqref{eq:seymour-overview}, we obtain the naive bound
\[
\begin{aligned}
\sum_{j\le i}|\delta_G(X_j)|
&\le O\Big(\frac{\log\log n}{n}\Big)\sum_{j\le i}W_j\,F_j\\
&\le O\Big(\frac{\log\log n}{n}\Big)\max_{j\le i}F_j\sum_{j\le i}W_j
= O\Big(\log\log n\cdot\frac{\LP(G)}{n}\Big)\max_{j\le i}F_j.
\end{aligned}
\]
If the deficits $F_i$ were roughly equal, we would be done. This is precisely where the approach breaks down: in the worst case, we may first remove many pieces of small volume, and then a final piece of very large volume. Then $F_k \ll F_j$ for all $j < k$, and the bound above is far from the bound $O(\log\log n\cdot\frac{\LP(G)}{n}F_k)$ that we need for the prefix ending at $X_k$.

This suggests extracting the pieces greedily: pieces of large volume should come first, and pieces of small volume later. A simple algorithm would pick the largest ball first. However, the sequence of cuts determines the graph in which we look for the next piece. Every removal shrinks the balls around the remaining centers, and the radius chosen by the argument of \cref{sec:ov-region} may change completely. So there is no fixed list of pieces that we could sort in advance.

\subsection{Greedy Seymour Carving: stable annuli}\label{sec:ov-greedy}
The observation that rescues the greedy approach is that we never need to know exactly which piece a center will cut. It suffices to know its deficit up to a factor of two, and a stable interval provides exactly this. From now on, volumes and deficits are always measured in the graph induced by the vertices that have not yet been removed. We divide the range of radii $[\frac{n}{20},\frac{n}{10}]$ into $\Theta(\log\log n)$ equal annuli. If an annulus $[a,b]$ is stable for $v$, that is, $F_v(a)\le 2F_v(b)$, then, since the deficit is nonincreasing in the radius, every $r\in[a,b]$ satisfies
\[
F_v(b)\;\le\;F_v(r)\;\le\;2F_v(b).
\]
So the volume at the outer radius $b$, which we call the \emph{key} of $v$, predicts the deficit of the piece that $v$ would cut up to a factor of two, whatever radius region growing picks inside the annulus. By \cref{sec:ov-region}, that radius cuts at most $O\big(\frac{\log\log n}{n}\Vol_v(r)F_v(r)\big)$ edges.

This suggests the following algorithm. Every vertex keeps a pointer to a current annulus, initially the outermost one. While vertices remain:
\begin{enumerate}[label=(\arabic*),leftmargin=*]
\item move the pointer of every remaining vertex inward until its annulus is stable;
\item pick the vertex with the largest key;
\item grow a region inside its annulus and remove the resulting ball. All other vertices keep their pointers.
\end{enumerate}
Two properties make this work.

\emph{Keys never increase.} Removing vertices only removes volume, and moving a pointer inward moves the outer radius inward. So every key can only decrease, and since we always select the maximum, the selected keys are nonincreasing. This is why the pointers must persist: if we reset them to the outermost annulus after each cut, as a naive repetition of the one-step argument would, keys could jump back up.

\emph{Pointers never run out.} The key fact is that removing vertices only shrinks balls, so it can only \emph{increase} deficits: once a deficit is large, it stays large. A pointer leaves the annulus $[a,b]$ only when $F_v(a)>2F_v(b)$. Then $a$ becomes the new outer radius, whose deficit is already more than $2F_v(b)$, and by the key fact it never drops below this value again. So the deficits that trigger successive moves more than double each time. Since deficits lie between $1$ and $\log(2en)$, each pointer moves only $O(\log\log n)$ times, so the annuli fixed at the start suffice for the entire carving.

\subsection{Putting it together}\label{sec:ov-payoff}
\begin{figure}[htbp]
\centering
\begin{minipage}[c]{0.49\linewidth}
\centering
\begin{tikzpicture}[x=0.9cm,y=0.9cm,>=Latex]
  \begin{scope}
    \clip (0,0) circle (2.45);
    \fill[blue!19] (-2.10,-0.60) circle (1.45);
    \fill[blue!15] (0,2.20) circle (1.05);
    \fill[blue!12] (2.15,0.55) circle (0.76);
    \fill[blue!9] (1.10,-2.00) circle (0.65);
    \draw[blue!65!black,thick] (-2.10,-0.60) circle (1.45);
    \draw[blue!65!black,thick] (0,2.20) circle (1.05);
    \draw[blue!65!black,thick] (2.15,0.55) circle (0.76);
    \draw[blue!65!black,thick] (1.10,-2.00) circle (0.65);
  \end{scope}
  \draw[black!70,thick] (0,0) circle (2.45);
  \node[font=\small] at (-1.72,-0.55) {$X_1$};
  \node[font=\small] at (0,1.93) {$X_2$};
  \node[font=\small] at (1.90,0.55) {$X_3$};
  \node[font=\small] at (1.05,-1.95) {$X_4$};
  \node[font=\small] at (0,-0.25) {$G$};
  \node[font=\footnotesize] at (0,-3.02) {carving order: $X_1\;\longrightarrow\;X_2\;\longrightarrow\;X_3\;\longrightarrow\;X_4$};
\end{tikzpicture}
\end{minipage}\hfill
\begin{minipage}[c]{0.48\linewidth}
\caption{Greedy carving of $G$. The selected keys are nonincreasing, so the pieces appear at roughly decreasing volume scales: $F_i\le2F_j$ for $i<j$. The sizes shown are schematic.}
\label{fig:greedy-volumes}
\end{minipage}
\end{figure}
\Needspace{7\baselineskip}
Let $X_1,X_2,\ldots$ be the pieces in the order of removal, and let $F_i$ be the deficit of $X_i$ when it is cut, as in \cref{sec:ov-difficulty}. By stability, $F_i$ is within a factor of two of the deficit of the $i$-th selected key, and these key deficits are nondecreasing. Hence
\[
F_i\le 2F_j\qquad\text{whenever } i<j.
\]
Up to a factor of two, the deficits of the pieces are nondecreasing: roughly, pieces of large volume come out first, which is exactly what the greedy approach wanted. The calculation of \cref{sec:ov-difficulty} now goes through, with the largest deficit replaced by $2F_i$:
\[
\sum_{j\le i}|\delta_G(X_j)|
\le O\Big(\frac{\log\log n}{n}\Big)\sum_{j\le i}W_j\,F_j
\le O\Big(\frac{\log\log n}{n}\Big)F_i\sum_{j\le i}W_j
= O\Big(\log\log n\cdot\frac{\LP(G)}{n}\,F_i\Big).
\]
Since $W_i\ge\LP(X_i)$, we have $F_i\le\log\frac{2e\,\LP(G)}{\LP(X_i)}$. So the carving order is an $O(\log\log n)$-weak local partition, and together with the recursion of \cref{sec:ov-framework}, this gives an $O(\log n\log\log n)$-approximation for \textsc{Cutwidth}.

\subsection{Removing the \texorpdfstring{$O(\log\log n)$}{O(log log n)} factor: Global Carving}\label{sec:ov-paced}
We now show how to remove the remaining factor $O(\log\log n)$ and construct an $O(1)$-weak local partition. We first explain why Greedy Seymour Carving cannot avoid this factor, and why a weak local partition does not actually require it. We then give the algorithm and show that it constructs an $O(1)$-weak local partition.

\paragraph{Where the factor $\log\log n$ comes from.}
Greedy Seymour Carving pays for every piece \emph{by itself}: the boundary of $X_i$ is charged to its own volume and its own deficit, as in \eqref{eq:seymour-overview}, and the prefix bound of \cref{sec:ov-payoff} is obtained by adding up these charges. The factor $\log\log n$ is the price of this local charging. To obtain \eqref{eq:seymour-overview} for a single ball, we must run region growing in an interval that is stable for the deficit. We do not know in advance where such an interval lies, so we split the range of radii into $\Theta(\log\log n)$ annuli, and region growing in an annulus of width $\Theta(n/\log\log n)$ loses this factor.

This loss is inherent in charging every piece to itself. Suppose that, as the radius decreases from $\frac n{10}$ to $\frac n{20}$, the deficit of a ball grows geometrically from a constant to $\Theta(\log n)$, say $F_v(r)=F_v(\frac n{10})\,D^{(n/10-r)/(n/20)}$ with $D=\Theta(\log n)$. Since $\Vol_v(r)=2e\,\LP(G)\,e^{-F_v(r)}$ and the derivative of $\Vol_v$ is the cut-size, we get, at every radius $r$ in this range,
\[
|\delta_G(B_d(v,r))|=-\Vol_v(r)\,F_v'(r)=\frac{20\log D}{n}\,\Vol_v(r)\,F_v(r)=\Theta\Big(\frac{\log\log n}{n}\Big)\Vol_v(r)\,F_v(r).
\]
So no radius cuts this ball at cost $O(\frac1n)\Vol_v(r)F_v(r)$. Such a ball is consistent with all the properties of volumes that we use, so an analysis based on these properties, in which every piece pays for its own boundary, must lose the factor $\log\log n$.

\paragraph{A global view.}
Let us look again at what a weak local partition asks for. As in \cref{sec:ov-payoff}, the deficit $F_i$ of a piece is at most $\log\frac{2e\,\LP(G)}{\LP(X_i)}$, so it suffices to show
\[
\sum_{j\le i}|\delta_G(X_j)|\;\le\;\alpha\,\frac{\LP(G)}{n}\,F_i\qquad\text{for every }i.
\]
The right-hand side depends only on the deficit $F_i$ of the \emph{current} piece. It is a budget for the whole prefix, and it does not ask $X_i$, or any earlier piece, to pay for its own boundary. Greedy Seymour Carving uses this budget in a particular way: it pays for every piece with its volume times its deficit, and then uses the fact that the volumes add up to $O(\LP(G))$. But the budget can also be used differently. Suppose that the deficits of the pieces were nondecreasing, $F_1\le F_2\le\cdots$. Then it would suffice that every cut is paid for by the \emph{increase} of the deficit from one piece to the next,
\[
|\delta_G(X_j)|\le O\Big(\frac{\LP(G)}{n}\Big)(F_{j+1}-F_j),
\]
since these increases add up to at most $F_i$; if, in addition, every cut costs at most $O(\frac{\LP(G)}{n})$, the cut of the current piece $X_i$ is absorbed as well, since $F_i\ge1$. Under this charging, a piece pays nothing for having a large deficit; it pays only for the growth of the deficit that it causes. The bound is then global: no single cut is compared with its own piece, only the carving as a whole is compared with the budget.

\paragraph{Why growing deficits pay for cuts.}
Why should the deficit grow when the cuts are expensive? For a single ball, the two are the same thing. Since $F_v=\log\frac{2e\,\LP(G)}{\Vol_v}$ and the derivative of $\Vol_v$ is the cut-size, away from the jumps of $\Vol_v$ we have
\[
|\delta_G(B_d(v,r))|=\Vol_v(r)\cdot\bigl(-F_v'(r)\bigr).
\]
So a ball has a large boundary exactly at the radii where its deficit grows quickly as the radius shrinks. This is what happens in the example above: the boundary is large at every radius, but the deficit also grows from a constant to $\Theta(\log n)$ across the range of radii. Averaging over a window makes this quantitative. Fix a window $(\rho-w,\rho)$, and let
\[
\Delta=F_v(\rho-w)-F_v(\rho),\qquad\text{so that}\qquad\Vol_v(\rho-w)=e^{-\Delta}\,\Vol_v(\rho),
\]
be the growth of the deficit across the window. Since the jumps of $\Vol_v$ only increase it, some radius $r$ in the window has cut-size at most the average increase of $\Vol_v$:
\begin{equation}\label{eq:ov-window}
|\delta_G(B_d(v,r))|\le\frac{\Vol_v(\rho)-\Vol_v(\rho-w)}{w}=\frac{\Vol_v(\rho)}{w}\bigl(1-e^{-\Delta}\bigr)\le\frac{\Vol_v(\rho)}{w}\,\min\{1,\Delta\}.
\end{equation}
Thus cutting at the best radius in the window costs at most $\Vol_v(\rho)/w$ per unit of deficit growth across the window, and no logarithm appears. Two requirements turn this into the charging scheme above.
\begin{enumerate}[label=(\alph*),leftmargin=*]
\item \emph{The window must be wide enough.} The cost per unit of growth must be $O(\frac{\LP(G)}{n})$, so the window must have width $w=\Omega\big(\frac{n}{\LP(G)}\Vol_v(\rho)\big)$. Heavy balls need wide windows, and light balls only thin ones.
\item \emph{The deficits of later pieces must increase by $\Delta$.} Every later piece should have deficit at least $F_v(\rho)+\Delta=F_v(\rho-w)$, that is, volume at most $\Vol_v(\rho-w)$. Otherwise the growth $\Delta$ does not show up in the budget.
\end{enumerate}

\paragraph{Global Carving.}
These two requirements lead directly to the algorithm. All remaining centers share a single radius $\rho$, which starts at $\frac n{10}$ and only decreases; as before, volumes are measured in the graph induced by the remaining vertices. \emph{Global Carving} repeats the following step while vertices remain.
\begin{enumerate}[label=(\arabic*),leftmargin=*]
\item Let $K$ be the largest volume of a ball of radius $\rho$, and set $w=\frac{n}{160}\cdot\frac{K}{\LP(G)}$.
\item Let $v$ be a vertex whose ball of radius $\rho-w$ has the largest volume.
\item Cut out the ball around $v$ at the radius in the window $(\rho-w,\rho)$ with the smallest cut-size, remove it, and set $\rho\leftarrow\rho-w$.
\end{enumerate}
Step~(1) chooses the width of the window as requirement~(a) asks: $w$ is proportional to the volume $K$ of the heaviest ball, and up to a constant it is the radius $\frac n{10}$ times the fraction of the total volume that sits in this ball. Steps~(2) and~(3) are designed for requirement~(b): $v$ has the heaviest ball of the inner radius $\rho-w$, and every later cut is made at a radius below $\rho-w$. The radius in Step~(3) is found by plain averaging, as in \eqref{eq:ov-window}; there is no Seymour step and no stability test.

We now show that the pieces, in the order in which they are carved, form an $O(1)$-weak local partition. We first prove the prefix bound, in three steps, and then check that the radius does not run out.

\paragraph{The level.}
The charging scheme above needs nondecreasing deficits. Instead of the deficits of the pieces, we use a quantity that is defined at the start of every step. Number the steps $j=1,2,\ldots$, write $K_j$, $w_j$ and $v_j$ for the quantities $K$, $w$ and $v$ of step $j$, write $\Delta_j$ for the growth of the deficit of $v_j$ across its window, as in \eqref{eq:ov-window}, and let $X_j$ be the piece carved in step $j$. The \emph{level} at the start of step $j$ is
\[
L_j=\log\frac{2e\,\LP(G)}{K_j},
\]
the deficit of the heaviest ball of the current radius. Since all volumes are at most $2\LP(G)$, we have $L_1\ge1$. The level has the two properties that the charging scheme needs.
\begin{itemize}[leftmargin=*]
\item \emph{It never decreases.} Between two steps, vertices are removed and the radius decreases. Both can only shrink balls, so $K_{j+1}\le K_j$, that is, $L_{j+1}\ge L_j$.
\item \emph{It is at most the deficit of the current piece.} The piece $X_j$ is a ball of radius less than the current radius, so its volume is at most $K_j$, and its deficit satisfies $F_j\ge L_j$.
\end{itemize}

\paragraph{The cost of a step.}
Fix a step $j$, and let $\rho$ be the radius at its start. Since $\Vol_{v_j}(\rho)\le K_j$, the window bound \eqref{eq:ov-window} shows that step $j$ cuts at most
\[
\frac{K_j}{w_j}\min\{1,\Delta_j\}=160\,\frac{\LP(G)}{n}\min\{1,\Delta_j\}
\]
edges of the remaining graph. This is where requirement~(a) is used: since $w_j$ is proportional to $K_j$, the ratio $K_j/w_j$ is the same in every step. We now show that the level rises by at least $\Delta_j$ in this step; this is where requirement~(b) is used. Step $j+1$ starts at radius $\rho-w_j$, and before $X_j$ was removed, $v_j$ had the heaviest ball of this radius. Removing $X_j$ only shrinks balls, so
\[
K_{j+1}\le\Vol_{v_j}(\rho-w_j)=e^{-\Delta_j}\,\Vol_{v_j}(\rho)\le e^{-\Delta_j}K_j,
\]
that is, $L_{j+1}\ge L_j+\Delta_j$ (see \cref{fig:paced-step}). Combining the two bounds, step $j$ cuts at most $160\frac{\LP(G)}{n}\min\{1,L_{j+1}-L_j\}$ edges of the remaining graph.

\begin{figure}[t]
\centering
\begin{tikzpicture}[x=1cm,y=1cm,>=Latex]
  \draw[->] (0,0) -- (8.3,0) node[below left,font=\footnotesize] {radius $r$};
  \draw[->] (0,0) -- (0,4.3) node[above,font=\footnotesize] {$\Vol_v(r)$};
  \fill[blue!8] (4.2,0) rectangle (6.8,4.0);
  \draw[thick,blue!65!black,domain=0.3:7.9,samples=80,smooth] plot (\x,{0.35+0.045*exp(0.55*\x)});
  \coordinate (A) at (4.2,{0.35+0.045*exp(0.55*4.2)});
  \coordinate (B) at (6.8,{0.35+0.045*exp(0.55*6.8)});
  \draw[densely dashed,red!70!black,thick] (A) -- (B);
  \fill (A) circle (1.6pt); \fill (B) circle (1.6pt);
  \draw[densely dotted] (4.2,0) -- (4.2,4.0);
  \draw[densely dotted] (6.8,0) -- (6.8,4.0);
  \node[below,font=\footnotesize] at (4.2,0) {$\rho-w$};
  \node[below,font=\footnotesize] at (6.8,0) {$\rho$};
  \draw[densely dashed,black!60] (0,3.75) -- (8.0,3.75) node[right,font=\footnotesize] {$K$};
  \draw[<->] (4.2,-0.62) -- (6.8,-0.62) node[midway,below,font=\footnotesize] {window of width $w\propto K$};
  \node[above left=2pt,font=\footnotesize] at (A) {$e^{-\Delta}\Vol_v(\rho)$};
  \node[above left,font=\footnotesize] at (B) {$\Vol_v(\rho)$};
  \node[font=\footnotesize,red!70!black,align=left,anchor=west] at (7.0,1.2) {average slope\\$\le K(1-e^{-\Delta})/w$};
\end{tikzpicture}
\caption{One step of Global Carving. Some radius in the window has cut-size at most the average slope of $\Vol_v$ across it. The new heaviest volume is at most $\Vol_v(\rho-w)=e^{-\Delta}\Vol_v(\rho)\le e^{-\Delta}K$, so the level rises by at least $\Delta$.}
\label{fig:paced-step}
\end{figure}
\paragraph{The prefix bound.}
Now fix $i$, and consider the prefix $X_1,\ldots,X_i$. As in \cref{sec:ov-difficulty}, every edge between two pieces is cut, in the remaining graph, when the earlier of the two pieces is removed, and it is counted at most twice in $\sum_{j\le i}|\delta_G(X_j)|$. The steps $j<i$ cut at most $160\frac{\LP(G)}{n}(L_{j+1}-L_j)$ edges each, and these bounds add up to $160\frac{\LP(G)}{n}(L_i-L_1)$. Step $i$ itself cuts at most $160\frac{\LP(G)}{n}\le160\frac{\LP(G)}{n}L_1$ edges. Hence
\[
\sum_{j\le i}|\delta_G(X_j)|\le2\cdot160\,\frac{\LP(G)}{n}\bigl((L_i-L_1)+L_1\bigr)=320\,\frac{\LP(G)}{n}\,L_i\le320\,\frac{\LP(G)}{n}\,F_i,
\]
where the last step uses that the level is at most the deficit of the current piece. This is the weak-local-partition bound with $\alpha=320$.

\paragraph{The radius does not run out.}
Every cut is made at a radius below $\frac n{10}$, so by \cref{sec:ov-region} every piece has at most $\frac n2$ vertices. It remains to check that the windows fit into the range of radii. We show that their total width is at most $\frac n{20}$, so that the radius never drops below $\frac n{20}$. Let $W_j$ be the volume of $X_j$ when it is carved. If $\rho$ is the radius at the start of step $j$, the piece $X_j$ contains the ball of radius $\rho-w_j$ around $v_j$, so $W_j\ge\Vol_{v_j}(\rho-w_j)$. If this volume is at least $K_j/2$, then $K_j\le2W_j$. Otherwise, since $K_{j+1}\le\Vol_{v_j}(\rho-w_j)$ by the cost paragraph, the heaviest volume drops below $K_j/2$, and then $K_j\le2(K_j-K_{j+1})$. In both cases,
\[
K_j\le2W_j+2(K_j-K_{j+1}).
\]
The volumes $W_j$ add up to at most $2\LP(G)$, as noted in \cref{sec:ov-difficulty}, and the differences $K_j-K_{j+1}$ add up to at most $K_1\le2\LP(G)$. So $\sum_jK_j\le8\LP(G)$, and $\sum_jw_j=\frac{n}{160\,\LP(G)}\sum_jK_j\le\frac n{20}$. In words, a wide window is paid for either by a heavy piece or by a large drop of the heaviest volume, and both are limited by the total volume. So the wide windows of requirement~(a) still fit into the range of radii.

\paragraph{Summary.}
The carving order of Global Carving is an $O(1)$-weak local partition, and together with the recursion of \cref{sec:ov-framework}, this gives an $O(\log n)$-approximation for \textsc{Cutwidth}. Both algorithms are instances of Greedy Region Growing: every step carves around the center with the heaviest ball (in Greedy Seymour Carving, each ball is measured at the outer end of its own annulus), so this largest volume never increases, and this is what controls the prefixes. They differ only in how the radius of each cut is chosen. Greedy Seymour Carving chooses it locally, for each center, so that every piece pays for its own boundary; this costs a factor $O(\log\log n)$. Global Carving uses a single radius shared by all centers, so that every cut is paid for by the rise of the level; the deficit $F_i$ in the final bound is at least the level, which accumulates the rises of all earlier steps, and the volume of a piece is used only to pay for the radius.

\subsection{Beyond cutwidth}\label{sec:ov-other}
The same greedy principle drives all our results. Each of the other problems brings one new difficulty.

\paragraph{Pathwidth and treedepth.} Here we must cut vertices rather than edges. Using the vertex relaxation of~\cite{BKS24}, Global Carving goes through with vertex separators in place of edge boundaries, and this extension is essentially direct.

\paragraph{Directed layouts.} Now the metric is asymmetric, and the ordering must respect precedence constraints. This is exactly where the approach of~\cite{BKS24} breaks down, as it relies on packing and covering properties of symmetric metrics. We need to carve pieces that are consistent with the precedence constraints and to arrange them in a topological order without losing the prefix bound. We do this by carving two kinds of balls: \emph{source balls}, which only have outgoing arcs at the time they are removed, and \emph{sink balls}, which only have incoming arcs. Placing the source balls at the beginning of the ordering and the sink balls at the end, in reverse order of removal, gives a topological order. Global Carving runs with these balls.

\paragraph{Flow-expander decomposition.} Here we need a single cut whose cost is paid for by the \emph{weight} of the smaller side, rather than by its LP volume. With a node-weighting $A$ (such as the degrees), we need a Seymour-type bound
\[
|\delta(S)|\le O(\phi\log\log|A|)\,A(S)\log\frac{|A|}{A(S)}
\]
for some set $S$ with $0<A(S)\le|A|/2$, where $|A|=A(V)$; the metric now comes from the dual of a multicommodity flow LP. Region growing only pays for cuts with LP volume, so the carving has to control weight and volume at the same time. We do this by requiring each annulus to be stable for both quantities, using the weight to order the pieces and the volume to pay for their boundaries, and then an averaging argument finds one piece satisfying the bound. This averaging argument needs a bound on the sum of the costs of all pieces, each relative to its own weight, rather than a bound on every prefix. This is why we use Greedy Seymour Carving here: Global Carving controls only prefixes, and we do not know how to remove the factor $O(\log\log n)$ in this setting.

\subsection{Organization}\label{sec:ov-organization}
\Cref{sec:prelim} collects preliminaries, including the region-growing, stable-annulus and persistent-pointer lemmas of \cref{sec:prelim-rg}, which are used in \cref{sec:expander}. \Cref{sec:cutwidth} gives the algorithm for \textsc{Cutwidth}, Global Carving, in full detail. \Cref{sec:expander} treats flow-expander decomposition with Greedy Seymour Carving, \cref{sec:directed} \textsc{Minimum Storage}, and \cref{sec:vertex} \textsc{Pathwidth} and \textsc{Treedepth}; the last two follow the structure of \cref{sec:cutwidth}. We conclude with open problems in \cref{sec:conclusion}. In the appendices, \cref{app:bks-wlp} compares our weak local partitions with those of~\cite{BKS24}, \cref{app:register} reduces \textsc{Minimum Register Sufficiency} to weighted \textsc{Minimum Storage}, \cref{app:nonnegative} extends the flow-expander decomposition to arbitrary nonnegative node-weightings, \cref{app:implementation} contains the implementation details: separation oracles, rational arithmetic, and the multicommodity flow LP, and \cref{app:gaps} shows the integrality gaps. \Cref{app:ai} is the AI disclosure.

For completeness and ease of reading, we keep \cref{sec:cutwidth,sec:expander,sec:directed,sec:vertex} largely self-contained. As a result, some lemmas appear in several sections in closely analogous forms: they rest on the same principle, but their details differ from problem to problem.

\section{Preliminaries}\label{sec:prelim}
All graphs are finite. Unweighted undirected graphs are simple. Edge weights, capacities, and algorithmic parameters are rational. Directed graphs are acyclic whenever a topological ordering is required. Unless stated otherwise, $n$ and $m$ denote the numbers of vertices and edges of the input graph. Logarithms are natural unless a base is specified.

\subsection{Notation}
For an undirected graph $G=(V,E)$ and $S\subseteq V$, let $G[S]$ be the induced subgraph and let
\[
\delta_G(S)=\set{uv\in E:|\{u,v\}\cap S|=1}.
\]
For a directed graph $G=(V,E)$, $\delta_G(S)$ denotes all arcs with exactly one endpoint in $S$. We use $\cc(H)$ for the vertex sets of the connected components of $H$, and $[n]=\{1,\ldots,n\}$.

A metric may assign distance zero to distinct vertices. A quasi-metric satisfies the directed triangle inequality but need not be symmetric. A symmetric demand $D$ assigns a nonnegative value $D(u,v)$ to each unordered pair $\{u,v\}$. A routing sends this amount between $u$ and $v$. The load of an edge is the sum of the commodity flows using it, and the congestion is the maximum ratio of edge load to edge capacity.

\subsection{Vertex orderings and path decompositions}
For an ordering $\pi=(v_1,\ldots,v_n)$, let $S_i^\pi=\{v_1,\ldots,v_i\}$. Its cutwidth is $\max_i|\delta_G(S_i^\pi)|$. Its vertex-separation cost is
\[
\VS_\pi(G)=\max_i\abs{\set{v_j:j\le i,\ v_jv_k\in E(G)\text{ for some }k>i}}.
\]
Restricting an ordering to an induced subgraph cannot increase either cost. The same observation holds for \textsc{Minimum Storage} (\cref{sec:directed}) when the ordering is topological.

A path decomposition is a sequence of bags $(X_1,\ldots,X_m)$ whose union is $V(G)$, such that every edge has both endpoints in a common bag and the bags containing any fixed vertex form an interval. Its width is $\max_i|X_i|-1$. The pathwidth $\pw(G)$ is the minimum width of a path decomposition.

\begin{lemma}\label{lem:vs-pw}
The minimum vertex-separation cost of an ordering of $G$ equals $\pw(G)$~\cite{Kin92}. An ordering of vertex-separation cost $k$ can be converted in polynomial time into a path decomposition of width at most $k$.
\end{lemma}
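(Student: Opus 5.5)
The first claim, that the minimum vertex-separation cost equals $\pw(G)$, is Kinnersley's theorem~\cite{Kin92}, and I will cite it rather than reprove it. What remains is the algorithmic statement: converting an ordering $\pi=(v_1,\ldots,v_n)$ of vertex-separation cost $k$ into a path decomposition of width at most $k$. I will write down the decomposition explicitly and then check the three axioms and the width bound.

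For each $i\in[n]$, let $B_i=\set{v_j:j\le i,\ v_jv_l\in E(G)\text{ for some }l>i}$ be the set of prefix vertices that still have a later neighbor. By definition $|B_i|\le k$. Define the bags $X_i=B_{i-1}\cup\{v_i\}$, with $B_0=\emptyset$. Each bag then has size at most $k+1$, so the width is at most $k$, and the bags can clearly be computed in polynomial time.

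The axioms are checked as follows.
\begin{itemize}
\item \emph{Coverage.} Every vertex $v_i$ lies in $X_i$, so the union of the bags is $V(G)$.
\item \emph{Edges.} Take an edge $v_jv_l$ with $j<l$. Since $j\le l-1<l$, the vertex $v_j$ lies in $B_{l-1}$. Hence both endpoints lie in $X_l$.
\item \emph{Intervals.} Fix $v_j$. It belongs to $X_j$, and to $X_{i+1}$ exactly when $v_j\in B_i$ for some $i\ge j$. The condition $v_j\in B_i$ says that $v_j$ has a neighbor of index greater than $i$. This condition is monotone: it holds precisely for $j\le i<\ell$, where $\ell$ is the largest index of a neighbor of $v_j$. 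So the bags containing $v_j$ are $X_j,\ldots,X_{\ell}$ (just $X_j$ if no such later neighbor exists), which form an interval.
\end{itemize}

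The only step needing care is the interval property. It is handled by the monotonicity observation above, so I do not expect any genuine obstacle.
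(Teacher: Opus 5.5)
Your construction is the paper's: your bag $B_{i-1}\cup\{v_i\}$ is exactly the paper's $X_i$, and your edge and interval checks, using the largest neighbor index, match its argument. The one difference is that the paper does not rely on~\cite{Kin92} for the equality. Your construction already gives $\pw(G)\le\min_\pi\VS_\pi(G)$. The paper also proves the converse, which is the direction later used in \cref{lem:vertex-lower}: it orders the vertices by the first bag containing them, so that all active processed vertices and the next vertex lie in a single bag. Citing Kinnersley for the converse, as you do, is acceptable given the statement's attribution.
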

\begin{proof}
Let $\pi=(v_1,\ldots,v_n)$ have vertex-separation cost $k$. For each $i\in[n]$, set
\[
X_i=\{v_i\}\cup\set{v_j:j<i,\ v_jv_\ell\in E(G)\text{ for some }\ell\ge i}.
\]
The second set is the active set after the first $i-1$ vertices, so $|X_i|\le k+1$. If $v_jv_\ell$ is an edge with $j<\ell$, then both endpoints belong to $X_\ell$. Let $r_j$ be the largest index of a neighbor of $v_j$ appearing after $v_j$, or $j$ if no such neighbor exists. The bags containing $v_j$ are exactly $X_j,\ldots,X_{r_j}$. Thus $(X_1,\ldots,X_n)$ is a path decomposition of width at most $k$.

Conversely, let $(X_1,\ldots,X_m)$ have width $k$, and let $\ell(v)$ be the first bag containing $v$. Order the vertices by nondecreasing $\ell(v)$, breaking ties arbitrarily. Fix a proper prefix, and let $t$ be the first-bag index of its next vertex. If a processed vertex $u$ has an unprocessed neighbor $v$, then the intervals for $u$ and $v$ intersect. Since $\ell(u)\le t\le\ell(v)$, the interval for $u$ contains $t$. Hence every active processed vertex lies in $X_t$. The next unprocessed vertex also lies in $X_t$, so at most $|X_t|-1\le k$ processed vertices are active.
\end{proof}

The LPs in this paper use spreading constraints. They are satisfied by the distances of every ordering.

\begin{lemma}[Line metrics are spreading]\label{lem:line-spreading}
Let $\pi$ be an ordering of a finite set $V$, and let $d_\pi(u,v)=|\pi(u)-\pi(v)|$. Then $\sum_{v\in S}d_\pi(u,v)\ge(|S|^2-1)/4$ for every $u\in V$ and every $S\subseteq V$.
\end{lemma}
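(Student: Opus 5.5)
The plan is to reduce to the case where $u$ is at the center of $S$, and then sum explicitly. Fix $u$ and $S$ with $|S|=s$. The values $\pi(v)$ for $v\in S$ are $s$ distinct integers. So $d_\pi(u,v)=|\pi(u)-\pi(v)|$ for $v\in S$ are distances from the fixed integer $\pi(u)$ to $s$ distinct integers. For each $k\ge1$, at most two integers lie at distance exactly $k$ from $\pi(u)$, and at most one lies at distance $0$. Hence the multiset of distances dominates, termwise after sorting, the multiset $0,1,1,2,2,3,3,\ldots$ truncated to $s$ terms. This holds by a greedy exchange argument: the $t$-th smallest distance is at least the $t$-th smallest element of the multiset in which each value $k\ge1$ appears twice and $0$ once.

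It then remains to compute the minimum sum $\sum_{t=1}^{s}\lceil (t-1)/2\rceil$, split by parity of $s$.
\begin{itemize}
\item If $s=2m+1$, the sum is $2(1+\cdots+m)=m(m+1)=(s^2-1)/4$, which is exactly the bound.
\item If $s=2m$, the sum is $m(m+1)-m=m^2=s^2/4\ge(s^2-1)/4$.
\end{itemize}
Note that $u$ need not lie in $S$. In that case the distance $0$ is unavailable and the sum is only larger, which the domination argument already covers since we only used upper bounds on how many elements of $S$ lie at each distance.

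The only step needing care is the domination claim. I would phrase it as follows: for every threshold $k\ge0$, the number of $v\in S$ with $d_\pi(u,v)\le k$ is at most $2k+1$. The sum of the $s$ distances is then at least the minimum possible under these counting constraints, and that minimum is attained by the extremal multiset above. This is routine, so there is no real obstacle.
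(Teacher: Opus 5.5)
Your proof is correct and is essentially the paper's argument: both use that at most two positions lie at each positive distance from $\pi(u)$, so the sorted distances dominate $0,1,1,2,2,\ldots$, and then evaluate the resulting sum separately for even and odd $|S|$. The only cosmetic difference is that the paper first replaces a vertex of $S$ by $u$ to assume $u\in S$, whereas your threshold count (at most $2k+1$ elements within distance $k$) handles the case $u\notin S$ directly.
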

\begin{proof}
Let $s=|S|$; we may assume $u\in S$, since replacing a vertex of $S$ by $u$ does not increase the sum. Let $q_1\le\cdots\le q_{s-1}$ be the distances from $u$ to the vertices of $S\setminus\{u\}$. At most two vertices lie at each positive integer distance from $u$, so $q_k\ge\lceil k/2\rceil$. If $s-1=2k$, then $\sum_{j=1}^{s-1}\lceil j/2\rceil=k(k+1)=(s^2-1)/4$; if $s-1=2k+1$, it equals $(k+1)^2=s^2/4$.
\end{proof}

\subsection{Region growing and stable annuli}\label{sec:prelim-rg}
For a metric $d$, a center $u$, and $r>0$, write
\[
B_d(u,r)=\{v:d(u,v)<r\}.
\]
All our algorithms grow balls whose volume behaves like the volume in \cref{sec:ov-region}. We isolate the properties we use. A \emph{volume function} is a function $\mathcal{V}:(0,\infty)\to(0,\infty)$ that is nondecreasing and piecewise affine with finitely many breakpoints, called \emph{events}, at which it may jump upward, and whose derivative away from the events is a nonnegative quantity $c(r)$, called the \emph{boundary cost} at $r$. Given a normalization $\Lambda\ge\sup_r\mathcal{V}(r)$, the \emph{deficit} of $\mathcal{V}$ is
\[
F(r)=\log\frac{e\Lambda}{\mathcal{V}(r)}.
\]
It is nonincreasing in $r$ and at least one. An interval $[a,b]$ is \emph{stable} if $F(a)\le 2F(b)$.

\begin{lemma}[Region growing]\label{lem:rg}
Let $\mathcal{V}$ be a volume function and $0<a<b$. There is an event-free radius $r\in(a,b)$ with
\[
c(r)\le\frac{k}{b-a}\,\mathcal{V}(r),\qquad k=\Big\lceil\log_2\frac{2\mathcal{V}(b)}{\mathcal{V}(a)}\Big\rceil.
\]
\end{lemma}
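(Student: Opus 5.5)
The plan follows the classical argument sketched in \cref{sec:ov-region}. We pigeonhole over $k$ equal subintervals to find one across which $\mathcal{V}$ at most doubles, then average the derivative over that subinterval.

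\textbf{Pigeonhole step.} Split $[a,b]$ into $k$ consecutive closed intervals $[a_{t-1},a_t]$ of equal length $(b-a)/k$, where $a_t=a+t(b-a)/k$. Note that $k\ge1$, since $\mathcal{V}(b)\ge\mathcal{V}(a)$ gives $2\mathcal{V}(b)/\mathcal{V}(a)\ge2$. If every interval satisfied $\mathcal{V}(a_t)>2\mathcal{V}(a_{t-1})$, telescoping would give
\[
\mathcal{V}(b)>2^k\mathcal{V}(a)\ge2\mathcal{V}(b),
\]
which is a contradiction because $\mathcal{V}(b)>0$. So some interval $[a',b']$ is stable in volume, meaning $\mathcal{V}(b')\le2\mathcal{V}(a')$.

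\textbf{Averaging step.} The function $\mathcal{V}$ is piecewise affine with finitely many upward jumps. Hence on $(a',b')$ the increase $\mathcal{V}(b')-\mathcal{V}(a')$ equals the integral of $c$ over the event-free part, plus the sum of the jumps, which is nonnegative. (At the endpoints we use one-sided limits; since $\mathcal{V}$ is nondecreasing, any jump at an endpoint only adds to the right-hand side.) Therefore
\[
\int_{a'}^{b'}c(r)\,dr\le\mathcal{V}(b')-\mathcal{V}(a')\le\mathcal{V}(a').
\]
The events are finitely many, so they form a measure-zero set. It follows that there is an event-free $r\in(a',b')$ with
\[
c(r)\le\frac{\mathcal{V}(a')}{b'-a'}=\frac{k\,\mathcal{V}(a')}{b-a}.
\]
If $c$ were strictly larger everywhere off the events, the integral would exceed the bound. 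Finally, $r>a'$ and monotonicity give $\mathcal{V}(a')\le\mathcal{V}(r)$, which yields $c(r)\le\frac{k}{b-a}\mathcal{V}(r)$. The radius $r$ lies in $(a',b')\subseteq(a,b)$.

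\textbf{Main obstacle.} The only delicate point is the measure-theoretic justification that the integral of $c$ is bounded by the total increase, given the jumps and the endpoint conventions. I would handle this by writing $\mathcal{V}$ on $[a',b']$ explicitly as a sum of affine pieces between consecutive events, which is legitimate since the events are finite. On each piece the increase equals the integral of the constant slope $c$. The jumps at the events contribute nonnegative amounts. With this decomposition the inequality is immediate.
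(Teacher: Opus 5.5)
Your proposal is correct and follows essentially the same route as the paper. Both split $[a,b]$ into $k$ equal pieces and pigeonhole to find one across which $\mathcal{V}$ at most doubles. Both then bound $\int_{a'}^{b'}c$ by $\mathcal{V}(b')-\mathcal{V}(a')\le\mathcal{V}(a')$ using the nonnegative jumps, average to get an event-free $r$, and use $\mathcal{V}(a')\le\mathcal{V}(r)$. Your extra care about endpoint jumps and the measure-zero event set only spells out what the paper's one-line proof leaves implicit.
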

\begin{proof}
This is the argument of \cref{sec:ov-region}. Divide $[a,b]$ into $k$ intervals of equal length. If $\mathcal{V}$ more than doubled across each of them, then $\mathcal{V}(b)>2^k\mathcal{V}(a)\ge2\mathcal{V}(b)$. So some interval $[a',b']$ has $\mathcal{V}(b')\le2\mathcal{V}(a')$. Since the jumps are nonnegative, $\int_{a'}^{b'}c(r)\,dr\le \mathcal{V}(b')-\mathcal{V}(a')\le \mathcal{V}(a')$, so some event-free $r\in(a',b')$ has $c(r)\le \mathcal{V}(a')/(b'-a')\le k\mathcal{V}(r)/(b-a)$.
\end{proof}

\begin{lemma}[Stable annulus]\label{lem:stable-annulus}
Let $\mathcal{V}$ be a volume function with deficit $F$, and let $[a,b]$ be stable. There is an event-free radius $r\in(a,b)$ with
\[
c(r)\le\frac{4}{b-a}\,\mathcal{V}(r)F(r).
\]
Such a radius, up to an absolute constant factor in the bound, can be found in polynomial time (\cref{app:implementation}).
\end{lemma}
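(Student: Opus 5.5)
The plan is to apply \cref{lem:rg} directly to the stable interval $[a,b]$ and show that the resulting $k$ is at most a constant times $F(r)$ for the radius $r$ it returns. This is the fine-grained argument of \cref{sec:ov-region}, but without the outer subdivision into $\Theta(\log\log n)$ annuli, since stability is now assumed.

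\textbf{Step 1: the volume ratio is controlled by the deficit.} By definition of the deficit,
\[
\log\frac{\mathcal{V}(b)}{\mathcal{V}(a)}=F(a)-F(b).
\]
Stability gives $F(a)\le 2F(b)$, so $F(a)-F(b)\le F(b)$. Hence
\[
\log_2\frac{2\mathcal{V}(b)}{\mathcal{V}(a)}=1+\frac{F(a)-F(b)}{\log 2}\le 1+\frac{F(b)}{\log 2}.
\]

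\textbf{Step 2: bound $k$.} Taking the ceiling,
\[
k\le 2+\frac{F(b)}{\log 2}.
\]
Since $F\ge1$, we have $2\le 2F(b)$, and $1/\log 2<1.45$. Therefore $k\le 3.45\,F(b)\le 4F(b)$.

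\textbf{Step 3: transfer to $r$.} \Cref{lem:rg} supplies an event-free $r\in(a,b)$ with
\[
c(r)\le\frac{k}{b-a}\,\mathcal{V}(r).
\]
Because $F$ is nonincreasing and $r\le b$, we have $F(b)\le F(r)$. Combining with Step 2 gives
\[
c(r)\le\frac{4}{b-a}\,\mathcal{V}(r)F(r),
\]
which is the claimed bound.

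The only slightly delicate point is the constant: one must check that the ceiling and the additive $1$ are absorbed using $F\ge1$, which the arithmetic in Step 2 does. If one instead wants the sharper route through the proof of \cref{lem:rg}, one can note that the subinterval found there satisfies $\mathcal{V}(a')\le\mathcal{V}(r)$, but the lemma statement already gives this.

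For the algorithmic claim, I would defer to \cref{app:implementation}. The idea is that $\mathcal{V}$ has polynomially many events, namely the distances $d(u,\cdot)$ and the edge endpoints. Between consecutive events $c$ is constant, so one can evaluate $c$ on each event-free piece of $(a,b)$ and pick the minimizer of $c(r)/(\mathcal{V}(r)F(r))$. This exact minimization may need care with irrational logarithms. That is why the statement allows an absolute constant loss: one can compare against rational approximations of $F$ instead.
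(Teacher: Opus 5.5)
Your proof is correct and is the paper's argument: apply \cref{lem:rg} on the stable interval, use $\log(\mathcal{V}(b)/\mathcal{V}(a))=F(a)-F(b)\le F(b)$ to get $k\le 2+F(b)/\log 2\le 4F(b)$ (absorbing the additive constant via $F(b)\ge 1$), and then pass to $F(r)$ by monotonicity. One caution on the algorithmic part: $c/(\mathcal{V}F)$ need not attain its minimum on an open event-free piece, since $\mathcal{V}F$ increases toward the piece's right end. The appendix instead takes the midpoint $m$ of the event-free piece minimizing $c(m)/\mathcal{V}(m)$; this uses no logarithms and costs only a factor of two, via $\mathcal{V}(m)\ge\frac12\mathcal{V}(r)$ on that piece and $F(m)\ge F(b)$.
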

\begin{proof}
Apply \cref{lem:rg}. By stability,
$k\le 1+\log_2\frac{2\mathcal{V}(b)}{\mathcal{V}(a)}=2+\frac{F(a)-F(b)}{\log 2}\le 2+\frac{F(b)}{\log2}$.
Since $1\le F(b)\le F(r)$, this is at most $(2+1/\log 2)F(r)\le 4F(r)$.
\end{proof}

\begin{corollary}[One-step carving]\label{lem:one-step}
Let $\mathcal{V}$ be a volume function whose deficit takes values in $[1,D]$, and let $0<a<b$. There is an event-free radius $r\in(a,b)$ with
\[
c(r)\le O\Big(\frac{1+\log D}{b-a}\Big)\mathcal{V}(r)F(r).
\]
\end{corollary}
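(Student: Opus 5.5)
The plan is to reduce to \cref{lem:stable-annulus} by locating a stable sub-interval of $[a,b]$ of length $\Omega\big(\frac{b-a}{1+\log D}\big)$, exactly as in the fine-grained argument of \cref{sec:ov-region}.

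Set $\kappa=\lceil\log_2 D\rceil+1$, which is $O(1+\log D)$. Divide $[a,b]$ into $\kappa$ consecutive closed intervals of equal length $(b-a)/\kappa$, with endpoints $a=t_0<t_1<\cdots<t_\kappa=b$. We claim some $[t_{i-1},t_i]$ is stable, i.e.\ $F(t_{i-1})\le 2F(t_i)$. If not, then $F(t_{i-1})>2F(t_i)$ for every $i$. Telescoping gives $F(a)>2^\kappa F(b)\ge 2^\kappa$, using $F\ge1$. But $2^\kappa\ge 2D>D$, which contradicts $F(a)\le D$. This holds provided the deficit bound $[1,D]$ applies at the endpoints, which we read the hypothesis as guaranteeing on $(0,\infty)$. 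The endpoint $a>0$ lies in the domain, so $F(a)$ is defined.

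Now apply \cref{lem:stable-annulus} to the stable interval $[t_{i-1},t_i]$. It yields an event-free radius $r\in(t_{i-1},t_i)\subseteq(a,b)$ with
\[
c(r)\le \frac{4}{t_i-t_{i-1}}\,\mathcal{V}(r)F(r)=\frac{4\kappa}{b-a}\,\mathcal{V}(r)F(r)=O\Big(\frac{1+\log D}{b-a}\Big)\mathcal{V}(r)F(r).
\]
This is the claimed bound.

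There is no real obstacle here. The only points needing care are choosing $\kappa$ so that $2^\kappa>D$ (this is where the additive $+1$ is used), and noting that the constant absorbs $\lceil\log_2 D\rceil+1\le 2+\log_2 D$. For the algorithmic version, we can test stability of each of the $\kappa$ intervals directly, since $F$ is computable at the endpoints. We then invoke the polynomial-time version of \cref{lem:stable-annulus} on a stable one.
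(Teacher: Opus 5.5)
Your proposal is correct and follows the paper's proof essentially verbatim. You split $[a,b]$ into $\kappa=\lceil\log_2 D\rceil+1$ equal intervals, use telescoping to show that one of them is stable (otherwise $F(a)>2^\kappa F(b)\ge 2^\kappa>D$), and then apply \cref{lem:stable-annulus} on that interval of length $(b-a)/\kappa$.
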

\begin{proof}
Divide $[a,b]$ into $\kappa=\lceil\log_2D\rceil+1$ intervals of equal length. If none were stable, then $F(a)>2^\kappa F(b)\ge2^\kappa>D$. Apply \cref{lem:stable-annulus} to a stable one, whose length is $(b-a)/\kappa$.
\end{proof}

The algorithm of \cref{sec:expander} maintains, for each center, a pointer into a fixed sequence of annuli that only moves inward. The following lemma bounds how often a pointer moves.

\begin{lemma}[Persistent pointers]\label{lem:persistence}
Let $a_0>a_1>\cdots>a_\kappa$ be fixed radii, and let $F_1,F_2,\ldots$ be functions such that every $F_t$ is nonincreasing, $F_t\le F_{t+1}$ pointwise, and all values lie in $[1,D]$. A pointer $p\in\{1,\ldots,\kappa\}$ starts at $1$ and may only increase. Call a move of the pointer from $p$ to $p+1$ at time $t$ \emph{forced by $F$} if $F_t(a_p)>2F_t(a_{p-1})$. Then at most $\log_2D$ moves are forced by $F$.
\end{lemma}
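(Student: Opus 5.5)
The plan is a potential argument on the value of $F$ at the pointer's current outer radius. Suppose the pointer sits at $p$, so its annulus is $[a_p,a_{p-1}]$. Define the potential at time $t$ to be
\[
\Phi_t=F_t(a_{p_t-1}),
\]
the deficit at the current outer radius. I would show that $\Phi$ never decreases, and that it more than doubles with every forced move. Since $\Phi$ takes values in $[1,D]$, this gives the bound.

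\emph{Monotonicity.} There are two ways the pointer state can change between consecutive moments.
\begin{itemize}[leftmargin=*]
\item Time advances with the pointer fixed. Then $\Phi$ goes from $F_t(a_{p-1})$ to $F_{t+1}(a_{p-1})$, which is at least as large because $F_t\le F_{t+1}$ pointwise.
\item The pointer moves from $p$ to $p+1$ at time $t$. The new outer radius is $a_p<a_{p-1}$. Since $F_t$ is nonincreasing, $F_t(a_p)\ge F_t(a_{p-1})$.
\end{itemize}
So every move, forced or not, leaves $\Phi$ nondecreasing. If several moves occur at the same time $t$, I would treat them sequentially with the same function $F_t$; each one is compared against the current outer radius, and the argument is unchanged.

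\emph{Forced moves double $\Phi$.} A forced move at time $t$ from $p$ to $p+1$ satisfies $F_t(a_p)>2F_t(a_{p-1})$. The left-hand side is the potential right after the move, and the right-hand side is twice the potential right before it. So $\Phi$ more than doubles across each forced move. Combining this with monotonicity, after $m$ forced moves we get $\Phi>2^m\Phi_{\text{initial}}\ge2^m$, using that all values are at least $1$. Since $\Phi\le D$, we conclude $2^m<D$, hence $m<\log_2D$.

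The only delicate point is bookkeeping: the forcing condition compares $F_t$ at the two endpoints of the current annulus, and the potential has to be evaluated with the same $F_t$ immediately before and after the move. Pointwise monotonicity in $t$ then handles every transition between times.
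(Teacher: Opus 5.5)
Your proof is correct and is essentially the paper's argument. The paper inducts on the forced moves, showing that the value at the new outer radius exceeds $2^k$ after the $k$-th forced move by chaining monotonicity in the radius and in time; this is the same as your potential $F_t(a_{p_t-1})$, which never decreases and more than doubles at every forced move.
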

\begin{proof}
We show that at the $k$-th forced move, $F_t(a_p)>2^k$. For $k=1$ this holds since $F_t(a_{p-1})\ge1$. Let the $(k-1)$-th forced move happen at time $t'$ from $p'$, so that $F_{t'}(a_{p'})>2^{k-1}$. The $k$-th forced move happens at a time $t\ge t'$ from some $p\ge p'+1$. Since $a_{p-1}\le a_{p'}$, monotonicity in the radius and in time give $F_t(a_{p-1})\ge F_{t'}(a_{p'})>2^{k-1}$, hence $F_t(a_p)>2^k$. All values are at most $D$, so $2^k<D$.
\end{proof}
In particular, if a pointer moves only when forced and $\kappa\ge\log_2D+1$, then it never passes the last annulus.

The stable-annulus and persistent-pointer lemmas are used only in \cref{sec:expander}, which uses Greedy Seymour Carving.

\section{An \texorpdfstring{$O(\log n)$}{O(log n)}-Approximation for Cutwidth}\label{sec:cutwidth}
For an undirected graph $G$, we use the spreading-metric relaxation from \cref{sec:ov-framework}~\cite{ENRS00,BKS24}:
\begin{equation}\label{eq:cw-lp}
\begin{aligned}
\LP(G)=\min\quad &\sum_{uv\in E(G)}d(u,v)\\
\text{s.t.}\quad &d(u,w)\le d(u,v)+d(v,w) &&u,v,w\in V(G),\\
&\sum_{v\in S}d(u,v)\ge\frac{|S|^2-1}{4} &&u\in V(G),\ S\subseteq V(G),\\
&d(u,v)=d(v,u) &&u,v\in V(G),\\
&d(u,u)=0,\quad d(u,v)\ge0 &&u,v\in V(G).
\end{aligned}
\end{equation}
Its spreading constraints admit a polynomial-time separation oracle (\cref{app:implementation}). For every induced subgraph $H$ of $G$,
\begin{equation}\label{eq:cw-density}
\LP(H)\le |V(H)|\cw(H)\le|V(H)|\cw(G).
\end{equation}
Indeed, let $\pi$ be an ordering of $H$ of cutwidth $\cw(H)$. The metric $d_\pi$ is feasible for \eqref{eq:cw-lp} on $H$ by \cref{lem:line-spreading}. An edge $uv$ has length $|\pi(u)-\pi(v)|$, which is the number of prefixes $S_i^\pi$ that it crosses, so the objective is $\sum_{i<|V(H)|}|\delta_H(S_i^\pi)|\le|V(H)|\cw(H)$. The second inequality holds since restricting an ordering of $G$ to $V(H)$ does not increase cutwidth.
Throughout this section, $G$ is a connected graph on $n\ge2$ vertices and $d$ is an optimal solution of \eqref{eq:cw-lp}. As in \cref{sec:ov-framework}, the \emph{volume} of a set $S\subseteq V(G)$ is
\begin{equation}\label{eq:static-volume}
\Vol(S)=\frac{\LP(G)}{n}|S|+\sum_{uv\in E(G[S])}d(u,v).
\end{equation}
The volumes of the parts of any partition sum to at most $2\LP(G)$.

\subsection{Weak local partitions}
\begin{definition}[Weak local partition]\label{def:wlp}
An ordered partition $(X_1,\ldots,X_k)$ of $V(G)$ is an \emph{$\alpha$-weak local partition} if $|X_i|\le n/2$ for every $i$ and
\begin{equation}\label{eq:wlp}
\sum_{j\le i}|\delta_G(X_j)|
\le\alpha\,\frac{\LP(G)}{n}\,\log\frac{2e\,\LP(G)}{\Vol(X_i)}
\qquad\text{for every }i\in[k].
\end{equation}
\end{definition}
This implies the unordered condition of~\cite{BKS24}; see \cref{app:bks-wlp}. The following lemma makes the recursion of \cref{sec:ov-framework} precise.

\begin{lemma}\label{lem:wlp-to-cw}
Suppose an $\alpha$-weak local partition can be computed in polynomial time for every connected induced subgraph with at least two vertices. Then \textsc{Cutwidth} admits a polynomial-time $O(\alpha\log n)$-approximation.
\end{lemma}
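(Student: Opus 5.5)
We will build the ordering by the recursion of \cref{sec:ov-framework} and bound every prefix cut by summing one contribution per recursion level along the chain of instances containing the last vertex of the prefix. The algorithm is as follows. Given an induced subgraph $H$ (initially $H=G$), if $H$ is disconnected we order its connected components consecutively and recurse on each. If $H$ has one vertex we output it. Otherwise we solve \eqref{eq:cw-lp} for $H$ (polynomial time by the separation oracle of \cref{app:implementation}), compute an $\alpha$-weak local partition $(X_1,\ldots,X_k)$ of $H$, place the pieces in this order, and recurse on each $H[X_i]$. Since $|X_i|\le|V(H)|/2$, the recursion depth in the connected steps is at most $\log_2 n$. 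There are polynomially many instances, since the pieces at each level partition $V(G)$.

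Fix a prefix $P$ of the final ordering, ending at a vertex $v$. Let $G=H_0\supseteq H_1\supseteq\cdots\supseteq H_T=\{v\}$ be the instances containing $v$. An edge $e\in\delta_G(P)$ is charged to the deepest instance $H_t$ containing both of its endpoints. At that level its endpoints lie in different parts, one of which is an earlier or equal part to the part containing $v$.

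\begin{itemize}
\item \textbf{Component splits.} An edge cannot cross between components, so these levels contribute nothing.
\item \textbf{Weak-local-partition steps.} If $v\in X_i$ at level $t$, every charged edge lies in $\delta_{H_t}(X_j)$ for some $j\le i$. (If both endpoints were in parts after $X_i$, or both in $X_i$, the edge would not be charged here.) By \eqref{eq:wlp} and \eqref{eq:cw-density}, level $t$ contributes at most
\[
\alpha\frac{\LP(H_t)}{|V(H_t)|}\log\frac{2e\LP(H_t)}{\Vol_{H_t}(X_i)}\le\alpha\,\cw(G)\log\frac{2e\LP(H_t)}{\Vol_{H_t}(X_i)} .
\]
\end{itemize}

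The main obstacle is making this sum telescope. We need $\Vol_{H_t}(X_i)\ge c\,\LP(H_{t+1})$, where $H_{t+1}$ is the next nontrivial instance, which may be a component of $H_t[X_i]$. Restricting the optimal metric of $H_t$ to $H_t[X_i]$ is feasible for \eqref{eq:cw-lp} on $H_t[X_i]$, and the spreading constraints restrict to subsets. Hence $\Vol_{H_t}(X_i)\ge\LP(H_t[X_i])\ge\LP(H_{t+1})$ when $H_{t+1}$ is a component of $H_t[X_i]$, since LP values of components of a disjoint union are at most the total.

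One subtlety is that $\LP(H_{t+1})$ may be tiny. If $|V(H_{t+1})|\ge2$ and $H_{t+1}$ is connected, the spreading constraint with $|S|=2$ gives $\LP(H_{t+1})\ge$ some distance $\ge 3/8$. Also $\LP(G)\le n\cw(G)\le n^3$. The terms then telescope to
\[
\alpha\cw(G)\Big(T\log(2e)+\log\frac{\LP(G)}{\LP(H_{T'})}\Big),
\]
where $H_{T'}$ is the last instance with at least two vertices. Since $T=O(\log n)$ (connected steps halve sizes, and component splits alternate with them), this is $O(\alpha\log n)\cw(G)$.

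At the final level the piece is $\{v\}$. There, $\Vol(\{v\})=\LP(H_t)/|V(H_t)|$, which gives the logarithm $\log(2e|V(H_t)|)$ and is handled the same way. The remaining care is with the factor $2e$ per level and the $\log(1/\LP)$ lower cutoff, both absorbed as shown.
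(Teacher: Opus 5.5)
Your proposal is correct and follows essentially the same route as the paper: recursion with parts of at most half the size, charging each crossing edge to the deepest instance on the chain through the last vertex of $P$, bounding each level by the weak-local-partition inequality together with $\LP(H_t)\le|V(H_t)|\cw(G)$, and telescoping via $\LP(H_{t+1})\le\Vol_{H_t}(X_i)$ for the part $X_i\ni v$. The differences are cosmetic. Your lower bound on $\LP$ of a connected instance uses the spreading constraint with $|S|=2$, which gives every edge length at least $3/4$; this is a simpler substitute for the paper's path-and-full-set argument. At the last level the part containing $v$ need not be $\{v\}$ itself, but any nonempty part has volume at least $\LP(H_t)/|V(H_t)|$, so your term $\log(2e|V(H_t)|)$ is still valid.
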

\begin{proof}
Since there are no edges between connected components, we may order the components one after another, so we may assume that $G$ is connected; a single vertex is ordered trivially. On a connected instance $H$ with at least two vertices, solve \eqref{eq:cw-lp} on $H$, compute an $\alpha$-weak local partition of $H$, recursively order the connected components of each of its parts, and concatenate the orders following the order of the parts. Every recursive instance has at most half as many vertices as its parent, so the recursion depth is $O(\log n)$. Write $\Vol_H$ for the volume \eqref{eq:static-volume} computed in $H$.

Fix a prefix $P$ of the final ordering. Every instance occupies a contiguous block of the ordering. Let $H_0=G,H_1,\ldots,H_T$ be the instances with at least two vertices whose block contains the last vertex of $P$. For each $t\le T$, let $Y_t$ be the part of $H_t$ containing that vertex; for $t<T$, the next instance $H_{t+1}$ is a component of $H_t[Y_t]$. An edge crossing $P$ that is separated at the level of $H_t$ leaves a part of $H_t$ that precedes or equals $Y_t$. By \eqref{eq:wlp} and \eqref{eq:cw-density}, there are at most
\[
\alpha\,\frac{\LP(H_t)}{|V(H_t)|}\log\frac{2e\,\LP(H_t)}{\Vol_{H_t}(Y_t)}
\le\alpha\cw(G)\log\frac{2e\,\LP(H_t)}{\Vol_{H_t}(Y_t)}
\]
such edges. Restricting the metric of $H_t$ to $H_{t+1}$ is feasible, so $\LP(H_{t+1})\le\Vol_{H_t}(Y_t)$ for $t<T$. At the last level, $\Vol_{H_T}(Y_T)\ge\LP(H_T)/|V(H_T)|$, since every vertex contributes $\LP(H_T)/|V(H_T)|$ to the volume. Therefore
\[
\sum_{t=0}^T\log\frac{2e\,\LP(H_t)}{\Vol_{H_t}(Y_t)}
\le T\log(2e)+\log\frac{\LP(H_0)}{\LP(H_T)}+\log\bigl(2e|V(H_T)|\bigr)=O(\log n).
\]
Here we used $T=O(\log n)$, $\LP(G)\le n\cw(G)\le n^3$, and $\LP(H_T)\ge3/8$. The last bound holds for every connected instance $H$ with $s\ge2$ vertices: spreading gives $\sum_vd(u,v)\ge(s^2-1)/4$, while the triangle inequality along a path gives $d(u,v)\le\LP(H)$, so $\LP(H)\ge(s^2-1)/(4s)\ge3/8$. Finally, every edge crossing $P$ is separated at one of the levels $H_0,\ldots,H_T$, since a single vertex contains no edge. This proves the claim.
\end{proof}

\subsection{The carving algorithm}\label{sec:cw-algorithm}
The algorithm grows balls in the graph induced by the remaining vertices. For $U\subseteq V(G)$, $u\in U$ and $r\in\R$, let $B^U(u,r)=\{v\in U:d(u,v)<r\}$, and define its volume as in \cref{sec:ov-region}, computed in $G[U]$:
\begin{equation}\label{eq:residual-volume}
\Vol_u^U(r)=\Vol\bigl(B^U(u,r)\bigr)
+\sum_{\substack{xy\in E(G[U])\\x\in B^U(u,r),\ y\notin B^U(u,r)}}
\bigl(r-d(u,x)\bigr).
\end{equation}
For $r\le0$ the ball is empty and its volume is $0$.

\begin{lemma}\label{lem:cw-volume}
For every $U\subseteq V(G)$, every $u\in U$, and every $r>0$:
\begin{enumerate}[label=(\roman*),leftmargin=*]
\item $\Vol_u^U$ is a volume function (\cref{sec:prelim-rg}) with boundary cost $|\delta_{G[U]}(B^U(u,r))|$ and normalization $2\LP(G)$; its events are the distances $d(u,x)$;
\item $\LP(G)/n\le\Vol_u^U(r)\le2\LP(G)$;
\item removing vertices other than $u$ from $U$ does not increase $\Vol_u^U(r)$;
\item if $r\le n/10$, then $|B^U(u,r)|\le n/2$.
\end{enumerate}
\end{lemma}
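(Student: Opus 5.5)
We verify the four items directly from the definition \eqref{eq:residual-volume}, treating $U$, $u$ fixed.

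For (i), we regard $\Vol_u^U$ as a function of $r$. Between consecutive distances $d(u,x)$, $x\in U$, the ball $B^U(u,r)$ is constant. Hence $\Vol(B^U(u,r))$ is constant, and the boundary sum is affine in $r$ with slope equal to the number of edges of $G[U]$ with exactly one endpoint in the ball, namely $|\delta_{G[U]}(B^U(u,r))|$.

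At an event $r_0=d(u,x)$, we compare the left limit with the value just after $r_0$. Each new vertex adds $\LP(G)/n$. Each edge $xy$ with $y$ already in the ball changes from a partial contribution $r_0-d(u,y)$ to its full length $d(x,y)$. Since $d(x,y)\ge d(u,x)-d(u,y)=r_0-d(u,y)$ by the triangle inequality, this is an upward jump. Edges from $x$ to vertices outside the ball start contributing $0$, so they cause no jump. The function is therefore nondecreasing and piecewise affine with finitely many events. The normalization claim follows from the upper bound in (ii).

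For (ii), the lower bound holds because $r>0$ gives $u\in B^U(u,r)$, which contributes $\LP(G)/n$. For the upper bound, note that each edge of $G$ contributes at most its full length. A partial contribution satisfies $r-d(u,x)\le d(x,y)$ when $y\notin B^U(u,r)$, since $d(u,y)\ge r$ together with the triangle inequality gives $r-d(u,x)\le d(u,y)-d(u,x)\le d(x,y)$. The total is thus at most $|U|\LP(G)/n+\sum_{e\in E(G)}d(e)\le 2\LP(G)$.

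For (iii), let $U'\subseteq U$ with $u\in U'$. Then $B^{U'}(u,r)=B^U(u,r)\cap U'$, so the vertex terms shrink. The edge sum is over a subfamily of the terms for $U$ with the same values, since an edge inside the smaller ball was inside the larger ball, and a boundary edge in $U'$ is a boundary edge in $U$ with the same partial contribution. All terms are nonnegative.

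For (iv), which I expect to be the only step needing care, we argue by contradiction. Let $S=B^U(u,r)$ with $s=|S|>n/2$; note $u\in S$. Every $v\in S$ has $d(u,v)<r\le n/10$, so $\sum_{v\in S}d(u,v)<s\cdot n/10$. The spreading constraint with center $u$ and set $S$ demands $\sum_{v\in S}d(u,v)\ge(s^2-1)/4$. It remains to check that $(s^2-1)/4\ge sn/10$ whenever $s>n/2$, which contradicts the strict inequality.

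Since $s^2-1\ge 2sn/5$ is equivalent to $s-1/s\ge 2n/5$, and the left side is increasing in $s$, it suffices to check $s\ge(n+1)/2$ when $s>n/2$. In that case $s-1/s\ge (n+1)/2-2/(n+1)\ge 2n/5$ for $n\ge2$, since $n/10+1/2\ge 2/(n+1)$. This gives the contradiction and proves (iv).
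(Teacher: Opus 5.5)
Your proof is correct and follows the paper's argument essentially item by item. For (i) you use affine pieces of slope $|\delta_{G[U]}(B^U(u,r))|$ with upward jumps via the triangle inequality, for (ii) and (iii) you count the vertex and edge terms, and for (iv) you derive a spreading contradiction. The only differences are cosmetic: you make explicit why a partial boundary term is at most the edge length, and in (iv) you apply spreading to the whole ball using monotonicity of $s-1/s$, whereas the paper applies it to a subset of size $\lfloor n/2\rfloor+1$.
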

\begin{proof}
Write $B=B^U(u,r)$.
\begin{enumerate}[label=(\roman*),leftmargin=*]
\item Between two consecutive values of $d(u,\cdot)$, the ball $B$ does not change, and every edge $xy$ of $G[U]$ with $x\in B$ and $y\notin B$ contributes $r-d(u,x)$, which grows with slope one. So the derivative of $\Vol_u^U$ is $|\delta_{G[U]}(B)|$. When $r$ reaches $d(u,y)$ and $y$ enters the ball, the term $\LP(G)/n$ of $y$ is added, and for every edge $xy$ with $x$ already in the ball, the partial term $d(u,y)-d(u,x)$ is replaced by the full length $d(x,y)$; this is an increase, since $d(u,y)-d(u,x)\le d(x,y)$ by the triangle inequality. So $\Vol_u^U$ only jumps upward.
\item The vertex terms sum to at most $\LP(G)$, and the edge terms to at most $\sum_{xy\in E(G)}d(x,y)=\LP(G)$. Since $u\in B$, the volume is at least $\LP(G)/n$.
\item Removing a vertex $y\ne u$ from $U$ removes its own term and the terms of the edges incident to $y$, and leaves all other terms unchanged.
\item Suppose that the ball contains a set $X$ of size $s=\lfloor n/2\rfloor+1$. Then $s\ge2$ and $n\le2s-1$, so $\sum_{x\in X}d(u,x)<sn/10\le s(2s-1)/10\le(s^2-1)/4$, contradicting spreading; the last inequality is equivalent to $s^2+2s-5\ge0$.
\end{enumerate}
\end{proof}

\paragraph{The algorithm.}
\Cref{alg:cutwidth} is the Global Carving of \cref{sec:ov-paced}, with balls and volumes taken among the remaining vertices.

\begin{framed}
\noindent\refstepcounter{algorithm}\label{alg:cutwidth}\textbf{Algorithm \thealgorithm} (Global Carving)\\[2pt]
\textbf{Input:} a connected graph $G$ with $n\ge2$ vertices, and an optimal solution $d$ of \eqref{eq:cw-lp}.\\
\textbf{Output:} an ordered partition $(X_1,\ldots,X_k)$ of $V(G)$.
\begin{enumerate}[label=\arabic*:,leftmargin=2.2em,itemsep=1pt,topsep=4pt]
\item Initialize $U\leftarrow V(G)$, $\rho\leftarrow n/10$, and $i\leftarrow0$.
\item \textbf{while} $U\ne\emptyset$ \textbf{do}
\item \quad $i\leftarrow i+1$, $U_i\leftarrow U$, $\rho_i\leftarrow\rho$, $K_i\leftarrow\max_{u\in U}\Vol_u^U(\rho)$, and $w_i\leftarrow\frac{n}{160}\cdot\frac{K_i}{\LP(G)}$.
\item \quad Let $u_i\in U$ be a vertex maximizing $\Vol_u^U(\rho-w_i)$.
\item \quad Let $r_i\in(\rho-w_i,\rho)$ be a radius that is not an event of $\Vol_{u_i}^U$ and minimizes $|\delta_{G[U]}(B^U(u_i,r_i))|$.
\item \quad $X_i\leftarrow B^U(u_i,r_i)$, $U\leftarrow U\setminus X_i$, and $\rho\leftarrow\rho-w_i$.
\item \textbf{return} $(X_1,\ldots,X_k)$.
\end{enumerate}
\end{framed}

\noindent Line~5 is well defined as long as $\rho>0$, since then $K_i\ge\LP(G)/n>0$ by \cref{lem:cw-volume}(ii), and the boundary is constant between consecutive events; we show in \cref{lem:carving-props} that $\rho$ stays above $n/20$. The window of iteration $i$ is $(\rho_{i+1},\rho_i)$, where $\rho_{i+1}=\rho_i-w_i$, and we write $W_i=\Vol_{u_i}^{U_i}(r_i)$ for the volume of $X_i$ when it is carved.

\subsection{Analysis}
We follow \cref{sec:ov-paced}. The first lemma collects the facts about volumes that we use.

\begin{lemma}[Volumes]\label{lem:heaviest}
For every iteration $i$:
\begin{enumerate}[label=(\roman*),leftmargin=*]
\item $\Vol(X_i)\le W_i\le K_i$. In words, a piece is no heavier than its ball when carved, which is no heavier than the heaviest ball.
\item If $i<k$, then $K_{i+1}\le\Vol_{u_i}^{U_i}(\rho_{i+1})\le W_i$, and hence $W_{i+1}\le W_i$. In words, after iteration $i$, no ball of the new radius is heavier than the ball of $u_i$ at this radius.
\end{enumerate}
Moreover, $\sum_iW_i\le2\LP(G)$.
\end{lemma}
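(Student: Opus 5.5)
For (i), we compare the three quantities directly from the definitions. The piece is $X_i=B^{U_i}(u_i,r_i)$. By \eqref{eq:residual-volume}, $W_i=\Vol_{u_i}^{U_i}(r_i)$ equals $\Vol(X_i)$ plus a nonnegative boundary term, so $\Vol(X_i)\le W_i$. For the other inequality, note that $r_i<\rho_i$ and $\Vol_{u_i}^{U_i}$ is nondecreasing by \cref{lem:cw-volume}(i). Hence $W_i\le\Vol_{u_i}^{U_i}(\rho_i)\le\max_{u\in U_i}\Vol_u^{U_i}(\rho_i)=K_i$.

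For (ii), fix $i<k$ and any $u\in U_{i+1}=U_i\setminus X_i$. By definition, $K_{i+1}=\max_{u\in U_{i+1}}\Vol_u^{U_{i+1}}(\rho_{i+1})$. By \cref{lem:cw-volume}(iii), passing from $U_i$ to $U_{i+1}$ only removes vertices other than $u$, so $\Vol_u^{U_{i+1}}(\rho_{i+1})\le\Vol_u^{U_i}(\rho_{i+1})$. By the choice of $u_i$ in Line~4, which maximizes volume at radius $\rho_i-w_i=\rho_{i+1}$, this is at most $\Vol_{u_i}^{U_i}(\rho_{i+1})$. Since $\rho_{i+1}<r_i$ and volume is monotone in the radius, $\Vol_{u_i}^{U_i}(\rho_{i+1})\le W_i$. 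Combining these with (i) applied at iteration $i+1$ gives $W_{i+1}\le K_{i+1}\le W_i$.

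For the sum bound, we charge each term of $W_i$ to vertices and edges of $G$. Every vertex term $\LP(G)/n$ in $W_i$ belongs to a vertex of $X_i$, and the $X_i$ are disjoint, so these terms total at most $\LP(G)$. An edge term in $W_i$ comes from an edge $xy$ of $G[U_i]$ with $x\in X_i$. Such an edge contributes either its full length $d(x,y)$, when $y\in X_i$, or the partial length $r_i-d(u_i,x)$, which is less than $d(x,y)$ by the triangle inequality because $d(u_i,y)\ge r_i$. Each edge lies in $G[U_i]$ with an endpoint in $X_i$ only for the iteration $i$ in which its first endpoint is removed, since afterwards it leaves $G[U]$. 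So each edge is charged at most once, with at most $d(x,y)$, and the edge terms total at most $\LP(G)$.

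There is no real obstacle here. The only point needing care is the boundary term in the sum bound: a boundary edge contributes to $W_i$, and we must check that it is not also counted in a later $W_j$. This holds because its endpoint in $X_i$ is removed at iteration $i$, so the edge is no longer in $G[U_j]$ for $j>i$.
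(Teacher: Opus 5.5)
Your proposal is correct and follows essentially the same argument as the paper. Parts (i) and (ii) use the definition of the residual volume, monotonicity in the radius, \cref{lem:cw-volume}(iii), and the maximizing choices in Lines~3--4. The sum bound charges each vertex term to the vertex's own piece, and each edge, with at most its length, to the iteration in which its first endpoint is removed. Your explicit triangle-inequality check that a partial boundary term is at most the edge length just spells out what the paper states in one line.
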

\begin{proof}
\begin{enumerate}[label=(\roman*),leftmargin=*]
\item Recall from \eqref{eq:static-volume} that $\Vol(X_i)$ counts $\LP(G)/n$ for every vertex of $X_i$ and the full length of every edge inside $X_i$. The volume $W_i$ of the ball $X_i=B^{U_i}(u_i,r_i)$ counts the same terms, plus part of the length of every edge leaving $X_i$ in $G[U_i]$; so $\Vol(X_i)\le W_i$. Since $\Vol_{u_i}^{U_i}$ is nondecreasing and $r_i<\rho_i$, the choice of $K_i$ in Line~3 gives $W_i\le\Vol_{u_i}^{U_i}(\rho_i)\le K_i$.
\item Since $\rho_{i+1}<r_i$, we have $\Vol_{u_i}^{U_i}(\rho_{i+1})\le W_i$. Every $u\in U_{i+1}$ satisfies $\Vol_u^{U_{i+1}}(\rho_{i+1})\le\Vol_u^{U_i}(\rho_{i+1})\le\Vol_{u_i}^{U_i}(\rho_{i+1})$, by \cref{lem:cw-volume}(iii) and the choice of $u_i$ in Line~4, and $K_{i+1}$ is the largest of these volumes. Finally, $W_{i+1}\le K_{i+1}$ by (i).
\end{enumerate}
For the sum, every vertex contributes $\LP(G)/n$ only to the volume of its own piece. An edge contributes at most its length, and only to the volume of the piece containing the endpoint removed first, since afterwards it is no longer an edge of the remaining graph. So $\sum_iW_i\le\LP(G)+\LP(G)$.
\end{proof}

\begin{lemma}[Radius]\label{lem:carving-props}
The radius never drops below $n/20$, that is, $\rho_{i+1}\ge n/20$ for every $i$. Consequently, \cref{alg:cutwidth} terminates after at most $n$ iterations, in polynomial time, and $|X_i|\le n/2$ for every $i$.
\end{lemma}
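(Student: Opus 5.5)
The plan is to follow the ``radius does not run out'' paragraph of \cref{sec:ov-paced}, made rigorous by induction on the iterations. The core claim is $\sum_{j\le i}w_j\le n/20$ for every completed iteration $i$. Then $\rho_{i+1}=n/10-\sum_{j\le i}w_j\ge n/20$.

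The inequality needs an invariant: while $\rho_j\ge n/20>0$ at the start of iteration $j$, Line~5 is well defined, and every quantity of \cref{lem:heaviest} makes sense. So I would argue by induction: assume $\rho_j\ge n/20$ for all $j\le i$, and bound $w_1+\cdots+w_i$ using only iterations $1,\ldots,i$.

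The key step is the per-iteration inequality
\[
K_j\le 2W_j+2(K_j-K_{j+1})\qquad\text{for } j<i,
\]
together with a separate treatment of the last term $K_i$. For $j<i$, there are two cases according to $V_j:=\Vol_{u_j}^{U_j}(\rho_{j+1})$.
\begin{itemize}
\item If $V_j\ge K_j/2$, then $K_j\le 2V_j\le 2W_j$ by \cref{lem:heaviest}(ii), since $V_j\le W_j$.
\item Otherwise, \cref{lem:heaviest}(ii) gives $K_{j+1}\le V_j<K_j/2$, hence $K_j<2(K_j-K_{j+1})$.
\end{itemize}
For the last index, \cref{lem:heaviest}(i) and (ii) give $K_i\le 2\LP(G)$. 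To avoid a case split at the final iteration, I would instead bound $K_i$ by the same case analysis, using $V_i\le W_i$ and setting $K_{i+1}$ formally to the heaviest ball volume after removal. This is $0$ if $U$ becomes empty, and still $\le V_i$ in either case.

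Summing over $j\le i$ gives
\[
\sum_{j\le i}K_j\le 2\sum_j W_j+2K_1\le 4\LP(G)+4\LP(G)=8\LP(G),
\]
because the differences telescope to at most $K_1\le2\LP(G)$ by \cref{lem:cw-volume}(ii), and $\sum_jW_j\le2\LP(G)$ by \cref{lem:heaviest}. Therefore
\[
\sum_{j\le i}w_j=\frac{n}{160\LP(G)}\sum_{j\le i}K_j\le\frac n{20}.
\]

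The remaining consequences then follow directly.
\begin{itemize}
\item Every iteration removes $X_i\ni u_i$, a nonempty set, since $u_i$ lies at distance $0<r_i$ from itself. So there are at most $n$ iterations.
\item Each iteration is polynomial. Computing the volumes is explicit, and for Line~5 the boundary is piecewise constant between the finitely many events, so it suffices to test one radius per event interval inside the window.
\item Since $r_i<\rho_i\le n/10$, \cref{lem:cw-volume}(iv) gives $|X_i|\le n/2$.
\end{itemize}

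The main obstacle I expect is the bookkeeping at the final iteration and in the induction order. \Cref{lem:heaviest} is stated under the assumption that the iterations are well defined, so I must make sure it is applied only to iterations already known to have $\rho_j>0$. I must also make sure the telescoping handles the last step, where $K_{k+1}$ is not defined.
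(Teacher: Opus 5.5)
Your proof is correct and follows the paper's argument. You use the same two-case inequality $K_j\le 2W_j+2(K_j-K_{j+1})$ from \cref{lem:heaviest}(ii), telescope it together with $K_1\le 2\LP(G)$ and $\sum_j W_j\le 2\LP(G)$ to obtain $\sum_{j\le i}K_j\le 8\LP(G)$, and derive the same consequences; your explicit induction on $\rho_j\ge n/20$ and the formal $K_{i+1}$ only make precise the well-definedness bookkeeping that the paper leaves implicit.
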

\begin{proof}
As in \cref{sec:ov-paced}, a wide window is paid for either by a heavy piece or by a large drop of the heaviest volume. Let $j<k$. If $\Vol_{u_j}^{U_j}(\rho_{j+1})\ge K_j/2$, then $K_j\le2W_j$ by \cref{lem:heaviest}(ii); otherwise, $K_{j+1}<K_j/2$ by the same lemma. In both cases,
\[
K_j\le2W_j+2(K_j-K_{j+1}).
\]
Fix $i$. Summing over $j<i$, adding $K_i$, and using $K_i\le K_1\le2\LP(G)$, we get
\[
\sum_{j\le i}K_j\le2\sum_{j<i}W_j+2K_1\le8\LP(G),
\qquad\text{so}\qquad
\rho_{i+1}=\frac n{10}-\frac{n}{160\,\LP(G)}\sum_{j\le i}K_j\ge\frac n{20}.
\]
Hence every window lies in $(n/20,n/10)$. Since $r_i<n/10$, we have $|X_i|\le n/2$ by \cref{lem:cw-volume}(iv). Since $r_i>0$, every iteration removes $u_i\in X_i$, so there are at most $n$ iterations, and each of them evaluates volumes at given radii and boundaries between consecutive events, which takes polynomial time.
\end{proof}

\begin{lemma}[Prefix bound]\label{lem:cw-prefix}
For every $i$,
\begin{equation}\label{eq:cw-prefix}
\sum_{j\le i}|\delta_G(X_j)|\le320\,\frac{\LP(G)}{n}\log\frac{2e\,\LP(G)}{W_i}.
\end{equation}
\end{lemma}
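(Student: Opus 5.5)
We follow the level argument of \cref{sec:ov-paced}. Define the level $L_j=\log\frac{2e\,\LP(G)}{K_j}$ for each iteration $j$. By \cref{lem:cw-volume}(ii), $K_j\le2\LP(G)$, so $L_1\ge1$. By \cref{lem:heaviest}(ii), $K_{j+1}\le K_j$, so the levels are nondecreasing. By \cref{lem:heaviest}(i), $W_i\le K_i$, so $L_i\le\log\frac{2e\,\LP(G)}{W_i}$. It therefore suffices to prove $\sum_{j\le i}|\delta_G(X_j)|\le320\frac{\LP(G)}{n}L_i$.

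\emph{Cost of one step.} In iteration $j$, set $\Delta_j=\log\bigl(\Vol_{u_j}^{U_j}(\rho_j)/\Vol_{u_j}^{U_j}(\rho_{j+1})\bigr)\ge0$. Since $\Vol_{u_j}^{U_j}$ is a volume function with boundary cost $|\delta_{G[U_j]}(B^{U_j}(u_j,r))|$ (\cref{lem:cw-volume}(i)) and its jumps are nonnegative, the integral of the boundary cost over the window $(\rho_{j+1},\rho_j)$ is at most $\Vol_{u_j}^{U_j}(\rho_j)-\Vol_{u_j}^{U_j}(\rho_{j+1})$. The minimizing radius $r_j$ of Line~5 therefore satisfies
\[
|\delta_{G[U_j]}(X_j)|\le\frac{\Vol_{u_j}^{U_j}(\rho_j)}{w_j}\bigl(1-e^{-\Delta_j}\bigr)\le\frac{K_j}{w_j}\min\{1,\Delta_j\}=160\,\frac{\LP(G)}{n}\min\{1,\Delta_j\}.
\]
\emph{Level rise.} For $j<k$, \cref{lem:heaviest}(ii) gives $K_{j+1}\le\Vol_{u_j}^{U_j}(\rho_{j+1})=e^{-\Delta_j}\Vol_{u_j}^{U_j}(\rho_j)\le e^{-\Delta_j}K_j$. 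Hence $L_{j+1}-L_j\ge\Delta_j$, and so $|\delta_{G[U_j]}(X_j)|\le160\frac{\LP(G)}{n}(L_{j+1}-L_j)$ for $j<i$. For $j=i$ we use the bound $160\frac{\LP(G)}{n}\le160\frac{\LP(G)}{n}L_1$.

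\emph{Charging edges.} We must relate $\sum_{j\le i}|\delta_G(X_j)|$ to the residual cuts. An edge with exactly one endpoint in some $X_j$, $j\le i$, goes between two distinct pieces $X_a$ and $X_b$ with $a<b$, where $a\le i$. It lies in $G[U_a]$ and crosses $X_a$, so it is counted in $|\delta_{G[U_a]}(X_a)|$. On the left side it is counted at most twice, once for $a$ and possibly once for $b$. This gives
\[
\sum_{j\le i}|\delta_G(X_j)|\le2\sum_{j\le i}|\delta_{G[U_j]}(X_j)|.
\]
Telescoping then yields
\[
\sum_{j\le i}|\delta_G(X_j)|\le320\,\frac{\LP(G)}{n}\bigl((L_i-L_1)+L_1\bigr)=320\,\frac{\LP(G)}{n}L_i.
\]

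The main obstacle is the window averaging step. It needs $\Vol_{u_j}^{U_j}(\rho_{j+1})>0$, which holds because $\rho_{j+1}\ge n/20>0$ by \cref{lem:carving-props} and $u_j$ is in its own ball. It also needs care that the minimizer over event-free radii achieves at most the average. This holds because the boundary cost is piecewise constant between events, so its infimum over the window is attained on a nondegenerate subinterval and is at most the average over the window.
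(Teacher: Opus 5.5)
Your proposal is correct and follows the paper's proof essentially step for step: you use the same level $L_j$, the same window-averaging bound $|\delta_{G[U_j]}(X_j)|\le 160\frac{\LP(G)}{n}\min\{1,\Delta_j\}$, the same level rise $L_{j+1}\ge L_j+\Delta_j$, and the same factor-two charging of each inter-piece edge to the earlier piece. One small citation point: the monotonicity $K_{j+1}\le K_j$ needs parts (i) and (ii) of \cref{lem:heaviest} together, not (ii) alone, but this does not affect the argument.
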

\begin{proof}
Every cut is paid for by the rise of the level, as in \cref{sec:ov-paced}. Let $L_j=\log\frac{2e\,\LP(G)}{K_j}$ be the level at the start of iteration $j$; it is at least one by \cref{lem:cw-volume}(ii).

\emph{The cost of a step.} Fix an iteration $j$, and let
\[
\Delta_j=\log\frac{\Vol_{u_j}^{U_j}(\rho_j)}{\Vol_{u_j}^{U_j}(\rho_{j+1})}\ge0
\]
be the growth of the deficit of $u_j$ across its window; it is finite since $\rho_{j+1}>0$ by \cref{lem:carving-props}. By \cref{lem:cw-volume}(i), $\Vol_{u_j}^{U_j}$ only jumps upward, and away from the events its derivative is the boundary of the ball in $G[U_j]$. So the integral of the boundary over the window is at most $\Vol_{u_j}^{U_j}(\rho_j)-\Vol_{u_j}^{U_j}(\rho_{j+1})$, and Line~5 chooses a radius with the smallest boundary. Since $\Vol_{u_j}^{U_j}(\rho_j)\le K_j$, iteration $j$ cuts
\[
|\delta_{G[U_j]}(X_j)|
\le\frac{\Vol_{u_j}^{U_j}(\rho_j)}{w_j}\bigl(1-e^{-\Delta_j}\bigr)
\le\frac{K_j}{w_j}\min\{1,\Delta_j\}
=160\,\frac{\LP(G)}{n}\min\{1,\Delta_j\}
\]
edges. If $j<k$, \cref{lem:heaviest} gives $K_{j+1}\le\Vol_{u_j}^{U_j}(\rho_{j+1})=e^{-\Delta_j}\Vol_{u_j}^{U_j}(\rho_j)\le e^{-\Delta_j}K_j$, so the level rises by at least $\Delta_j$: $L_{j+1}\ge L_j+\Delta_j$.

\emph{The prefix bound.} Now fix $i$. For $j<i$, iteration $j$ cuts at most $160\frac{\LP(G)}{n}(L_{j+1}-L_j)$ edges, and these bounds add up to $160\frac{\LP(G)}{n}(L_i-L_1)$. Iteration $i$ cuts at most $160\frac{\LP(G)}{n}\le160\frac{\LP(G)}{n}L_1$ edges. An edge between $X_p$ and $X_q$ with $p<q$ lies in $\delta_{G[U_p]}(X_p)$, and it is counted at most twice on the left-hand side of \eqref{eq:cw-prefix}, and only if $p\le i$. Hence
\[
\sum_{j\le i}|\delta_G(X_j)|
\le2\sum_{j\le i}|\delta_{G[U_j]}(X_j)|
\le320\,\frac{\LP(G)}{n}\,L_i
\le320\,\frac{\LP(G)}{n}\log\frac{2e\,\LP(G)}{W_i},
\]
where the last step uses $W_i\le K_i$.
\end{proof}

\begin{lemma}\label{lem:carving-wlp}
The carving order is a $320$-weak local partition.
\end{lemma}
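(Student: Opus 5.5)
The plan is to verify the two conditions of \cref{def:wlp} directly from the lemmas already proved about \cref{alg:cutwidth}.

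First, \cref{alg:cutwidth} outputs an ordered partition of $V(G)$. Every vertex removed from $U$ lies in exactly one piece, and the loop runs until $U=\emptyset$. By \cref{lem:carving-props} it terminates after at most $n$ iterations, since every $X_i$ contains its center $u_i$ (because $r_i>0$). The size condition $|X_i|\le n/2$ for every $i$ is also part of \cref{lem:carving-props}. That lemma shows every radius $r_i$ lies in the window $(\rho_{i+1},\rho_i)\subseteq(n/20,n/10)$, so \cref{lem:cw-volume}(iv) applies.

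Second, the prefix inequality \eqref{eq:wlp} follows by chaining \cref{lem:cw-prefix} with \cref{lem:heaviest}(i). \Cref{lem:cw-prefix} gives
\[
\sum_{j\le i}|\delta_G(X_j)|\le 320\,\frac{\LP(G)}{n}\log\frac{2e\,\LP(G)}{W_i}.
\]
\Cref{lem:heaviest}(i) gives $\Vol(X_i)\le W_i$. Since $x\mapsto\log(2e\,\LP(G)/x)$ is decreasing, this yields $\log\frac{2e\,\LP(G)}{W_i}\le\log\frac{2e\,\LP(G)}{\Vol(X_i)}$. Together these give \eqref{eq:wlp} with $\alpha=320$ for every $i\in[k]$.

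There is no real obstacle here, because the lemma is essentially a packaging of \cref{lem:carving-props,lem:cw-prefix,lem:heaviest}. The one point to be careful about is which volume appears in \eqref{eq:wlp}. \Cref{def:wlp} uses the static volume $\Vol(X_i)$ of \eqref{eq:static-volume}, measured in $G$. \Cref{lem:cw-prefix} instead uses the residual ball volume $W_i$ measured in $G[U_i]$. The inequality $\Vol(X_i)\le W_i$ from \cref{lem:heaviest}(i) converts one into the other, and it holds because the residual ball volume counts every vertex and internal-edge term of $\Vol(X_i)$, plus nonnegative partial boundary terms.
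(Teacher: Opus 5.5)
Your proposal is correct and follows the paper's proof exactly: the size condition comes from \cref{lem:carving-props}, and the prefix bound comes from combining \cref{lem:cw-prefix} with $\Vol(X_i)\le W_i$ from \cref{lem:heaviest}(i) and the monotonicity of the logarithm. Your remark on the static volume $\Vol(X_i)$ versus the carving-time volume $W_i$ is precisely the one point the paper's two-line proof relies on.
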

\begin{proof}
The size condition is part of \cref{lem:carving-props}. Since $\Vol(X_i)\le W_i$ by \cref{lem:heaviest}, \cref{lem:cw-prefix} gives \eqref{eq:wlp} with $\alpha=320$.
\end{proof}

\begin{proof}[Proof of the cutwidth statement in \cref{thm:main-layout}]
Combine \cref{lem:wlp-to-cw,lem:carving-wlp}. \Cref{alg:cutwidth} runs in polynomial time by \cref{lem:carving-props}; the LP and the numerical implementation are discussed in \cref{app:implementation}.
\end{proof}

\section{Flow-Expander Decomposition}\label{sec:expander}
Let $G$ be an undirected graph, and let $A:V(G)\to\R_{\ge1}$ be a rational node-weighting; so every vertex has weight at least one. Write $A(Y)=\sum_{v\in Y}A(v)$. A demand $D$ on the vertex pairs is \emph{$A$-respecting} if $\sum_{v\ne u}D(u,v)\le A(u)$ for every vertex $u$. Following~\cite{BJS26}, we say that $A$ is \emph{$\phi$-flow-expanding in $G$} if every $A$-respecting demand can be routed in $G$ with congestion at most $1/\phi$. For $A=\deg_G$, this says that $G$ is a $\phi$-flow expander. To avoid clutter, in this section $\log\log x$ stands for $\max\{1,\log\log x\}$. We prove the following weighted version of \cref{thm:main-expander}.

\begin{theorem}\label{thm:weighted-expander}
Let $G$ be a graph with a node-weighting $A$ as above, and write $|A|=A(V(G))$. For every rational $\phi>0$, one can delete in polynomial time $O(\phi|A|\log|A|\log\log|A|)$ edges so that $A$ is $\phi$-flow-expanding in every remaining connected component.
\end{theorem}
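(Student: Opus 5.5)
The plan is to derive \cref{thm:weighted-expander} from a weighted fine-grained flow-cut gap, the general form of \cref{thm:fine-grained} referred to as \cref{lem:size-cut}, followed by a Seymour-style recursion.

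\textbf{Step 1: the cut lemma.} The key lemma states the following. Given a connected graph $H$ and a node-weighting $A$ with all weights at least one, one can in polynomial time either certify that $A$ is $\phi$-flow-expanding in $H$, or output $S$ with $0<A(S)\le A(V(H))/2$ and
\[
|\delta_H(S)|\le C\phi\log\log|A|\cdot A(S)\log\frac{A(V(H))}{A(S)}.
\]

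\textbf{Step 2: the recursion.} Starting from $G$, I repeatedly take a connected component $H$ and apply the lemma with $A$ restricted to $V(H)$. If it certifies expansion, $H$ is final. Otherwise I delete $\delta_H(S)$ and recurse on the components of $H[S]$ and $H-S$. Every deletion step strictly refines the partition, so there are at most $n$ such steps and the running time is polynomial.

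\textbf{Step 3: charging the deleted edges.} I charge $\delta_H(S)$ to the vertices of the smaller side $S$, giving each $v\in S$ the amount $C\phi\log\log|A|\cdot A(v)\log\frac{A(V(H))}{A(S)}$. Follow a fixed vertex $v$ through the recursion. Each time it is charged, the weight of its component drops from $A(V(H))$ to at most $A(S)$. The charges are therefore $A(v)$ times a telescoping sum of logarithms, totalling at most $A(v)\log|A|$, since all component weights lie between $1$ and $|A|$. When $v$ lands on the larger side it pays nothing, and its component weight can only decrease. Summing over $v$ gives $O(\phi|A|\log|A|\log\log|A|)$ deleted edges. When a component is final, $A$ restricted to it is $\phi$-flow-expanding, which is the required conclusion. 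One subtlety is that we must route $A$-respecting demands inside each final component $H$. The restriction of $A$ to $V(H)$ is exactly what the certificate handles, so this is consistent.

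\textbf{Step 4: proving the lemma (the main obstacle).} I would write the LP for the maximum concurrent routing of $A$-respecting demands; by duality this is the spreading-type LP on the metric side. If the optimum shows congestion at most $1/\phi$, we certify expansion. Otherwise the dual gives a metric $d$ with $\sum_{e}d(e)\le\phi\cdot(\text{spreading mass})$, together with a spreading condition of roughly the form $\sum_v A(v)d(u,v)\gtrsim A$-mass for heavy sets. From this I derive, as in \cref{lem:cw-volume}(iv), a radius $R$ such that every ball of radius $R$ has $A$-weight at most $|A|/2$.

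Then I run Greedy Seymour Carving on the radius range $[R/2,R]$, split into $\Theta(\log\log|A|)$ annuli. Each center keeps a pointer that is moved inward until the annulus is stable for both the volume deficit and the weight deficit $\log(e|A|/A(\text{ball}))$. By \cref{lem:persistence} applied to both deficits, the pointers never run out. At each step I pick the center with the largest key and carve it using \cref{lem:stable-annulus}. Each piece $X_j$ then has boundary at most $O(\log\log|A|/R)\cdot\Vol(X_j)F_j$, and its weight deficit is within a factor of $2$ of its key.

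Summing over all pieces, $\sum_j|\delta(X_j)|\le O(\log\log|A|/R)\sum_j\Vol_jF_j$. I need to compare this with $\phi\sum_j A(X_j)\log(|A|/A(X_j))$. The plan is to relate the volume normalization, roughly $\phi|A|/R$ times $A$-weights, to the metric scale $R$. The pieces are ordered by weight and every piece is small, so averaging yields some piece, or a union of a prefix of pieces kept below half the weight, whose cut satisfies the inequality. The delicate part is the union: making the prefix-union of carved pieces satisfy the bound with $A(\text{prefix})\le|A|/2$ requires the ordering by weight keys, and choosing the correct volume normalization mixing $A$ and edge lengths. This is where I expect most of the work, and I would first try defining the vertex term of the volume as $\phi A(v)/R$-scaled so that $\Vol\approx$ weight.
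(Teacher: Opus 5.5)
If you may invoke \cref{lem:size-cut}, your Steps 1--3 are exactly the paper's proof of this theorem. The paper first routes the single product demand $\Pi(u,v)=A(u)A(v)/|A|$ at congestion $1/(2\phi)$ and uses \cref{lem:product-certificate} to certify expansion; it does not try to route all $A$-respecting demands. Otherwise it cuts along $S$, charges $A(v)\log\frac{M}{A(S)}$ to each $v\in S$, and telescopes using $M_{t+1}\le A(S_t)$ and $A(S_k)\ge1$.

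Your Step 4 sketch of the lemma, however, has two genuine gaps. First, the dual does not give, ``as in \cref{lem:cw-volume}(iv)'', a radius at which every ball is light. The flow dual only provides $\sum_e\ell_e\le1$ and the aggregate bound $\sum_{\{u,v\}}\Pi(u,v)d(u,v)>1/(2\phi)$. There is no per-center spreading constraint as in the cutwidth LP. A heavy cluster joined by zero-length edges and separated from the rest by a sparse cut puts almost all the weight into a ball of any radius. The paper handles this case separately (\cref{lem:heavy-ball}). With $R=1/(16\phi|A|)$, suppose a ball $Y$ of radius $R$ has $A(Y)\ge|A|/3$, and let $h=d(\cdot,Y)$. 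Then $d(u,v)\le h(u)+h(v)+2R$ turns the dual bound into $\sum_vA(v)h(v)>1/(4\phi)$. A threshold cut $\{h\le\tau\}$ then yields $|\delta(S)|\le8\phi A(S)$ for the lighter side. The carving is run only when all balls of radius $R$ weigh less than $|A|/3$. In that case every piece is itself lighter than $|A|/2$, so no prefix union is ever needed.

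Second, the final comparison you propose loses a logarithm. Comparing $\sum_j\Vol_jF_j$ directly with $\sum_jA(X_j)\log(|A|/A(X_j))/|A|$ cannot work. About $\sqrt{|A|}$ pieces of weight $1$ and volume about $|A|^{-1/2}$ each make the first sum $\Theta(\log|A|)$. The second sum stays $O(1)$ if the rest of the weight lies in a few large pieces. The missing idea is to divide each piece's cost by its weight deficit $\Gamma(X_j)=\log(e|A|/A(X_j))$ before summing. Take vertex terms $A(v)/|A|$, so that all carving-time volumes together are at most $2$, and put $p=\Vol_j/2$ and $q=A(X_j)/|A|$. The inequality $\frac{p\log(e/p)}{\log(e/q)}\le p+\frac qe$ then gives $\sum_j\Vol_jF_j/\Gamma(X_j)\le\sum_j\bigl(\Vol_j+\frac{2A(X_j)}{e|A|}\bigr)=O(1)$. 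Hence $\sum_j|\delta_G(X_j)|/\Gamma(X_j)=O(\phi|A|\log\log|A|)$. Averaging with the weights $A(X_j)$, which sum to $|A|$, produces a single piece with $|\delta_G(X_j)|=O(\phi\log\log|A|)\,A(X_j)\Gamma(X_j)$. The greedy order by weight keys is used precisely here, not for building a prefix. Under the $1/\Gamma$ weighting, an edge from $X_i$ to a later piece $X_j$ is paid $1/\Gamma(X_i)$ in the residual graph. In $G$ it costs $1/\Gamma(X_i)+1/\Gamma(X_j)\le3/\Gamma(X_i)$, since $\Gamma(X_i)\le2\Gamma(X_j)$.
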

The theorem can be extended to edge-weighted graphs, with the total weight of the deleted edges in place of their number. Moreover, for every rational node-weighting $A\ge0$, one can delete $O(\phi|A|\log n\log\log n)$ edges so that $A$ is $\phi$-flow-expanding in every remaining component (\cref{app:nonnegative}).

\begin{proof}[Proof of \cref{thm:main-expander}]
Isolated vertices are trivially $\phi$-flow expanders, so we may remove them and let $A=\deg_G$, so that $|A|=2m\le2n^2$ and $\log|A|=O(\log n)$. If $H$ is a remaining component, every $\deg_H$-respecting demand is also $A$-respecting, so $H$ is a $\phi$-flow expander.
\end{proof}

\subsection{The flow LP and its dual}
Let $G$ be a connected graph, and write $|A|=A(V(G))$. The \emph{product demand} is $\Pi(u,v)=A(u)A(v)/|A|$. It is $A$-respecting, since every vertex $v$ has total demand $\sum_{u\ne v}\Pi(u,v)\le A(v)$; summing over $v$, the total demand $\sum_{\{u,v\}}\Pi(u,v)$ over all pairs is at most $|A|/2$.

\begin{lemma}\label{lem:product-certificate}
If $\Pi$ can be routed in $G$ with congestion at most $1/(2\phi)$, then $A$ is $\phi$-flow-expanding in $G$, that is, every $A$-respecting demand can be routed in $G$ with congestion at most $1/\phi$~\cite[Fact~2.1]{BJS26}.
\end{lemma}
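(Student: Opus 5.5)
The plan is the standard two-hop (Valiant-style) routing through a random intermediate vertex. The intermediate is chosen with probability proportional to its weight, so that any $A$-respecting demand splits into two pieces, each dominated by the product demand $\Pi$.

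Fix an $A$-respecting demand $D$ and a routing $f$ of $\Pi$ with congestion at most $1/(2\phi)$. By flow decomposition, for each unordered pair $\{x,y\}$ write $f_{xy}$ for the flow in $f$ that sends $\Pi(x,y)$ units between $x$ and $y$.

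For every pair $\{u,v\}$ and every intermediate vertex $w\in V(G)$, send $D(u,v)A(w)/|A|$ units from $u$ to $w$ and then from $w$ to $v$. Since $\sum_w A(w)/|A|=1$, concatenating these two legs routes exactly $D(u,v)$ between $u$ and $v$. When $w=u$ or $w=v$, one leg is empty and needs no flow.

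The key step is to bound the first legs arriving at an ordered pair $(u,w)$ with $u\ne w$. They come from all pairs $\{u,v\}$ and carry in total
\[
\sum_{v\ne u}D(u,v)\frac{A(w)}{|A|}\le A(u)\frac{A(w)}{|A|}=\Pi(u,w),
\]
using that $D$ is $A$-respecting. Symmetrically, the second legs at a pair $(w,v)$ carry at most $A(v)A(w)/|A|=\Pi(w,v)$.

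So all first legs together can be routed by scaling each path flow $f_{uw}$ down to the required amount, which is at most its full value. This uses, on every edge, at most the load of $f$, hence congestion at most $1/(2\phi)$. The same holds for all second legs. One detail to keep in mind: the pair $\{u,w\}$ may be used by first legs in both orientations, namely legs starting at $u$ ending at $w$ and legs starting at $w$ ending at $u$, and each orientation is separately bounded by $\Pi(u,w)$. Since direction does not matter for undirected congestion, the cleanest bookkeeping is to charge each leg to the vertex $u$ that is its endpoint in $D$. Then the first legs are charged to $(u,w)$ and the second legs to $(v,w)$, and each ordered charge is at most $\Pi$.

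Superimposing the two copies gives a routing of $D$ in which every edge carries at most twice the load of $f$. This yields congestion at most $2\cdot\frac{1}{2\phi}=\frac1\phi$.

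I expect no real obstacle. The only care needed is in the accounting just described: making sure the first-leg and second-leg loads are each dominated by one full copy of the routing of $\Pi$, rather than two copies each.
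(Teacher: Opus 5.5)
Your routing scheme and estimates are the paper's: send an $A(z)/|A|$ fraction of each pair through every intermediate $z$, bound the legs using that $D$ is $A$-respecting, and use two copies of a routing of $\Pi$. However, one step is false as written, and your closing remark repeats it. The first legs cannot in general be carried by a single copy of $f$, nor the second legs by another. For example, take $D(x,a)=A(x)$ and $D(z,b)=A(z)$, with $A(a)\ge A(x)$ and $A(b)\ge A(z)$, and let $x,z$ be the designated first endpoints. Then the first leg of $\{x,a\}$ through $z$ and the first leg of $\{z,b\}$ through $x$ both use the pair $\{x,z\}$, and together they carry $2\Pi(x,z)$. So the first legs alone can need two copies there.

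The split that works is the one your bookkeeping remark points to: by demand endpoint rather than by leg type. All legs with demand endpoint $x$ and intermediate $z$, whether first or second, carry at most $\frac{A(z)}{|A|}\sum_yD(x,y)\le\Pi(x,z)$. This bound must sum over all partners $y$ at once. Separate bounds for first and second legs would give $2\Pi(x,z)$ per orientation, hence four copies and congestion $2/\phi$. Adding the symmetric bound for demand endpoint $z$, the pair $\{x,z\}$ carries at most $2\Pi(x,z)$, which is exactly the paper's estimate. Scaling $f_{xz}$ by a factor of at most $2$ then gives congestion $1/\phi$.
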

\begin{proof}
Let $D$ be an $A$-respecting demand. Route each pair $u,v$ through every intermediate vertex $z$, sending an $A(z)/|A|$ fraction of $D(u,v)$ along the two legs $u$--$z$ and $z$--$v$. The total amount that has to be sent between a pair $x,z$ in this way is at most
\[
\frac{A(z)}{|A|}\sum_yD(x,y)+\frac{A(x)}{|A|}\sum_yD(z,y)\le\frac{2A(x)A(z)}{|A|}=2\Pi(x,z).
\]
So two copies of a routing of $\Pi$ carry all legs, and $D$ is routed with congestion at most $1/\phi$.
\end{proof}
The minimum congestion for routing $\Pi$ is the value of a multicommodity flow LP of polynomial size (\cref{app:implementation}). If it exceeds $1/(2\phi)$, LP duality gives nonnegative edge lengths $\ell$ such that
\begin{equation}\label{eq:failed-dual}
\sum_{e\in E(G)}\ell_e\le1
\qquad\text{and}\qquad
\sum_{\{u,v\}}\Pi(u,v)\,d(u,v)>\frac1{2\phi},
\end{equation}
where $d$ is the shortest-path metric of $G$ under $\ell$. The main lemma rounds these lengths into a cut.

\begin{lemma}\label{lem:size-cut}
Given lengths satisfying \eqref{eq:failed-dual}, one can find in polynomial time a set $S\subseteq V(G)$ with $0<A(S)\le|A|/2$ and
\begin{equation}\label{eq:size-cut}
|\delta_G(S)|
=O(\phi\log\log|A|)\,A(S)\log\frac{|A|}{A(S)}.
\end{equation}
\end{lemma}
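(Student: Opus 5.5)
The plan is to run Greedy Seymour Carving on the metric $d$ from \eqref{eq:failed-dual}, with volumes that mix weight and length, and then to pick one carved piece by averaging. First I fix the scale. From $\sum\Pi(u,v)d(u,v)>1/(2\phi)$ and $\sum_{\{u,v\}}\Pi(u,v)\le|A|/2$, I expect the following spreading fact: for $R=c/(\phi|A|)$ with a small absolute constant $c$, no ball of radius $R$ has weight more than $|A|/2$. The argument is by contradiction. If $B(u,R)$ had weight at least $|A|/2$, the triangle inequality through $u$ would give $\sum\Pi(v,w)d(v,w)\le\sum_v\frac{A(v)}{|A|}\sum_w A(w)\bigl(d(v,u)+d(u,w)\bigr)$. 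Then I would bound $\sum_v A(v)d(v,u)$ using that a heavy ball is close to everything in average. This step is the one I am least sure of; if needed, I would instead take the weighted median center and use $\sum_e\ell_e\le1$ only through the volumes.

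For the volumes, I would define, for remaining vertices $U$,
\[
\Vol^U_u(r)=\frac{A(B^U(u,r))}{|A|}+\text{(length of $G[U]$-edges inside, plus partial boundary lengths)}.
\]
This is a volume function with boundary cost $|\delta_{G[U]}(B)|$ and normalization $2$, and removing vertices does not increase it. I would track two deficits: $F_u(r)=\log\frac{2e}{\Vol_u^U(r)}$ and the weight deficit $G_u(r)=\log\frac{e|A|}{A(B^U(u,r))}$. Both are nonincreasing in $r$ and nondecreasing in time, with values in $[1,O(\log|A|)]$, using $A\ge1$.

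Next I split $[R/2,R]$ into $\kappa=\Theta(\log\log|A|)$ annuli and keep a persistent pointer per center. A pointer moves inward whenever its annulus is unstable for $F$ or for $G$. By \cref{lem:persistence}, applied to each deficit separately, it moves $O(\log\log|A|)$ times, so the pointers never run out. In each round I pick the center whose annulus has the largest outer weight, equivalently the smallest $G$ at the outer radius. I then grow a region inside its annulus by \cref{lem:stable-annulus} and remove the ball. Each piece $S_i$ then satisfies three properties:
\begin{itemize}
\item $A(S_i)\le|A|/2$ and $S_i\ne\emptyset$;
\item $|\delta_{G[U_i]}(S_i)|\le O(\phi|A|\log\log|A|)\,W_iF_i$, where $W_i$ is the volume of $S_i$ at carving;
\item the weight deficits satisfy $G_i\le2G_j$ for $i<j$, exactly as in \cref{sec:ov-payoff}.
\end{itemize}

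The averaging step is the main obstacle. Suppose, for contradiction, that every piece violates \eqref{eq:size-cut} with a large constant $C$. Summing over pieces, the total cut is more than $C\phi\log\log|A|\sum_iA(S_i)\log\frac{|A|}{A(S_i)}\gtrsim C\phi\log\log|A|\,|A|$, since the pieces partition $V$. On the other side, I bound $\sum_i W_iF_i$ by two terms. The weight part $\sum_i\frac{A(S_i)}{|A|}F_i$ is controlled because $F_i\le G_i+O(1)$ when weight dominates the volume. The length part is $\sum_i\ell(S_i)F_i$ with $\sum_i\ell(S_i)\le1$. Here the ordering by weight deficit together with $F$-stability lets me charge $F_i$ to $2G_{i'}$ for later pieces $i'$, so this part is at most $O(1)\cdot\frac{1}{|A|}\sum_iA(S_i)G_i$ plus $O(1)$. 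If this charging fails, I would fall back on the fact that the edge length inside $S_i$ is at most $\ell(S_i)$, and use the spreading scale $R$ to show that $\sum_iA(S_i)G_i/|A|=\Omega(1)$ dominates. Comparing the two sides yields a piece satisfying \eqref{eq:size-cut}, since $G_i=\Theta\bigl(\log\frac{|A|}{A(S_i)}\bigr)$ when $A(S_i)\le|A|/2$. Finally, the cut in the remaining graph $G[U_i]$ must be upgraded to $\delta_G(S)$. To do this, I would output the union of a prefix $S_1\cup\cdots\cup S_i$ in the spirit of \cref{lem:cw-prefix}, or argue directly with the piece whose earlier neighbors are already accounted for. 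Polynomial time follows from \cref{app:implementation}.
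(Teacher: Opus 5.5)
Your carving machinery matches what the paper uses: volumes mixing $A/|A|$ with length, the two deficits, annuli in $[R/2,R]$ with pointers forced by either deficit, selection by largest outer weight, the residual bound $O(\log\log|A|/R)\,W_iF_i$, and $\Gamma_i\le2\Gamma_j$ for $i<j$ (writing $\Gamma$ for your weight deficit $G$). But two steps of the analysis fail.

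\emph{The ``spreading fact'' is false, and choosing a different center does not help.} \eqref{eq:failed-dual} only bounds a $\Pi$-weighted average distance from below, and far but light vertices can supply all of it. Take a part carrying all but $O(L)$ of the weight, with zero-length edges, plus a pendant path of $L$ unit-weight vertices whose edges have length $1/L$. For $\phi$ a suitable constant times $1/L$ the dual condition holds, yet the radius-$R$ ball around the dense part has weight close to $|A|$. So carving alone cannot guarantee $A(S_i)\le|A|/2$, and in this regime the good set is the light far side rather than any ball. The heavy case needs its own argument. Suppose a ball $Y$ of radius $R=1/(16\phi|A|)$ has $A(Y)\ge|A|/3$, and set $h=d(\cdot,Y)$. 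From $d(u,v)\le h(u)+h(v)+2R$ you get $\sum_vA(v)h(v)>1/(4\phi)$. A uniformly random threshold on $h$, together with $\sum_e\ell_e\le1$, then gives a level set whose lighter side $S$ has $|\delta_G(S)|\le8\phi A(S)$. Only when every radius-$R$ ball has weight below $|A|/3$ should you run the carving.

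\emph{Summing the unnormalized cuts cannot yield the contradiction.} You would need $|A|\sum_iW_iF_i\lesssim\sum_iA(S_i)\Gamma_i$, and nothing you listed implies it. Consider three pieces of weight $(|A|-N)/3$ and no length, followed by $N=\sqrt{|A|}$ pieces of weight one and length $1/N$ each. This respects all your properties, including the ordering, yet $\sum_iW_iF_i\approx\log N$ while $\frac1{|A|}\sum_iA(S_i)\Gamma_i=O(1)$.

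The missing idea is to normalize each piece by its \emph{own} weight deficit and bound $\sum_i|\delta_G(S_i)|/\Gamma_i$. Apply the inequality $\frac{p\log(e/p)}{\log(e/q)}\le p+\frac qe$ for $0<p,q\le1$, with $p=W_i/2$ and $q=A(S_i)/|A|$. This gives $\sum_iW_iF_i/\Gamma_i\le\sum_i\bigl(W_i+\frac{2A(S_i)}{e|A|}\bigr)=O(1)$. Averaging with weights $A(S_i)$ then shows that the piece minimizing $|\delta_G(S_i)|/(A(S_i)\Gamma_i)$ satisfies \eqref{eq:size-cut}.

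This normalization is also where the ordering $\Gamma_i\le2\Gamma_j$ is actually needed. It settles the residual-versus-full cut issue without prefix unions, which would lose control of the weight. An edge between $S_i$ and a later $S_j$ costs $1/\Gamma_i$ in the residual sum and $1/\Gamma_i+1/\Gamma_j\le3/\Gamma_i$ in the full one, so $\sum_i|\delta_G(S_i)|/\Gamma_i\le3\sum_i|\delta_{G[U_i]}(S_i)|/\Gamma_i$.
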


\subsection{Proof of \texorpdfstring{\cref{lem:size-cut}}{the main lemma}}
Fix lengths satisfying \eqref{eq:failed-dual}, and let $R=1/(16\phi|A|)$. The easy case is when some ball of radius $R$ carries a third of the total weight. Then a much stronger guarantee holds: a single threshold cut around the ball has at most $O(\phi)A(S)$ edges, without the logarithmic factor of \eqref{eq:size-cut}. This is essentially~\cite[Lemma~4.1]{BJS26}, and a similar argument appears in~\cite{LR99}; we include a short proof.

\begin{figure}[t]
\centering
\begin{tikzpicture}[x=1cm,y=1cm,>=Latex,
  vin/.style={circle,fill=black,inner sep=1.3pt},
  vout/.style={circle,draw=black,fill=white,inner sep=1.3pt},
  ed/.style={draw=black!45},
  cut/.style={draw=red!75!black,thick}]
  \fill[blue!7] (-0.55,-0.95) rectangle (5.3,1.25);
  \node[font=\footnotesize,blue!50!black,anchor=north] at (2.7,1.25) {$S_\tau=\{v:h(v)\le\tau\}$};
  \draw[black!65,fill=blue!18] (0,0.05) ellipse (0.3 and 0.75);
  \node[font=\footnotesize] at (0,1.0) {$Y$};
  \node[vin] (x1) at (0,0.45) {};
  \node[vin] (x2) at (0,0.05) {};
  \node[vin] (x3) at (0,-0.35) {};
  \node[vin] (a) at (1.3,0.75) {};
  \node[vin] (b) at (2.0,-0.5) {};
  \node[vin] (c) at (3.0,0.3) {};
  \node[vin,label={[font=\footnotesize]below:$u$}] (d) at (4.1,-0.55) {};
  \node[vin] (e) at (4.7,0.7) {};
  \node[vout,label={[font=\footnotesize]below:$v$}] (f) at (6.1,-0.2) {};
  \node[vout] (g) at (7.2,0.55) {};
  \node[vout] (k) at (8.1,-0.6) {};
  \node[vout] (m) at (9.2,0.25) {};
  \draw[ed] (x1)--(a) (x3)--(b) (a)--(c) (b)--(c) (c)--(d) (c)--(e) (d)--(b) (f)--(g) (f)--(k) (g)--(m) (k)--(m);
  \draw[cut] (d)--(f) (e)--(f) (e)--(g);
  \foreach \n in {x1,x2,x3,a,b,c,d,e} \node[vin] at (\n) {};
  \foreach \n in {f,g,k,m} \node[vout] at (\n) {};
  \draw[densely dashed,red!75!black,thick] (5.3,-1.15) -- (5.3,1.35);
  % axis
  \draw[->] (-0.55,-1.45) -- (9.8,-1.45) node[right,font=\footnotesize] {$h$};
  \draw (0,-1.38)--(0,-1.52) node[below,font=\footnotesize] {$0$};
  \draw (9.2,-1.38)--(9.2,-1.52) node[below,font=\footnotesize] {$T$};
  \draw[red!75!black] (5.3,-1.38)--(5.3,-1.52) node[below,font=\footnotesize] {$\tau$};
  \draw[densely dotted,black!50] (d) -- (4.1,-1.45);
  \draw[densely dotted,black!50] (f) -- (6.1,-1.45);
  \draw[line width=1.6pt,red!75!black,opacity=0.5] (4.1,-1.45) -- (6.1,-1.45);
\end{tikzpicture}
\caption{The threshold cut in the proof of \cref{lem:heavy-ball}. Every vertex $v$ is placed at its distance $h(v)=d(v,Y)$ from $Y$, which sits at $0$. For a threshold $\tau\in[0,T)$, the set $S_\tau$ (shaded) contains $Y$. An edge $uv$ is cut (red) only if $\tau$ falls between $h(u)$ and $h(v)$ (the highlighted interval); for uniform $\tau$, this happens with probability at most $|h(u)-h(v)|/T\le\ell_{uv}/T$.}
\label{fig:threshold}
\end{figure}

\begin{lemma}\label{lem:heavy-ball}
Let $Y\subseteq V(G)$ have $A(Y)\ge|A|/3$ and diameter at most $2R$. One can find in polynomial time a set $S\subseteq V(G)$ with $0<A(S)\le|A|/2$ and $|\delta_G(S)|\le8\phi\,A(S)$.
\end{lemma}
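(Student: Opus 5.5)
We use the threshold-cut argument sketched in \cref{fig:threshold}. Let $h(v)=d(v,Y)=\min_{y\in Y}d(v,y)$. For $\tau\ge0$, set $S_\tau=\{v:h(v)\le\tau\}\supseteq Y$. We choose a range $[0,T)$ of thresholds and pick $\tau$ uniformly in it. An edge $uv$ is cut by $S_\tau$ only if $\tau$ lies between $h(u)$ and $h(v)$, which by the triangle inequality happens with probability at most $\ell_{uv}/T$. Hence
\[
\mathbb{E}|\delta_G(S_\tau)|\le\sum_e\ell_e/T\le 1/T.
\]
Since $S_\tau$ changes only at the finitely many values $h(v)$, we can find the best threshold deterministically in polynomial time by trying all of them.

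\emph{Choosing $T$.} We want the complement $V\setminus S_\tau$ to carry enough weight, so we take $T=R$. We first show that the weight far from $Y$ is large. Every vertex $v$ with $h(v)<R$ lies within distance $3R$ of every $y\in Y$, since the diameter of $Y$ is at most $2R$. Split the demand sum in \eqref{eq:failed-dual} according to whether the pair has an endpoint $v$ with $h(v)\ge R$.
\begin{itemize}
\item Pairs with both endpoints in $N=\{v:h(v)<R\}$ are at distance at most $6R$. They contribute at most $6R\cdot|A|/2=3/(16\phi)$.
\item For a vertex $v\notin N$, $\sum_u\Pi(u,v)d(u,v)\le A(v)\bigl(d(v,Y)+3R\bigr)$, averaging via $Y$. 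This step needs care: use $d(u,v)\le h(v)+2R+h(u)$, and note that $h(v)$ itself may be large.
\end{itemize}
Since $h$ can be unbounded, it is cleaner to argue by contradiction on $A(\overline N)$ together with the pair structure.

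The contrapositive we aim for is the following. If $A(V\setminus S_\tau)$ were small for all $\tau<R$, the demand between $Y$ (weight $\ge|A|/3$) and $\overline N$ would be $\Pi(Y,\overline N)\ge A(\overline N)/3$. The length constraint $\sum\ell\le1$ then bounds $\sum_v A(v)h(v)$ via the product structure, because $\Pi(Y,v)h(v)\le\sum_{y}\Pi(y,v)d(y,v)$. This is where we expect the real difficulty: relating the demand sum to $\int_0^\infty A(V\setminus S_\tau)\,d\tau$.

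\emph{The key computation.} Write
\[
\sum_{u,v}\Pi(u,v)d(u,v)\le\sum_{u,v}\Pi(u,v)\bigl(h(u)+h(v)+2R\bigr)\le\sum_vA(v)h(v)+R|A|.
\]
Since $R|A|=1/(16\phi)$ and the left side exceeds $1/(2\phi)$, we get
\[
\int_0^\infty A(V\setminus S_\tau)\,d\tau=\sum_vA(v)h(v)>\frac{7}{16\phi}.
\]
We use this integral together with cut costs as follows. Let $A_{\text{out}}(\tau)=A(V\setminus S_\tau)$, which is at most $2|A|/3$. Consider the ratio $|\delta(S_\tau)|/\min\{A(S_\tau),A(V\setminus S_\tau)\}$. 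Because $A(S_\tau)\ge|A|/3\ge A_{\text{out}}(\tau)/2$, the minimum is at least $A_{\text{out}}(\tau)/2$. We also have $\int_0^\infty|\delta(S_\tau)|\,d\tau\le\sum_e\ell_e\le1$. Hence some $\tau$ satisfies
\[
|\delta(S_\tau)|\le\frac{1}{7/(16\phi)}A_{\text{out}}(\tau)\le\frac{16\phi}{7}\cdot2\min\{A(S_\tau),A(V\setminus S_\tau)\}\le8\phi\min\{\cdot\}.
\]
This averaging uses a range in which $A_{\text{out}}>0$, namely $\tau<\max h$. Finally, output $S$ as the lighter of $S_\tau$ and its complement. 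Then $0<A(S)\le|A|/2$ and $|\delta_G(S)|\le8\phi A(S)$, and the cut is found by scanning the breakpoints $h(v)$.

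The main obstacle is the step bounding $d(u,v)$ by $h(u)+h(v)+2R$ and keeping the constants at $8\phi$. The case where $Y$ has zero diameter is covered by the same argument.
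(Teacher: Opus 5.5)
Your final argument (the ``key computation'' paragraph) is correct and is essentially the paper's proof: the bound $d(u,v)\le h(u)+h(v)+2R$, summed over unordered pairs, gives $\sum_vA(v)h(v)>7/(16\phi)$, and comparing $\int_0^{T}|\delta_G(S_\tau)|\,d\tau\le1$ with $\int_0^{T}A(V\setminus S_\tau)\,d\tau$ for $T=\max_vh(v)$ is exactly the paper's expectation over a uniform $\tau\in[0,T)$, followed by the same passage to the lighter side via $A(S_\tau)\ge|A|/3$ (the paper weakens $7/16$ to $1/4$ to land on $8\phi$). You should delete the earlier exploration with $T=R$, whose second bullet asserts a bound that is not valid in general and none of which is used; also, $d(u,v)\le h(u)+h(v)+2R$ is not an obstacle but a one-line triangle inequality through nearest points of $Y$.
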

\begin{proof}
Let $h(v)=d(v,Y)$ be the distance of $v$ from $Y$. We first show that, weighted by $A$, the vertices are far from $Y$ on average. Any two vertices $u,v$ are connected through their nearest points in $Y$, and $Y$ has diameter at most $2R$, so $d(u,v)\le h(u)+h(v)+2R$. By \eqref{eq:failed-dual},
\[
\frac1{2\phi}<\sum_{\{u,v\}}\Pi(u,v)\,d(u,v)
\le\sum_{\{u,v\}}\Pi(u,v)\bigl(h(u)+h(v)\bigr)+2R\sum_{\{u,v\}}\Pi(u,v)
\le\sum_vA(v)h(v)+R|A|,
\]
where the last step uses that every vertex has total demand at most $A(v)$ and that the total demand is at most $|A|/2$. Since $R|A|=\frac1{16\phi}$, we get
\[
\sum_vA(v)h(v)>\frac1{4\phi}.
\]
In particular, $T=\max_vh(v)>0$. Pick $\tau\in[0,T)$ uniformly at random, and let $S_\tau=\{v:h(v)\le\tau\}$, which contains $Y$ (\cref{fig:threshold}). An edge $uv$ is cut only if $\tau$ lies between $h(u)$ and $h(v)$, which happens with probability at most $|h(u)-h(v)|/T\le\ell_{uv}/T$. A vertex $v$ lies outside $S_\tau$ with probability $h(v)/T$. Hence
\[
\mathbb{E}\,|\delta_G(S_\tau)|\le\frac1T\sum_e\ell_e\le\frac1T
<\frac{4\phi}{T}\sum_vA(v)h(v)
=4\phi\,\mathbb{E}\,A(V(G)\setminus S_\tau).
\]
So some $\tau$ has $|\delta_G(S_\tau)|<4\phi A(V(G)\setminus S_\tau)$; in particular both sides of the cut have positive weight. Since $A(S_\tau)\ge A(Y)\ge|A|/3$, the complement has weight at most $2|A|/3\le2A(S_\tau)$, so $|\delta_G(S_\tau)|\le8\phi\min\{A(S_\tau),A(V(G)\setminus S_\tau)\}$. Let $S$ be the side of smaller weight. The set $S_\tau$ only changes at the values $h(v)$, so trying these $n$ thresholds finds $S$ deterministically.
\end{proof}
A ball of radius $R$ has diameter at most $2R$, and $\log(|A|/A(S))\ge\log2$, so \cref{lem:heavy-ball} proves \cref{lem:size-cut} whenever some ball of radius $R$ has weight at least $|A|/3$. So we may assume from now on that
\begin{equation}\label{eq:light-balls}
A(B_d(u,R))<|A|/3
\qquad\text{for every vertex }u.
\end{equation}

\paragraph{Volumes.}
For a set $U\subseteq V(G)$, a vertex $u\in U$, and $r>0$, let $B^U(u,r)=B_d(u,r)\cap U$. Its \emph{volume} counts $A(v)/|A|$ for every vertex $v$ of the ball, the length of every edge inside it, and the part inside the ball of every edge leaving it:
\[
\Vol_u^U(r)=\frac{A(B^U(u,r))}{|A|}
 +\sum_{xy\in E(G[B^U(u,r)])}\ell_{xy}
 +\sum_{\substack{xy\in E(G[U])\\x\in B^U(u,r),\ y\notin B^U(u,r)}}
 \bigl(r-d(u,x)\bigr).
\]
We use two deficits, one for the volume of the ball and one for the weight of a set $S$:
\[
F_u^U(r)=\log\frac{2e}{\Vol_u^U(r)},
\qquad
\Gamma(S)=\log\frac{e|A|}{A(S)}.
\]

\begin{lemma}\label{lem:expander-volume}
For every $U\subseteq V(G)$, every $u\in U$, and every $r>0$:
\begin{enumerate}[label=(\roman*),leftmargin=*]
\item $\Vol_u^U$ is a volume function (\cref{sec:prelim-rg}) with boundary cost $|\delta_{G[U]}(B^U(u,r))|$ and normalization $2$;
\item $1\le F_u^U(r)\le\log(2e|A|)$ and $1\le\Gamma(B^U(u,r))\le\log(e|A|)$, and both are nonincreasing in $r$;
\item removing vertices other than $u$ from $U$ decreases neither $F_u^U(r)$ nor $\Gamma(B^U(u,r))$.
\end{enumerate}
\end{lemma}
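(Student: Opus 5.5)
The plan is to mirror the proof of \cref{lem:cw-volume}, checking each item directly from the three terms in the definition of $\Vol_u^U(r)$: the weight term $A(B^U(u,r))/|A|$, the full lengths of the edges inside the ball, and the partial lengths $r-d(u,x)$ of the edges leaving it in $G[U]$.

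For (i), we look at an open interval between consecutive distances $d(u,\cdot)$. On such an interval the ball does not change, so only the partial terms vary, and each of them has slope one. Hence the derivative is $|\delta_{G[U]}(B^U(u,r))|$. When $r$ reaches $d(u,y)$, the vertex $y$ enters the ball. Its weight term is added, and for each edge $xy$ with $x$ already inside, the partial term $d(u,y)-d(u,x)$ is replaced by $\ell_{xy}$. Since $d$ is the shortest-path metric under $\ell$, the triangle inequality gives $d(u,y)\le d(u,x)+\ell_{xy}$, so this replacement is an increase and every jump is upward. The function is piecewise affine with finitely many events, and it is positive because $u$ lies in the ball and $A(u)\ge1$.

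For the normalization, note that each edge contributes at most $\ell_{xy}$: a partial term satisfies $r-d(u,x)\le d(u,y)-d(u,x)\le\ell_{xy}$, since $y\notin B$ means $d(u,y)\ge r$. So the edge terms total at most $\sum_e\ell_e\le1$ by \eqref{eq:failed-dual}, and the weight terms total at most $1$. This gives $\Vol_u^U(r)\le2$.

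For (ii), the lower bound $\Vol_u^U(r)\ge A(u)/|A|\ge1/|A|$ together with $\Vol_u^U(r)\le2$ gives $1\le F_u^U(r)\le\log(2e|A|)$. Similarly, $1\le A(u)\le A(B^U(u,r))\le|A|$ gives $1\le\Gamma\le\log(e|A|)$. Both deficits are nonincreasing in $r$, because the volume is nondecreasing and the ball grows with $r$.

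For (iii), removing a vertex $y\ne u$ from $U$ deletes its weight term and the terms of its incident edges. All other terms are unchanged: a vertex's membership in the ball depends only on $d$, and whether an edge leaves the ball is unaffected for edges not incident to $y$. So the volume and the ball weight can only decrease, and both deficits can only increase.

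The only point that needs care is the bound on the partial terms, $r-d(u,x)\le\ell_{xy}$, which relies on the shortest-path triangle inequality. Everything else is direct bookkeeping.
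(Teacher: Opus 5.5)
Your proof is correct and follows the paper's argument essentially line by line: (i) derivative plus upward jumps via the triangle inequality, (ii) the bounds $1/|A|\le\Vol_u^U(r)\le 2$ and $1\le A(B^U(u,r))\le|A|$, and (iii) term-by-term deletion. Your explicit check that each partial term $r-d(u,x)$ is at most $\ell_{xy}$ (since $d(u,y)\ge r$) fills in a detail the paper leaves implicit when it bounds the edge terms by $\sum_e\ell_e$.
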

\begin{proof}
Write $B=B^U(u,r)$.
\begin{enumerate}[label=(\roman*),leftmargin=*]
\item Between two consecutive values of $d(u,\cdot)$, the ball $B$ does not change, and every edge $xy$ of $G[U]$ with $x\in B$ and $y\notin B$ contributes $r-d(u,x)$, which grows with slope one. So the derivative of $\Vol_u^U$ is $|\delta_{G[U]}(B)|$. When $r$ reaches $d(u,y)$ and $y$ enters the ball, the term $A(y)/|A|$ is added, and for every edge $xy$ with $x$ already in the ball, the partial term $d(u,y)-d(u,x)$ is replaced by the full length $\ell_{xy}$; this is an increase, since $d(u,y)-d(u,x)\le\ell_{xy}$ by the triangle inequality. So $\Vol_u^U$ only jumps upward.
\item The vertex terms sum to at most one and the edge terms to at most $\sum_e\ell_e\le1$, so $\Vol_u^U(r)\le2$ and $F_u^U(r)\ge1$. Since $u\in B$, we have $\Vol_u^U(r)\ge A(u)/|A|\ge1/|A|$, so $F_u^U(r)\le\log(2e|A|)$. Similarly, $1\le A(u)\le A(B)\le|A|$ gives $1\le\Gamma(B)\le\log(e|A|)$. Both deficits are nonincreasing in $r$, because $\Vol_u^U(r)$ and $A(B)$ are nondecreasing in $r$.
\item Removing a vertex $y\ne u$ from $U$ removes its own term and the terms of the edges incident to $y$, and leaves all other terms unchanged. So neither $\Vol_u^U(r)$ nor $A(B)$ increases, and neither deficit decreases.
\end{enumerate}
\end{proof}

\paragraph{The algorithm.}
Unlike \cref{sec:cutwidth,sec:directed,sec:vertex}, which use Global Carving, here we use Greedy Seymour Carving (\cref{sec:ov-greedy}): the averaging argument in the proof of \cref{lem:size-cut} needs a bound on the sum of the costs of all pieces, each relative to its own weight, whereas Global Carving only controls the prefixes of the carving.

Let $\kappa=2\lceil\log_2\log(2e|A|)\rceil+1=O(\log\log|A|)$, and divide $[R/2,R]$ into the $\kappa$ annuli $I_j=[a_j,a_{j-1}]$ for $j\in[\kappa]$, where $a_j=R-jR/(2\kappa)$; so $I_1$ is the outermost annulus and $I_\kappa$ the innermost. For a set $U$ and a vertex $u\in U$, the annulus $I_j$ is \emph{stable for $u$ in $U$} if
\[
F_u^U(a_j)\le2F_u^U(a_{j-1})
\qquad\text{and}\qquad
\Gamma(B^U(u,a_j))\le2\,\Gamma(B^U(u,a_{j-1})).
\]
The algorithm for \cref{lem:size-cut}, assuming \eqref{eq:light-balls}, is described in \cref{alg:expander}. At a high level, it maintains a residual set $U$, initially $V(G)$, and removes one ball from $U$ in each iteration until $U$ is empty. Every vertex $u$ keeps an index $j(u)$, which only increases, and its current annulus is $I_{j(u)}$. The \emph{key} of $u$ is the weight $A(B^U(u,a_{j(u)-1}))$ of the ball at the outer end of its current annulus; unlike in \cref{sec:ov-greedy}, keys are weights rather than volumes, since the pieces are compared by weight. Each iteration carves around a vertex of maximum key. At the end, the algorithm outputs the piece whose boundary is smallest relative to its weight.

\begin{framed}
\noindent\refstepcounter{algorithm}\label{alg:expander}\textbf{Algorithm \thealgorithm} (Greedy Seymour Carving, weighted)\\[2pt]
\textbf{Input:} a connected graph $G$ with a node-weighting $A$, and lengths $\ell$ satisfying \eqref{eq:failed-dual} and \eqref{eq:light-balls}.\\
\textbf{Output:} a set $S\subseteq V(G)$ as in \cref{lem:size-cut}.
\begin{enumerate}[label=\arabic*:,leftmargin=2.2em,itemsep=1pt,topsep=4pt]
\item Initialize $U\leftarrow V(G)$, $i\leftarrow0$, and $j(u)\leftarrow1$ for every vertex $u$.
\item \textbf{while} $U\ne\emptyset$ \textbf{do}
\item \quad \textbf{for} every $u\in U$: \textbf{while} $I_{j(u)}$ is not stable for $u$ in $U$, set $j(u)\leftarrow j(u)+1$.
\item \quad $i\leftarrow i+1$, and let $u_i\in U$ be a vertex of maximum key.
\item \quad Apply \cref{lem:stable-annulus} to $\Vol_{u_i}^U$ on $I_{j(u_i)}$, and let $r_i$ be the resulting radius.
\item \quad $X_i\leftarrow B^U(u_i,r_i)$, $U_i\leftarrow U$, and $U\leftarrow U\setminus X_i$.
\item \textbf{return} the piece $X_i$ minimizing $|\delta_G(X_i)|/\bigl(A(X_i)\,\Gamma(X_i)\bigr)$.
\end{enumerate}
\end{framed}

\paragraph{Analysis.}
We first show that \cref{alg:expander} is well defined and that every piece is cheap in the residual graph.

\begin{lemma}\label{lem:expander-well-defined}
Every index $j(u)$ stays in $[\kappa]$, and \cref{alg:expander} terminates after at most $n$ iterations, in polynomial time. For every $i$,
\[
|\delta_{G[U_i]}(X_i)|\le O\Big(\frac{\log\log|A|}{R}\Big)\Vol_{u_i}^{U_i}(r_i)\,F_{u_i}^{U_i}(r_i)
\qquad\text{and}\qquad
1\le A(X_i)<|A|/3.
\]
\end{lemma}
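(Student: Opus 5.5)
The plan is to bound the pointer moves with \cref{lem:persistence}, applied separately to the two deficit families, and then read off the cut and weight bounds from \cref{lem:stable-annulus} and \eqref{eq:light-balls}.

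\emph{Pointers stay in range.} Fix a vertex $u$. View the successive residual sets $U$ (at the times Line~3 processes $u$) as a time index $t$. Define $F_t(r)=F_u^{U}(r)$ and $G_t(r)=\Gamma(B^{U}(u,r))$. By \cref{lem:expander-volume}(ii),(iii), each family is nonincreasing in $r$, pointwise nondecreasing in $t$ (since $U$ only shrinks and $u\in U$), and takes values in $[1,\log(2e|A|)]$. A move from $j$ to $j+1$ happens only when $I_j$ is unstable for $u$ in $U$. That means either $F_t(a_j)>2F_t(a_{j-1})$ or $G_t(a_j)>2G_t(a_{j-1})$, so every move is forced by $F$ or forced by $\Gamma$ in the sense of \cref{lem:persistence}. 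That lemma gives at most $\log_2\log(2e|A|)$ moves of each kind. Hence the total number of moves is at most $2\log_2\log(2e|A|)\le\kappa-1$, and $j(u)\le\kappa$.

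I expect the only subtlety to be the last annulus. If $j(u)$ reached $\kappa$ and $I_\kappa$ were unstable, the inner loop would try to move again. But that move would itself be forced, which the count above forbids, so $I_{j(u)}$ is stable whenever the loop stops. One more point needs checking: \cref{lem:persistence} treats a single pointer over time, with moves interleaved across many loop runs. It still applies, because the lemma only uses monotonicity between consecutive forced moves.

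\emph{Termination and running time.} Since $r_i\ge a_\kappa=R/2>0$, the ball $X_i$ contains $u_i$, so $|U|$ drops by at least one each iteration. Thus there are at most $n$ iterations. Each stability test evaluates volumes and weights at fixed radii. In total the pointers move $O(n\kappa)$ times, and \cref{lem:stable-annulus} is polynomial, so the algorithm runs in polynomial time.

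\emph{Cut bound.} The volume function $\Vol_{u_i}^{U_i}$ is a volume function with normalization $2$ and boundary cost $|\delta_{G[U_i]}(\cdot)|$ by \cref{lem:expander-volume}(i), and its deficit is exactly $F_{u_i}^{U_i}$. Its annulus $I_{j(u_i)}$ is stable, in particular for $F$, and has length $R/(2\kappa)$. So \cref{lem:stable-annulus} gives
\[
|\delta_{G[U_i]}(X_i)|\le O\Big(\frac{\kappa}{R}\Big)\Vol_{u_i}^{U_i}(r_i)F_{u_i}^{U_i}(r_i),
\]
and $\kappa=O(\log\log|A|)$.

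\emph{Weight bounds.} For the lower bound, $u_i\in X_i$ and $A\ge1$, so $A(X_i)\ge1$. For the upper bound, $r_i<R$ gives $X_i\subseteq B_d(u_i,R)$, and \eqref{eq:light-balls} then gives $A(X_i)<|A|/3$.
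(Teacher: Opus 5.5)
Your proposal is correct and follows the paper's proof essentially step for step. Every pointer move is forced by $F$ or by $\Gamma$, so \cref{lem:persistence} applied to each family bounds the moves by $2\log_2\log(2e|A|)<\kappa$; then \cref{lem:stable-annulus} on an annulus of width $R/(2\kappa)$, together with \eqref{eq:light-balls}, gives the cut and weight bounds. Your explicit check that the inner loop of Line~3 always stops at a stable annulus (a move out of $I_\kappa$ would itself be forced) is a detail the paper leaves implicit in the remark after \cref{lem:persistence}.
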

\begin{proof}
Fix a vertex $u$. Whenever $j(u)$ increases, the annulus $I_{j(u)}$ is unstable for $F_u^U$ or for $\Gamma(B^U(u,\cdot))$, so the move is forced by one of them in the sense of \cref{lem:persistence}. By \cref{lem:expander-volume}, that lemma applies to both, with values in $[1,\log(2e|A|)]$, so each forces at most $\log_2\log(2e|A|)$ increases, and $2\log_2\log(2e|A|)<\kappa$. Each iteration removes $u_i\in X_i$, so there are at most $n$ iterations.

Line~5 applies \cref{lem:stable-annulus} to an annulus of width $\Omega(R/\log\log|A|)$ that is stable for $F_{u_i}^{U_i}$, which gives the bound on $|\delta_{G[U_i]}(X_i)|$. Finally, $A(X_i)\ge A(u_i)\ge1$, and $A(X_i)<|A|/3$ by \eqref{eq:light-balls}, since $X_i\subseteq B_d(u_i,R)$.
\end{proof}

The choice of a maximum key in Line~4 makes the weight deficits of the pieces nondecreasing up to a factor of two.

\begin{lemma}\label{lem:expander-order}
$\Gamma(X_i)\le2\,\Gamma(X_j)$ whenever $i<j$.
\end{lemma}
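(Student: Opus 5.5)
We compare each piece with the key of the vertex that carved it. Let $k_i=A(B^{U_i}(u_i,a_{j(u_i)-1}))$ be the key selected in iteration $i$, measured after the pointer updates of Line~3 in that iteration. The proof has two steps: first show that a piece's deficit is pinned to its key up to a factor of two, then show that the selected keys are nonincreasing.

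\emph{Step 1: the pinning.} Since $r_i\in I_{j(u_i)}=[a_{j(u_i)},a_{j(u_i)-1}]$ and $\Gamma(B^{U_i}(u_i,\cdot))$ is nonincreasing in the radius (\cref{lem:expander-volume}(ii)), we get
\[
\log\frac{e|A|}{k_i}\le\Gamma(X_i)\le\Gamma\bigl(B^{U_i}(u_i,a_{j(u_i)})\bigr)\le2\log\frac{e|A|}{k_i}.
\]
The last inequality is the $\Gamma$-part of stability of the current annulus, which Line~3 guarantees.

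\emph{Step 2: keys are nonincreasing.} We show $k_{i+1}\le k_i$. Fix any $u\in U_{i+1}$ and let $j$ be its index at the time of the selection in iteration $i$. By maximality, $A(B^{U_i}(u,a_{j-1}))\le k_i$. During iteration $i+1$, two things happen to $u$:
\begin{itemize}
\item $U$ shrinks from $U_i$ to $U_{i+1}$. Weights of balls around $u$ can only decrease (\cref{lem:expander-volume}(iii)).
\item The index may grow to some $j'\ge j$. Then the outer radius $a_{j'-1}\le a_{j-1}$ moves inward, and ball weights are monotone in the radius.
\end{itemize}
Hence the key of $u$ in iteration $i+1$ satisfies
\[
A(B^{U_{i+1}}(u,a_{j'-1}))\le A(B^{U_i}(u,a_{j-1}))\le k_i.
\]
Taking the maximum over $u\in U_{i+1}$ gives $k_{i+1}\le k_i$.

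The one subtle point is that the pointers persist across iterations and are never reset. This is what makes the key a monotone quantity in both arguments at once: time, through vertex removal, and radius, through pointer moves.

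\emph{Conclusion.} For $i<j$, Step~2 gives $k_j\le k_i$. Combining with Step~1,
\[
\Gamma(X_i)\le2\log\frac{e|A|}{k_i}\le2\log\frac{e|A|}{k_j}\le2\,\Gamma(X_j).
\]
The main obstacle is only the bookkeeping of which $U$ and which index each key refers to. Everything else follows from the monotonicity facts in \cref{lem:expander-volume}.
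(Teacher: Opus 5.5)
Your proof is correct and follows the paper's argument. You pin $\Gamma(X_i)$ between $\log\frac{e|A|}{k_i}$ and twice that, using $\Gamma$-stability of the selected annulus. You then use that keys only decrease under vertex removal and inward pointer moves, so the selected keys are nonincreasing; you just spell out that last step in more detail than the paper does.
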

\begin{proof}
Removing vertices and increasing indices can only decrease a key, so the key of every remaining vertex only decreases over time. Let $Q_i$ be the key of $u_i$ when it is selected. Since Line~4 always selects a maximum key, $Q_1\ge Q_2\ge\cdots$. If the annulus of $u_i$ is $I_p$, then $a_p<r_i$ gives $A(B^{U_i}(u_i,a_p))\le A(X_i)\le Q_i$, and stability gives
\[
\log\frac{e|A|}{Q_i}\le\Gamma(X_i)\le\Gamma(B^{U_i}(u_i,a_p))\le2\log\frac{e|A|}{Q_i}.
\]
For $i<j$, we get $\Gamma(X_i)\le2\log\frac{e|A|}{Q_i}\le2\log\frac{e|A|}{Q_j}\le2\,\Gamma(X_j)$.
\end{proof}

Next, the volumes of the pieces are bounded in total, and their boundaries in $G$ are comparable to their boundaries in the residual graph.

\begin{lemma}\label{lem:expander-carving}
$\sum_i\Vol_{u_i}^{U_i}(r_i)\le2$, $\sum_iA(X_i)=|A|$, and
\[
\sum_i\frac{|\delta_G(X_i)|}{\Gamma(X_i)}\le3\sum_i\frac{|\delta_{G[U_i]}(X_i)|}{\Gamma(X_i)}.
\]
\end{lemma}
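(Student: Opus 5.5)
The three claims are proved separately; the first two are charging arguments and the third combines the ordering of \cref{lem:expander-order} with the observation that every edge of $\delta_G(X_i)$ is cut in the residual graph of whichever endpoint's piece was carved first.

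\emph{Volumes.} I would repeat the charging of \cref{lem:heaviest}. Every vertex contributes $A(v)/|A|$ only to the volume of the ball that removes it, so these terms sum to at most one. An edge $xy$ of $G$ contributes at most $\ell_{xy}$ (its full length or a partial term $r-d(u,x)\le\ell_{xy}$, or, if both endpoints lie inside, $\ell_{xy}$). It contributes only while both endpoints are in $U$, hence only to the piece that removes the first of its endpoints. So the edge terms sum to at most $\sum_e\ell_e\le1$, giving the bound $2$. The identity $\sum_iA(X_i)=|A|$ holds because the $X_i$ partition $V(G)$: each iteration removes a nonempty ball from $U$, and the loop runs until $U=\emptyset$.

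\emph{Boundaries.} Take an edge $xy\in\delta_G(X_i)$, with $x\in X_i$ and $y\in X_p$, $p\ne i$. If $p>i$, then $y\in U_i$, so $xy\in\delta_{G[U_i]}(X_i)$ and it is charged to itself. If $p<i$, then $xy\in\delta_{G[U_p]}(X_p)$, since $x\in U_p$. We charge it there, and the weight is $1/\Gamma(X_i)$ instead of $1/\Gamma(X_p)$. By \cref{lem:expander-order} with $p<i$, we have $\Gamma(X_p)\le2\Gamma(X_i)$, so $1/\Gamma(X_i)\le2/\Gamma(X_p)$.

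Summing over all edges, each edge $e\in\delta_{G[U_p]}(X_p)$ is counted in the left-hand side at most twice: once from $X_p$ with weight $1/\Gamma(X_p)$, and once from the later piece with weight at most $2/\Gamma(X_p)$. Its total charge is therefore at most $3/\Gamma(X_p)$. This gives the factor $3$.

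The main point requiring care is this direction of the ordering inequality: an edge to an earlier piece must be charged to the earlier piece, whose deficit is at most twice the later one. Everything else is bookkeeping.
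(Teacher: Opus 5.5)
Your proof is correct and follows essentially the same route as the paper. You use the same volume charging, with each term charged to the piece that removes the first endpoint. You charge each crossing edge to the earlier piece $X_p$, where \cref{lem:expander-order} gives $1/\Gamma(X_p)+1/\Gamma(X_i)\le 3/\Gamma(X_p)$.
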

\begin{proof}
Every vertex $v$ contributes $A(v)/|A|$ only to the volume of its own piece, and every edge contributes at most its length, and only to the volume of the piece containing the endpoint removed first. So $\sum_i\Vol_{u_i}^{U_i}(r_i)\le1+\sum_e\ell_e\le2$. The pieces partition $V(G)$, so $\sum_iA(X_i)=|A|$.

For the last bound, an edge between $X_i$ and $X_j$ with $i<j$ lies in $\delta_{G[U_i]}(X_i)$. It contributes $1/\Gamma(X_i)$ to the right-hand sum, and $1/\Gamma(X_i)+1/\Gamma(X_j)\le3/\Gamma(X_i)$ to the left-hand sum, by \cref{lem:expander-order}.
\end{proof}

We can now prove the main lemma.

\begin{proof}[Proof of \cref{lem:size-cut}]
If some ball of radius $R$ has weight at least $|A|/3$, we are done by \cref{lem:heavy-ball}. Otherwise, run \cref{alg:expander}. We use the elementary inequality
\begin{equation}\label{eq:entropy-comparison}
\frac{p\log(e/p)}{\log(e/q)}\le p+\frac qe
\qquad(0<p,q\le1).
\end{equation}
If $p\ge q$, the left-hand side is at most $p$. Otherwise write $p=xq$ with $0<x<1$; then the left-hand side is $p+xq\log(1/x)/\log(e/q)\le p+q/e$, since $x\log(1/x)\le1/e$.

Apply it with $p=\Vol_{u_i}^{U_i}(r_i)/2$ and $q=A(X_i)/|A|$, for which $\Vol_{u_i}^{U_i}(r_i)F_{u_i}^{U_i}(r_i)=2p\log(e/p)$ and $\Gamma(X_i)=\log(e/q)$. By \cref{lem:expander-carving,lem:expander-well-defined},
\[
\begin{aligned}
\sum_i\frac{|\delta_G(X_i)|}{\Gamma(X_i)}
&\le O\Big(\frac{\log\log|A|}{R}\Big)\sum_i\frac{\Vol_{u_i}^{U_i}(r_i)\,F_{u_i}^{U_i}(r_i)}{\Gamma(X_i)}\\
&\le O\Big(\frac{\log\log|A|}{R}\Big)\sum_i\Big(\Vol_{u_i}^{U_i}(r_i)+\frac{2A(X_i)}{e|A|}\Big)
=O\Big(\frac{\log\log|A|}{R}\Big).
\end{aligned}
\]
Since $\sum_iA(X_i)=|A|$, some piece satisfies $\frac{|\delta_G(X_i)|}{A(X_i)\Gamma(X_i)}\le\frac1{|A|}\sum_j\frac{|\delta_G(X_j)|}{\Gamma(X_j)}$: otherwise, multiplying each ratio by $A(X_i)$ and summing over $i$ would give more than $\sum_i\frac{|\delta_G(X_i)|}{\Gamma(X_i)}$. The piece $X_i$ returned in Line~7 minimizes this ratio, so it satisfies
\[
|\delta_G(X_i)|\le O\Big(\frac{\log\log|A|}{R|A|}\Big)A(X_i)\Gamma(X_i)=O(\phi\log\log|A|)\,A(X_i)\log\frac{|A|}{A(X_i)},
\]
where the last step uses $A(X_i)\le|A|/2$. Line~7 can be implemented with rational arithmetic, up to a constant factor (\cref{app:implementation}).
\end{proof}

\subsection{The decomposition}
The decomposition algorithm is recursive. On a connected instance $G'$ (initially each connected component of the input graph), compute the minimum congestion for routing $\Pi$ in $G'$. If it is at most $1/(2\phi)$, stop: $A$ is $\phi$-flow-expanding in $G'$ by \cref{lem:product-certificate}. Otherwise, find the set $S$ of \cref{lem:size-cut}, delete $\delta_{G'}(S)$, and recurse on the connected components of $G'[S]$ and $G'[V(G')\setminus S]$.

\begin{proof}[Proof of \cref{thm:weighted-expander}]
Every remaining component passes the test, so $A$ is $\phi$-flow-expanding in it. Every cut splits an instance into two nonempty parts, so there are fewer than $2n$ instances, each handled in polynomial time. It remains to count the deleted edges.

By \cref{lem:size-cut}, when an instance $G'$ of weight $M=A(V(G'))$ is cut along $S$, the number of deleted edges is at most $C\phi\log\log|A|\cdot A(S)\log\frac{M}{A(S)}$ for an absolute constant $C$. We charge this amount to the vertices of $S$, each vertex $v\in S$ receiving $C\phi\log\log|A|\cdot A(v)\log\frac{M}{A(S)}$. Fix a vertex $v$, and let $M_1>M_2>\cdots>M_k$ be the weights of the instances in which $v$ is charged, with $S_t$ the side containing $v$ in the $t$-th of them. Every later instance containing $v$ lies inside $S_t$, so $M_{t+1}\le A(S_t)$. Since $A(S_k)\ge1$,
\[
\sum_{t=1}^k\log\frac{M_t}{A(S_t)}
\le\sum_{t<k}\log\frac{M_t}{M_{t+1}}+\log M_k
=\log M_1\le\log|A|.
\]
So every vertex $v$ is charged $O(\phi\log|A|\log\log|A|)A(v)$ in total, and summing over all vertices bounds the number of deleted edges by $O(\phi|A|\log|A|\log\log|A|)$.
\end{proof}

\section{Directed Layouts: Minimum Storage}\label{sec:directed}
Let $G=(V,E)$ be a DAG. For a topological ordering $\pi$, let $\MS_\pi(G)=\max_i|\delta_G(S_i^\pi)|$ be its \emph{storage}, and let $\MS(G)=\min_\pi\MS_\pi(G)$. Every prefix of a topological ordering is closed under predecessors, so all arcs crossing a prefix leave it.

\begin{theorem}\label{thm:minimum-storage}
There is a polynomial-time algorithm that, given an $n$-vertex DAG $G$, computes a topological ordering of storage at most $O(\log n)\MS(G)$.
\end{theorem}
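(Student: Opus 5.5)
Following \cref{sec:cutwidth}, the plan has three parts: a directed spreading-metric LP, a directed weak local partition obtained by Global Carving with source and sink balls, and the recursion of \cref{lem:wlp-to-cw}.

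\emph{The LP.} The relaxation uses a quasi-metric $d$ on $V$. For every arc $(u,v)$ we impose $d(u,v)\ge 0$ and set $d(v,u)=\infty$, or equivalently take $d$ to be the restriction of the positions in a topological order. The spreading constraints are imposed on the symmetrization $\bar d(u,v)=d(u,v)+d(v,u)$, restricted to comparable pairs. The objective is $\sum_{(u,v)\in E}d(u,v)$. For a topological ordering $\pi$, setting $d(u,v)=\max\{0,\pi(v)-\pi(u)\}$ gives a feasible solution by \cref{lem:line-spreading}. Its cost is the sum of the prefix cuts, so $\LP(H)\le|V(H)|\MS(G)$ for every induced subgraph $H$, exactly as in \eqref{eq:cw-density}.

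\emph{Balls.} For a center $u$ among the remaining vertices $U$, there are two kinds of balls.
\begin{itemize}
\item The out-ball $B^+(u,r)=\{v\in U:d(v,u)<r\}$ is closed under predecessors within $U$. Removing it makes it a \emph{source ball}: all arcs between it and $U\setminus B^+$ leave it.
\item The in-ball $B^-(u,r)=\{v\in U:d(u,v)<r\}$ is closed under successors, and removing it makes it a \emph{sink ball}.
\end{itemize}
Volumes are defined as in \eqref{eq:residual-volume} using the directed distance from the center. The analogue of \cref{lem:cw-volume} (volume function, bounds, monotonicity under removal) goes through with the same proof, using the directed triangle inequality for the upward jumps. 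For the size bound (iv), the spreading constraints bound one kind of ball at radius $n/10$ only in terms of the sum of directed distances. I expect to handle this by spreading on $d(u,\cdot)$ over successors and $d(\cdot,u)$ over predecessors separately, giving at most $n/2$ vertices for each ball type.

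\emph{Carving.} Run \cref{alg:cutwidth} verbatim, except that at each step $K_i$ is the maximum volume over all $2|U|$ balls of both types, and $u_i$ together with its type maximizes the inner-radius volume. Then \cref{lem:heaviest}, \cref{lem:carving-props} and \cref{lem:cw-prefix} carry over unchanged, since their proofs use only volume-function properties, monotonicity, and the greedy choice. This yields a carving sequence $X_1,\dots,X_k$ whose residual cuts satisfy the level-rise bound.

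The final ordered partition lists the source pieces in removal order, followed by the sink pieces in reverse removal order.
\begin{itemize}
\item \emph{Topological.} Consider an arc between pieces $X_p$ and $X_q$ with $p<q$. If $X_p$ is a source piece, the arc leaves $X_p$ and $X_p$ is placed first. If $X_p$ is a sink piece, the arc enters $X_p$; now $X_q$ is either a source piece, placed before all sink pieces, or a sink piece removed later, placed earlier. In every case the arc goes forward, so arcs only go forward between pieces.
\item \emph{Prefix cost.} A prefix boundary may be a prefix of the source sequence, in which case the crossing arcs are cut by pieces $j\le i$ and \cref{lem:cw-prefix} applies directly. The harder case is when the prefix contains all source pieces and the sink pieces $X_k,\dots,X_i$. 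I plan to show that its crossing arcs all lie in $\bigcup_{j<i,\,X_j\text{ sink}}\delta_{G[U_j]}(X_j)$, together with arcs cut by source pieces removed before step $i$. These are again charged to steps $j\le i$, so the bound $320\frac{\LP}{n}L_i\le O(\frac{\LP}{n})\log\frac{2e\LP}{\Vol(X_i)}$ holds for the prefix ending at $X_i$.
\end{itemize}
This case analysis is the main obstacle. An arc between two source pieces removed after $i$ could in principle cross the boundary between the source and sink segments. I will check that such arcs never cross, since both endpoints lie on the source side of the boundary. If this fails, I will instead interleave the pieces more carefully.

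\emph{Recursion.} Recursing within pieces as in \cref{lem:wlp-to-cw}, using weakly connected components and topological concatenation, gives the $O(\log n)$ bound. Separating the LP is done via \cref{app:implementation}.
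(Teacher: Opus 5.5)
\textbf{The gap is in the size bound (iv), and it breaks the recursion.} No single ball type can be bounded for every center.

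Your own integral solution $d_\pi$ satisfies the precedence constraint $d(y,x)=0$ for every arc $xy$, and that is the constraint to impose. Your $d(y,x)=\infty$ makes $d_\pi$ infeasible and yields no closure property. With $d(y,x)=0$, the directed triangle inequality gives $d(u,x)=0$ for every predecessor $x$ of $u$. So the predecessor-closed ball $\{v:d(u,v)<r\}$ contains all predecessors of $u$ at every radius $r>0$. It also contains incomparable vertices, about which spreading of $d(u,\cdot)$ over successors says nothing.

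On a directed path $p_1\to\cdots\to p_n$, this ball around $p_n$ is all of $V$, with the maximum possible volume $2\LP(G)$. Since you maximize over all $2|U|$ balls of both types, the first iteration can carve $X_1=V$, and the recursion makes no progress. Your proposed fix cannot give $n/2$ vertices per ball type, for exactly this reason. Note also that your labels are swapped: under $d(y,x)=0$, the set $\{v:d(v,u)<r\}$ is closed under successors, so it is the sink ball, not the source ball.

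\textbf{The missing idea is to commit each center to one orientation.}
\begin{itemize}
\item Impose spreading on $\bar d$ for all sets $S$, not only comparable pairs. This is valid because $\bar d_\pi(x,y)=|\pi(x)-\pi(y)|$ for all pairs.
\item Then $\{v:d(u,v)<n/20\}\cap\{v:d(v,u)<n/20\}$ lies in the $\bar d$-ball of radius $n/10$. That ball has at most $n/2$ vertices by the argument of \cref{lem:cw-volume}(iv), so one of the two balls has at most $3n/4$ vertices.
\item The paper fixes this orientation $\sigma(u)$ for each $u$ once and for all, and runs Global Carving using only the oriented balls. It starts from radius $n/20$ with $w_i=\frac{n}{320}K_i/\DLP(G)$, so that the windows stay above $n/40$.
\item Oriented balls only shrink as $U$ shrinks, so every piece has at most $3n/4$ vertices and the recursion depth is $O(\log n)$.
\end{itemize}

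With this change, the rest of your plan is the paper's proof. The greedy analysis only needs a fixed volume function per vertex. Your topological argument is \cref{lem:two-ended}. Your prefix case analysis is \cref{lem:directed-blocks}: an arc of $\delta_G(P)$ with both endpoints outside $X_i$ has an endpoint in a source piece $X_j$ with $j<i$ if $X_i$ is a source piece, and in a sink piece $X_j$ with $j<i$ otherwise. In your harder case, arcs between two source pieces lie entirely inside $P$ and never cross.
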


\subsection{The relaxation}
We use the following LP~\cite{ENRS00,RR05}:
\begin{equation}\label{eq:directed-lp}
\begin{aligned}
\DLP(G)=\min\quad &\sum_{xy\in E}d(x,y)\\
\text{s.t.}\quad &d(x,z)\le d(x,y)+d(y,z) &&x,y,z\in V,\\
&d(x,x)=0,\quad d(x,y)\ge0 &&x,y\in V,\\
&d(y,x)=0 &&xy\in E,\\
&\sum_{v\in S}\bigl(d(u,v)+d(v,u)\bigr)
 \ge\frac{|S|^2-1}{4} &&u\in V,\ S\subseteq V.
\end{aligned}
\end{equation}
The constraints $d(y,x)=0$ encode precedence, and the last family imposes spreading on the symmetrization $\bar d(x,y)=d(x,y)+d(y,x)$, which is a metric.

\begin{lemma}\label{lem:directed-lower}
The LP \eqref{eq:directed-lp} can be solved in polynomial time, and $\DLP(G)\le |V(G)|\MS(G)$.
\end{lemma}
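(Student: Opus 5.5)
The lemma has two parts: solvability of \eqref{eq:directed-lp} and the bound $\DLP(G)\le|V(G)|\MS(G)$. I treat them separately, following the undirected argument behind \eqref{eq:cw-density}.

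\emph{Lower bound.} Let $\pi$ be a topological ordering with $\MS_\pi(G)=\MS(G)$. Define the quasi-metric $d(x,y)=\max\{0,\pi(y)-\pi(x)\}$. This is the forward distance along the line, and it is asymmetric.

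First, feasibility:
\begin{itemize}
\item $d(x,x)=0$ and $d\ge0$ are immediate.
\item The directed triangle inequality holds because $t\mapsto\max\{0,t\}$ is subadditive: $\max\{0,a+b\}\le\max\{0,a\}+\max\{0,b\}$, applied with $a=\pi(y)-\pi(x)$ and $b=\pi(z)-\pi(y)$.
\item For an arc $xy$, topologicality gives $\pi(x)<\pi(y)$, so $d(y,x)=0$.
\item The symmetrization is $\bar d(x,y)=d(x,y)+d(y,x)=|\pi(x)-\pi(y)|=d_\pi(x,y)$. So the spreading constraints follow directly from \cref{lem:line-spreading}.
\end{itemize}
Next, the objective. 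For an arc $xy$, the length $d(x,y)=\pi(y)-\pi(x)$ equals the number of prefixes $S_i^\pi$ with $\pi(x)\le i<\pi(y)$, which is the number of prefixes the arc crosses. Summing over arcs gives
\[
\DLP(G)\le\sum_{i<|V(G)|}|\delta_G(S_i^\pi)|\le|V(G)|\,\MS(G).
\]

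\emph{Solvability.} The LP has polynomially many variables. The triangle, nonnegativity and precedence constraints are polynomially many and can be listed explicitly. The spreading family is exponential, so I need a separation oracle, and I would reuse the one for \eqref{eq:cw-lp} from \cref{app:implementation}, applied to $\bar d$:
\begin{itemize}
\item For a fixed center $u$ and size $s$, the left-hand side $\sum_{v\in S}\bar d(u,v)$ is minimized by taking $S$ to be the $s$ vertices with smallest $\bar d(u,\cdot)$, with $u$ itself first, since $\bar d(u,u)=0$.
\item So it suffices to sort the values $\bar d(u,\cdot)$ for each $u$ and check each prefix sum against $(s^2-1)/4$.
\item This takes $O(n^2\log n)$ time and returns a violated constraint if one exists.
\end{itemize}
The ellipsoid method then solves the LP in polynomial time. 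The constraint coefficients are small rationals, so the numerical issues are the same as in the undirected case. The resulting solution is exact, or optimal up to the standard rational-arithmetic rounding handled in \cref{app:implementation}.

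The only step needing care is the lower bound, specifically choosing the truncated line quasi-metric. It is what simultaneously gives the precedence constraints $d(y,x)=0$ and an objective equal to the sum of prefix cuts. Topologicality is used here: every arc of $\delta_G(S_i^\pi)$ leaves the prefix, so each arc crosses exactly $\pi(y)-\pi(x)$ prefixes, all in the forward direction.
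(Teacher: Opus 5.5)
Your proof is correct and matches the paper's argument essentially step for step. The paper also takes the truncated line quasi-metric $d_\pi(x,y)=[\pi(y)-\pi(x)]_+$ of a topological ordering, gets spreading from \cref{lem:line-spreading} via $\bar d_\pi(x,y)=|\pi(x)-\pi(y)|$, identifies the objective with $\sum_{i<n}|\delta_G(S_i^\pi)|$, and separates the spreading constraints with the sorting oracle of \cref{app:implementation} applied to $\bar d$.
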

\begin{proof}
For a topological ordering $\pi$, set $d_\pi(x,y)=[\pi(y)-\pi(x)]_+$. It satisfies the directed triangle inequality, $d_\pi(y,x)=0$ for every arc $xy$, and $\bar d_\pi(x,y)=|\pi(x)-\pi(y)|$, which satisfies spreading by \cref{lem:line-spreading}; so it is feasible. Its objective is $\sum_{i<n}|\delta_G(S_i^\pi)|\le n\MS_\pi(G)$. The spreading constraints are separated as in \cref{app:implementation}, using the values $\bar d(u,v)$.
\end{proof}
Restricting a solution to an induced subgraph preserves feasibility. Moreover, along any path of the underlying undirected graph, each arc contributes its length to $\bar d$, so $\bar d(u,v)\le\DLP(G)$ for $u,v$ in the same weak component. As in the proof of \cref{lem:wlp-to-cw}, it follows that $\DLP(G)\ge3/8$ for every weakly connected $G$ with at least two vertices.

\subsection{Source balls, sink balls, and volumes}
For a vertex $u$ and $r>0$, the \emph{source ball} and the \emph{sink ball} around $u$ are
\[
B^+(u,r)=\{v:d(u,v)<r\},
\qquad
B^-(u,r)=\{v:d(v,u)<r\}.
\]
The names come from the following property: in any induced subgraph, a source ball only has outgoing arcs, and a sink ball only has incoming arcs.

\begin{lemma}\label{lem:source-sink}
Let $U\subseteq V$, $u\in U$, and $r>0$. Every arc of $G[U]$ with exactly one endpoint in $B^+(u,r)\cap U$ leaves $B^+(u,r)\cap U$, and every arc of $G[U]$ with exactly one endpoint in $B^-(u,r)\cap U$ enters $B^-(u,r)\cap U$.
\end{lemma}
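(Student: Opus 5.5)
The plan is a direct argument from the precedence constraints $d(y,x)=0$ together with the directed triangle inequality. The two claims are symmetric, so I will prove the source-ball claim and obtain the sink-ball claim by reversing roles.

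For the source ball, write $B=B^+(u,r)\cap U$ and take an arc $xy$ of $G[U]$ with exactly one endpoint in $B$. It suffices to rule out $y\in B$, $x\notin B$, because then the arc must go from $x\in B$ to $y\notin B$, that is, it leaves $B$. Suppose instead $y\in B$, so $d(u,y)<r$. The arc $xy$ gives $d(y,x)=0$ by the LP constraint in \eqref{eq:directed-lp}. The directed triangle inequality then gives $d(u,x)\le d(u,y)+d(y,x)=d(u,y)<r$. Since $x\in U$, this puts $x$ in $B$, a contradiction.

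For the sink ball, write $B'=B^-(u,r)\cap U$ and take an arc $xy$ of $G[U]$ with exactly one endpoint in $B'$. Suppose $x\in B'$, so $d(x,u)<r$. The triangle inequality gives $d(y,u)\le d(y,x)+d(x,u)=d(x,u)<r$, so $y\in B'$, a contradiction. Hence $y\in B'$ and $x\notin B'$, so the arc enters $B'$.

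There is no real obstacle here. The only thing to check is the orientation of each triangle inequality, namely which argument of $d$ the zero-length reversed arc $d(y,x)=0$ gets inserted into. Membership in $U$ is automatic, since both endpoints of an arc of $G[U]$ lie in $U$.
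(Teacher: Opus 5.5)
Your proposal is correct and matches the paper's proof. You use the precedence constraint $d(y,x)=0$ with the directed triangle inequality to get $d(u,x)\le d(u,y)$ and $d(y,u)\le d(x,u)$, which rules out arcs entering a source ball and arcs leaving a sink ball.
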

\begin{proof}
If $xy\in E$, then $d(y,x)=0$, so $d(u,x)\le d(u,y)$ and $d(y,u)\le d(x,u)$. Hence $y\in B^+(u,r)$ implies $x\in B^+(u,r)$, so no arc enters a source ball; and $x\in B^-(u,r)$ implies $y\in B^-(u,r)$, so no arc leaves a sink ball.
\end{proof}

From now on, $G$ is weakly connected with $n\ge2$ vertices and $d$ is an optimal solution of \eqref{eq:directed-lp}. For every vertex $u$, one of $B^+(u,n/20)$ and $B^-(u,n/20)$ has at most $3n/4$ vertices: their intersection lies in the $\bar d$-ball of radius $n/10$, which has at most $n/2$ vertices by the argument of \cref{lem:cw-volume}(iv), applied to $\bar d$, on which \eqref{eq:directed-lp} imposes spreading. We fix such a sign $\sigma(u)\in\{+,-\}$ for every vertex; we say that $u$ is of \emph{source type} if $\sigma(u)=+$ and of \emph{sink type} otherwise, and write
\[
B^U(u,r)=B^{\sigma(u)}(u,r)\cap U
\]
for its \emph{oriented ball} in a set $U\ni u$. By \cref{lem:source-sink}, every arc of $G[U]$ crossing $B^U(u,r)$ leaves it if $u$ is of source type, and enters it if $u$ is of sink type.

As in \cref{sec:cutwidth}, the \emph{volume} of a set $S\subseteq V$ is $\Vol(S)=\frac{\DLP(G)}{n}|S|+\sum_{xy\in E(G[S])}d(x,y)$. The volume of an oriented ball adds the part inside the ball of every arc of $G[U]$ crossing it:
\[
\Vol_u^U(r)=\Vol(B^U(u,r))+
\begin{cases}
\sum_{xy}\bigl(r-d(u,x)\bigr) & \text{if $\sigma(u)=+$, over arcs $xy$ of $G[U]$ leaving the ball},\\[2pt]
\sum_{xy}\bigl(r-d(y,u)\bigr) & \text{if $\sigma(u)=-$, over arcs $xy$ of $G[U]$ entering the ball}.
\end{cases}
\]
For $r\le0$ the ball is empty and its volume is $0$.

\begin{lemma}\label{lem:directed-volume}
For every $U\subseteq V$, every $u\in U$, and every $r>0$:
\begin{enumerate}[label=(\roman*),leftmargin=*]
\item $\Vol_u^U$ is a volume function (\cref{sec:prelim-rg}) with boundary cost $|\delta_{G[U]}(B^U(u,r))|$ and normalization $2\DLP(G)$; its events are the values $d(u,x)$ if $u$ is of source type, and $d(x,u)$ otherwise;
\item $\DLP(G)/n\le\Vol_u^U(r)\le2\DLP(G)$;
\item removing vertices other than $u$ from $U$ does not increase $\Vol_u^U(r)$;
\item if $r\le n/20$, then $|B^U(u,r)|\le3n/4$.
\end{enumerate}
\end{lemma}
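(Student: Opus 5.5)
I will mirror the proof of \cref{lem:cw-volume} item by item, adapting it to oriented balls. Throughout, write $B=B^U(u,r)$. By symmetry (reversing all arcs and replacing $d$ by its transpose $d^\top(x,y)=d(y,x)$ turns sink balls into source balls), it suffices to treat a vertex $u$ of source type. In that case $B=\{v\in U:d(u,v)<r\}$, and by \cref{lem:source-sink} every crossing arc $xy$ of $G[U]$ has $x\in B$ and $y\notin B$.

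For (i), between consecutive values of $d(u,\cdot)$ the ball is fixed, and every crossing arc $xy$ contributes $r-d(u,x)$, which has slope one. The derivative is therefore the number of crossing arcs, namely $|\delta_{G[U]}(B)|$. When $r$ reaches $d(u,y)$, the vertex $y$ enters and adds its term $\DLP(G)/n$. Each arc $xy$ with $x$ already inside then has its partial term $d(u,y)-d(u,x)$ replaced by the full length $d(x,y)$. This is an increase, by the directed triangle inequality $d(u,y)\le d(u,x)+d(x,y)$.

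This step is the one place where the orientation matters. We must check that newly entering vertices $y$ are never tails of arcs whose heads are already in the ball; by \cref{lem:source-sink} no arc enters a source ball, so this cannot happen. One also needs that each arc $xy$ inside the ball is counted with $d(x,y)\ge 0$. For sink type the transposed argument uses $d(y,u)\le d(y,x)+d(x,u)$, so the partial term $r-d(y,u)$ is replaced by $d(x,y)\ge d(x,u)-d(y,u)$. Piecewise affinity with finitely many events is immediate, and the normalization $2\DLP(G)$ comes from (ii).

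For (ii), the vertex terms sum to at most $\DLP(G)$. The edge terms are at most $\sum_{xy\in E}d(x,y)=\DLP(G)$, since each partial term satisfies $r-d(u,x)\le d(u,y)-d(u,x)\le d(x,y)$. The lower bound comes from $u\in B$, since $d(u,u)=0<r$.

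For (iii), deleting $y\neq u$ removes its own vertex term and the terms of its incident arcs, and leaves all other terms unchanged. Membership of the other vertices depends only on $d$, so no other term changes.

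For (iv), $B\subseteq B^{\sigma(u)}(u,n/20)$, which has at most $3n/4$ vertices by the choice of $\sigma(u)$ stated just before the lemma. That choice rests on the spreading argument of \cref{lem:cw-volume}(iv) applied to $\bar d$. The key point is that $B^+(u,n/20)\cap B^-(u,n/20)$ lies in the $\bar d$-ball of radius $n/10$. If both oriented balls had more than $3n/4$ vertices, their intersection would exceed $n/2$, which contradicts spreading.

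I do not expect a real obstacle here. The only delicate point is verifying upward jumps for the asymmetric distance in (i), which the directed triangle inequality together with \cref{lem:source-sink} settles.
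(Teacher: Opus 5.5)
Your proposal is correct and follows the paper's proof essentially line by line. You reduce to source type, use \cref{lem:source-sink} so that the derivative counts all crossing arcs, use the triangle inequality for the upward jumps, bound the terms one by one for (ii) and (iii), and get (iv) from the choice of $\sigma(u)$. One small slip: in your sink-type aside, the inequality you need is $d(x,u)\le d(x,y)+d(y,u)$, not $d(y,u)\le d(y,x)+d(x,u)$; the conclusion $d(x,y)\ge d(x,u)-d(y,u)$ that you draw is nonetheless correct.
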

\begin{proof}
Write $B=B^U(u,r)$. We give the argument for $u$ of source type; for $u$ of sink type, replace $d(u,\cdot)$ by $d(\cdot,u)$ and leaving arcs by entering arcs.
\begin{enumerate}[label=(\roman*),leftmargin=*]
\item Between two consecutive values of $d(u,\cdot)$, the ball $B$ does not change, and every arc $xy$ of $G[U]$ leaving $B$ contributes $r-d(u,x)$, which grows with slope one. By \cref{lem:source-sink}, these are all the arcs of $G[U]$ crossing $B$, so the derivative of $\Vol_u^U$ is $|\delta_{G[U]}(B)|$. When $r$ reaches $d(u,y)$ and $y$ enters the ball, the term $\DLP(G)/n$ of $y$ is added, and for every arc $xy$ with $x$ already in the ball, the partial term $d(u,y)-d(u,x)$ is replaced by the full length $d(x,y)$; this is an increase, since $d(u,y)-d(u,x)\le d(x,y)$ by the triangle inequality. So $\Vol_u^U$ only jumps upward.
\item The vertex terms sum to at most $\DLP(G)$, and the arc terms to at most $\sum_{xy\in E}d(x,y)=\DLP(G)$. Since $u\in B$, the volume is at least $\DLP(G)/n$.
\item Removing a vertex $y\ne u$ from $U$ removes its own term and the terms of the arcs incident to $y$, and leaves all other terms unchanged.
\item This holds by the choice of $\sigma(u)$.
\end{enumerate}
\end{proof}

\subsection{The algorithm}
\Cref{alg:directed} is \cref{alg:cutwidth} with oriented balls, followed by a reordering of the pieces.

\begin{framed}
\noindent\refstepcounter{algorithm}\label{alg:directed}\textbf{Algorithm \thealgorithm} (Global Carving, directed)\\[2pt]
\textbf{Input:} a weakly connected DAG $G$ with $n\ge2$ vertices, and an optimal solution $d$ of \eqref{eq:directed-lp}.\\
\textbf{Output:} an ordered partition of $V(G)$.
\begin{enumerate}[label=\arabic*:,leftmargin=2.2em,itemsep=1pt,topsep=4pt]
\item Initialize $U\leftarrow V(G)$, $\rho\leftarrow n/20$, and $i\leftarrow0$.
\item \textbf{while} $U\ne\emptyset$ \textbf{do}
\item \quad $i\leftarrow i+1$, $U_i\leftarrow U$, $\rho_i\leftarrow\rho$, $K_i\leftarrow\max_{u\in U}\Vol_u^U(\rho)$, and $w_i\leftarrow\frac{n}{320}\cdot\frac{K_i}{\DLP(G)}$.
\item \quad Let $u_i\in U$ be a vertex maximizing $\Vol_u^U(\rho-w_i)$.
\item \quad Let $r_i\in(\rho-w_i,\rho)$ be a radius that is not an event of $\Vol_{u_i}^U$ and minimizes $|\delta_{G[U]}(B^U(u_i,r_i))|$.
\item \quad $X_i\leftarrow B^U(u_i,r_i)$, $U\leftarrow U\setminus X_i$, and $\rho\leftarrow\rho-w_i$.
\item \textbf{return} the pieces $X_1,\ldots,X_k$ in the two-ended order defined below.
\end{enumerate}
\end{framed}

\noindent As in \cref{sec:cw-algorithm}, Line~5 is well defined as long as $\rho>0$; we show in \cref{lem:directed-well-defined} that $\rho$ stays above $n/40$. The window of iteration $i$ is $(\rho_{i+1},\rho_i)$, where $\rho_{i+1}=\rho_i-w_i$, and we write $W_i=\Vol_{u_i}^{U_i}(r_i)$ for the volume of $X_i$ when it is carved. The index $i$ gives the \emph{removal order} of the pieces. Call $X_i$ a \emph{source piece} if $u_i$ is of source type, and a \emph{sink piece} otherwise. The \emph{two-ended order} lists the source pieces in removal order, followed by the sink pieces in reverse removal order.

\begin{lemma}\label{lem:two-ended}
Let $i<j$. Every arc between $X_i$ and $X_j$ goes from the piece that comes earlier in the two-ended order to the piece that comes later. In particular, any ordering of $V$ that lists the pieces in the two-ended order, each piece as a contiguous block ordered topologically, is a topological ordering of $G$.
\end{lemma}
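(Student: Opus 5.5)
The plan is to reduce the statement to \cref{lem:source-sink}, applied to the residual set at the moment the earlier piece is carved. Fix $i<j$ and an arc $e$ between $X_i$ and $X_j$. When $X_i$ is carved, both endpoints lie in $U_i$, because $X_j\subseteq U_j\subseteq U_i$. So $e$ is an arc of $G[U_i]$ with exactly one endpoint in $X_i=B^{U_i}(u_i,r_i)$. By \cref{lem:source-sink}, if $X_i$ is a source piece, then $e$ leaves $X_i$ and goes from $X_i$ to $X_j$; if $X_i$ is a sink piece, then $e$ enters $X_i$ and goes from $X_j$ to $X_i$.

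Next I will check, case by case, that this direction agrees with the two-ended order. Suppose $X_i$ is a source piece. If $X_j$ is also a source piece, then $X_i$ precedes $X_j$ because source pieces are listed in removal order. If $X_j$ is a sink piece, it comes after every source piece. In both cases the arc goes from the earlier piece to the later one. Now suppose $X_i$ is a sink piece. Then $X_j$ cannot be a source piece placed after it, since every source piece precedes every sink piece, so $X_j$ precedes $X_i$ in every case. If $X_j$ is a sink piece, then $j>i$ together with the reverse removal order puts $X_j$ before $X_i$. If $X_j$ is a source piece, it precedes $X_i$ anyway. Since the arc goes from $X_j$ to $X_i$, it again goes from the earlier piece to the later one.

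For the ``in particular'' part, take an arc $xy$. If $x$ and $y$ lie in the same piece, the topological order inside that block places $x$ before $y$. Otherwise, the case analysis above shows that the piece containing $x$ comes earlier in the two-ended order, so $x$ precedes $y$.

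I do not expect any real obstacle. The only point needing care is that \cref{lem:source-sink} must be applied in $G[U_i]$, using the earlier index; this works because the later piece is still present in $U_i$.
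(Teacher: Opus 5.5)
Your proposal is correct and follows the paper's proof: apply \cref{lem:source-sink} in $G[U_i]$, which still contains every later piece, to get the direction of each arc between $X_i$ and a later piece. Then check that the two-ended order places a source piece before, and a sink piece after, every piece removed later. Your explicit case split just spells out the paper's one-line summary of that last fact.
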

\begin{proof}
When $X_i$ is removed, the remaining graph is $G[U_i]$, and all later pieces lie in $U_i\setminus X_i$. By \cref{lem:source-sink}, a source ball has only outgoing arcs in the remaining graph, and a sink ball has only incoming arcs. So every arc between $X_i$ and a later piece leaves $X_i$ if $X_i$ is a source piece, and enters $X_i$ if $X_i$ is a sink piece. The two-ended order respects this. It places every source piece before all pieces removed after it, and every sink piece after them. Hence every arc between two pieces points forward. Arcs inside a piece point forward by assumption, which gives the last claim.
\end{proof}

\paragraph{Analysis in removal order.}
The next three lemmas are the analogues of \cref{lem:heaviest,lem:carving-props,lem:cw-prefix}, and their proofs are the same.

\begin{lemma}[Volumes]\label{lem:directed-heaviest}
For every iteration $i$:
\begin{enumerate}[label=(\roman*),leftmargin=*]
\item $\Vol(X_i)\le W_i\le K_i$;
\item if $i<k$, then $K_{i+1}\le\Vol_{u_i}^{U_i}(\rho_{i+1})\le W_i$, and hence $W_{i+1}\le W_i$.
\end{enumerate}
Moreover, $\sum_iW_i\le2\DLP(G)$.
\end{lemma}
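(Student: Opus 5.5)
The plan is to transcribe the proof of \cref{lem:heaviest}, replacing \cref{lem:cw-volume} by \cref{lem:directed-volume} and $\LP$ by $\DLP$. The argument uses only three facts. Each $\Vol_u^U$ is nondecreasing in $r$, by \cref{lem:directed-volume}(i). Removing vertices other than $u$ does not increase it, by \cref{lem:directed-volume}(iii). And the choices in Lines~3 and~4 of \cref{alg:directed} are the same as in \cref{alg:cutwidth}. Orientation enters only through the definition of the oriented ball $B^U(u,r)$, so all of these facts carry over.

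For (i), the piece is $X_i=B^{U_i}(u_i,r_i)$. The volume $W_i$ counts $\DLP(G)/n$ for each vertex of $X_i$ and the full length $d(x,y)$ of each arc of $G[X_i]$. These are exactly the terms of $\Vol(X_i)$. On top of them it adds nonnegative partial terms for the crossing arcs: $r_i-d(u_i,x)$ in the source case and $r_i-d(y,u_i)$ in the sink case. These are nonnegative because $x$, respectively $y$, lies in the ball. Hence $\Vol(X_i)\le W_i$. Since $r_i<\rho_i$ and volumes are monotone in the radius, $W_i\le\Vol_{u_i}^{U_i}(\rho_i)\le K_i$ by the definition of $K_i$ in Line~3.

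For (ii), monotonicity together with $\rho_{i+1}<r_i$ gives $\Vol_{u_i}^{U_i}(\rho_{i+1})\le W_i$. Now take any $u\in U_{i+1}\subseteq U_i$. By \cref{lem:directed-volume}(iii), $\Vol_u^{U_{i+1}}(\rho_{i+1})\le\Vol_u^{U_i}(\rho_{i+1})$. By the maximizing choice of $u_i$ in Line~4 with radius $\rho_i-w_i=\rho_{i+1}$, this is at most $\Vol_{u_i}^{U_i}(\rho_{i+1})$. Taking the maximum over $u$ gives the bound on $K_{i+1}$, and then $W_{i+1}\le K_{i+1}\le W_i$ by (i). Each $u$ keeps its fixed sign $\sigma(u)$ throughout, so comparing its oriented balls across different $U$ is legitimate.

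For the sum, charge each term to the piece where it appears. A vertex's term $\DLP(G)/n$ appears only in the volume of its own piece, because it is removed at that point. An arc $xy$ contributes at most $d(x,y)$, either fully or as a partial term. A partial term is at most $d(x,y)$: in the source case, $x$ is in the ball and $y$ is not, so $r-d(u,x)\le d(u,y)-d(u,x)\le d(x,y)$, and the sink case is symmetric. The arc contributes only while both of its endpoints remain, which means only to the piece removed first. Summing gives $\sum_iW_i\le\DLP(G)+\sum_{xy\in E}d(x,y)=2\DLP(G)$.

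The one step that needs care beyond copying the undirected proof is the partial-term bound in the sink case: there $y$ is in the ball, $x$ is not, and one uses $d(x,u)\le d(x,y)+d(y,u)$. Note also that the paper's stated lower bound $\rho>n/40$ is not needed for this lemma; it only ensures Line~5 is well defined.
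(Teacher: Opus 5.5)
Your proof is correct and takes the same route as the paper, which simply transfers the proof of \cref{lem:heaviest}, using \cref{lem:directed-volume}(iii) and the maximizing choice of $u_i$ in Line~4. Your explicit check that each partial arc term is nonnegative and at most $d(x,y)$, in both the source and the sink case via the directed triangle inequality, spells out the one detail the paper leaves implicit.
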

\begin{proof}
As in the proof of \cref{lem:heaviest}, with arcs in place of edges, using \cref{lem:directed-volume}(iii) and the choice of $u_i$ in Line~4.
\end{proof}

\begin{lemma}[Radius]\label{lem:directed-well-defined}
The radius never drops below $n/40$, that is, $\rho_{i+1}\ge n/40$ for every $i$. Consequently, \cref{alg:directed} terminates after at most $n$ iterations, in polynomial time, and $|X_i|\le3n/4$ for every $i$.
\end{lemma}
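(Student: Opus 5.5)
The plan is to repeat the proof of \cref{lem:carving-props}, with the constants adjusted to the directed setting: initial radius $n/20$, window width $w_i=\frac{n}{320}\cdot\frac{K_i}{\DLP(G)}$, and normalization $2\DLP(G)$. The volume facts come from \cref{lem:directed-heaviest}.

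First I show that each window is paid for by the volume of its piece or by a drop in the heaviest volume. Fix $j<k$. If $\Vol_{u_j}^{U_j}(\rho_{j+1})\ge K_j/2$, then \cref{lem:directed-heaviest}(ii) gives $K_j\le 2W_j$. Otherwise the same item gives $K_{j+1}\le\Vol_{u_j}^{U_j}(\rho_{j+1})<K_j/2$, and hence $K_j\le 2(K_j-K_{j+1})$. In either case
\[
K_j\le 2W_j+2(K_j-K_{j+1}).
\]

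Next I sum this over $j<i$ and add $K_i$. The differences telescope to $K_1-K_i$, so
\[
\sum_{j\le i}K_j\le 2\sum_{j<i}W_j+2K_1-K_i\le 2\sum_jW_j+2K_1.
\]
By \cref{lem:directed-heaviest}, $\sum_jW_j\le 2\DLP(G)$, and by \cref{lem:directed-volume}(ii), $K_1\le 2\DLP(G)$. Together these give $\sum_{j\le i}K_j\le 8\DLP(G)$. Therefore
\[
\rho_{i+1}=\frac n{20}-\frac{n}{320\,\DLP(G)}\sum_{j\le i}K_j\ge\frac n{20}-\frac n{40}=\frac n{40}.
\]

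This argument is an induction on $i$, and one point needs care. The inequality for step $j$ uses $\Delta_j$ and Line~5, which require $\rho_{j+1}>0$, so each step is well defined only if the radius is still positive. This is not circular: the bound on $\rho_{i+1}$ uses only the quantities of iterations $j\le i$, and those iterations are well defined because $\rho_j\ge n/40>0$ by the induction hypothesis. Also, $K_i\ge\DLP(G)/n>0$ by \cref{lem:directed-volume}(ii), and $\DLP(G)\ge 3/8>0$ because $G$ is weakly connected with $n\ge2$ vertices.

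The consequences then follow directly. Every window lies in $(n/40,n/20)$, so $r_i<n/20$, and \cref{lem:directed-volume}(iv) gives $|X_i|\le 3n/4$. Since $r_i>0$, every piece contains $u_i$, so there are at most $n$ iterations. Each iteration evaluates oriented-ball volumes at given radii and boundaries between consecutive events, as in \cref{lem:carving-props}, so each takes polynomial time.

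I do not expect a real obstacle. The only check is that the constants fit: the constant $8$ in the sum, multiplied by $n/320$, must not exceed $n/20-n/40=n/40$, and indeed $8n/320=n/40$.
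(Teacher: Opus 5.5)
Your proof is correct and follows the paper's argument essentially verbatim: the dichotomy $K_j\le 2W_j+2(K_j-K_{j+1})$ from \cref{lem:directed-heaviest}(ii), the telescoping sum bounded by $8\DLP(G)$, and \cref{lem:directed-volume}(iv) for the piece sizes. Your explicit induction showing that the earlier iterations are well defined is a point the paper leaves implicit. One small correction: the inequality for step $j$ relies on \cref{lem:directed-heaviest}(ii) and Line~5, not on $\Delta_j$, but this does not affect the argument.
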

\begin{proof}
As in the proof of \cref{lem:carving-props}, \cref{lem:directed-heaviest}(ii) gives $K_j\le2W_j+2(K_j-K_{j+1})$ for $j<k$. So $\sum_{j\le i}K_j\le2\sum_{j<i}W_j+2K_1\le8\DLP(G)$, and $\rho_{i+1}=\frac n{20}-\frac{n}{320\,\DLP(G)}\sum_{j\le i}K_j\ge\frac n{40}$. Since $r_i<n/20$, \cref{lem:directed-volume}(iv) gives $|X_i|\le3n/4$, and since $r_i>0$, every iteration removes $u_i$.
\end{proof}

\begin{lemma}[Prefix bound]\label{lem:directed-carving}
For every $i$,
\begin{equation}\label{eq:directed-prefix}
\sum_{j\le i}|\delta_G(X_j)|\le640\,\frac{\DLP(G)}{n}\log\frac{2e\,\DLP(G)}{W_i}.
\end{equation}
\end{lemma}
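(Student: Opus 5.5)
The proof will be a copy of the proof of \cref{lem:cw-prefix}, with $\DLP(G)$ in place of $\LP(G)$ and the window width $w_j=\frac{n}{320}\cdot\frac{K_j}{\DLP(G)}$. This width is half as large as before, which is why the constant $640$ appears instead of $320$. The argument charges every cut to the rise of the level
\[
L_j=\log\frac{2e\,\DLP(G)}{K_j},
\]
and $L_j\ge1$ by \cref{lem:directed-volume}(ii).

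\emph{The cost of a step.} Fix an iteration $j$ and set $\Delta_j=\log\bigl(\Vol_{u_j}^{U_j}(\rho_j)/\Vol_{u_j}^{U_j}(\rho_{j+1})\bigr)\ge0$. This quantity is finite because $\rho_{j+1}\ge n/40>0$ by \cref{lem:directed-well-defined}. By \cref{lem:directed-volume}(i), the oriented-ball volume only jumps upward, and away from events its derivative is $|\delta_{G[U_j]}(B^{U_j}(u_j,r))|$. This identity is exactly where \cref{lem:source-sink} is used: it guarantees that the arcs leaving (resp.\ entering) the ball are all the arcs crossing it. Hence the integral of the boundary over the window $(\rho_{j+1},\rho_j)$ is at most $\Vol_{u_j}^{U_j}(\rho_j)(1-e^{-\Delta_j})$. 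Line~5 picks the radius with the minimum boundary, so
\[
|\delta_{G[U_j]}(X_j)|\le\frac{K_j}{w_j}\min\{1,\Delta_j\}=320\,\frac{\DLP(G)}{n}\min\{1,\Delta_j\}.
\]
If $j<k$, then \cref{lem:directed-heaviest}(ii) gives $K_{j+1}\le\Vol_{u_j}^{U_j}(\rho_{j+1})=e^{-\Delta_j}\Vol_{u_j}^{U_j}(\rho_j)\le e^{-\Delta_j}K_j$, so $L_{j+1}\ge L_j+\Delta_j$.

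\emph{The prefix bound.} Fix $i$. For $j<i$, the bounds on the steps telescope to $320\frac{\DLP(G)}{n}(L_i-L_1)$. Step $i$ costs at most $320\frac{\DLP(G)}{n}\le320\frac{\DLP(G)}{n}L_1$. Next, every arc between $X_p$ and $X_q$ with $p<q$ lies in $\delta_{G[U_p]}(X_p)$, and it is counted at most twice in $\sum_{j\le i}|\delta_G(X_j)|$, and only if $p\le i$. This gives
\[
\sum_{j\le i}|\delta_G(X_j)|\le 2\sum_{j\le i}|\delta_{G[U_j]}(X_j)|\le640\,\frac{\DLP(G)}{n}L_i,
\]
and $L_i\le\log\frac{2e\DLP(G)}{W_i}$ follows from $W_i\le K_i$ (\cref{lem:directed-heaviest}(i)).

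The only step needing care is the derivative identity for oriented balls in the residual graph $G[U_j]$, that is, that asymmetry does not break the volume-function property. \Cref{lem:directed-volume}(i) already establishes this, so I expect no real obstacle. Note that the prefix here is in removal order, not the two-ended order. Translating the bound to the two-ended order is presumably handled later and is not needed for this lemma.
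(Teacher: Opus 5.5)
Your proposal is correct and follows the paper's proof step for step. It uses the same level $L_j$, the window-averaging bound $|\delta_{G[U_j]}(X_j)|\le 320\,\frac{\DLP(G)}{n}\min\{1,\Delta_j\}$, the rise $L_{j+1}\ge L_j+\Delta_j$ from \cref{lem:directed-heaviest}(ii), the telescoping sum, and the factor-two double counting, and it finishes with $W_i\le K_i$; your accounts of the constant $640$ (the window width is half that of \cref{alg:cutwidth}) and of where \cref{lem:source-sink} enters are also accurate.
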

\begin{proof}
Let $L_j=\log\frac{2e\,\DLP(G)}{K_j}\ge1$ be the level at the start of iteration $j$, and let $\Delta_j=\log\frac{\Vol_{u_j}^{U_j}(\rho_j)}{\Vol_{u_j}^{U_j}(\rho_{j+1})}$, which is finite since $\rho_{j+1}>0$. As in the proof of \cref{lem:cw-prefix}, averaging over the window, with the boundary as the derivative of $\Vol_{u_j}^{U_j}$ (\cref{lem:directed-volume}(i)), gives
\[
|\delta_{G[U_j]}(X_j)|\le\frac{K_j}{w_j}\min\{1,\Delta_j\}=320\,\frac{\DLP(G)}{n}\min\{1,\Delta_j\},
\]
and \cref{lem:directed-heaviest}(ii) gives $L_{j+1}\ge L_j+\Delta_j$ for $j<k$. Summing over $j<i$, and adding at most $320\frac{\DLP(G)}{n}\le320\frac{\DLP(G)}{n}L_1$ for iteration $i$, we get $\sum_{j\le i}|\delta_{G[U_j]}(X_j)|\le320\frac{\DLP(G)}{n}L_i$. An arc between $X_j$ and $X_{j'}$ with $j<j'$ lies in $\delta_{G[U_j]}(X_j)$, and it is counted at most twice on the left-hand side of \eqref{eq:directed-prefix}, and only if $j\le i$. Hence the left-hand side is at most $640\frac{\DLP(G)}{n}L_i\le640\frac{\DLP(G)}{n}\log\frac{2e\,\DLP(G)}{W_i}$, since $W_i\le K_i$.
\end{proof}

\paragraph{The two-ended order.}
It remains to relate prefixes of the two-ended order to the removal order.

\begin{lemma}\label{lem:directed-blocks}
Let $P\subseteq V$ be a prefix of an ordering of $V$ that lists the pieces in the two-ended order, each piece as a contiguous block, and suppose that the last vertex of $P$ lies in $X_i$. Then every arc of $\delta_G(P)$ that is not an arc of $G[X_i]$ lies in $\delta_G(X_j)$ for some $j\le i$; in particular, there are at most $\sum_{j\le i}|\delta_G(X_j)|$ such arcs.
\end{lemma}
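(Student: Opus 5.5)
Take an arc $e=xy\in\delta_G(P)$ that is not an arc of $G[X_i]$, and let $X_p\ni x$, $X_q\ni y$. The plan is to show $\min\{p,q\}\le i$. Then $e\in\delta_G(X_{\min\{p,q\}})$ with $\min\{p,q\}\le i$, and the counting claim follows because each such arc lies in at least one of the sets $\delta_G(X_j)$, $j\le i$.

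\emph{Case $p=q$.} Both endpoints lie in the same piece $X_p$. If $p=i$, the arc is in $G[X_i]$, which is excluded. If $p\ne i$, the block of $X_p$ lies entirely inside $P$ or entirely outside it, since the last vertex of $P$ is in $X_i$ and blocks are contiguous. Then the arc does not cross $P$, a contradiction. So $p\ne q$, and by \cref{lem:two-ended} the arc goes from the earlier block to the later block in the two-ended order.

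\emph{Case $p\ne q$.} Let $X_a$ be the piece whose block contains the tail, which lies in $P$, and let $X_b$ be the piece containing the head, which lies outside $P$. Then $X_a$ comes no later than $X_i$ in the two-ended order, and $X_b$ no earlier. Suppose for contradiction that $a>i$ and $b>i$, and take $c=\min\{a,b\}$.
\begin{itemize}
\item If $X_c$ is a source piece, the proof of \cref{lem:two-ended} says every arc between $X_c$ and a later-removed piece leaves $X_c$. Hence the tail is in $X_c$, so $c=a$. The source piece $X_a$, removed after $X_i$, is not after $X_i$ in the two-ended order, so $X_i$ must be a source piece too. But then $X_i$ precedes $X_a$, since source pieces are listed in removal order. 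This is a contradiction, unless the tail block $X_a$ coincides with $X_i$'s block, which is impossible because $a\ne i$.
\item If $X_c$ is a sink piece, the symmetric argument gives $c=b$. The sink piece $X_b$ is removed after $X_i$ but does not precede $X_i$ in the two-ended order. Since sink pieces appear in reverse removal order, after all source pieces, $X_b$ would have to precede $X_i$ in both sub-cases ($X_i$ a sink piece or a source piece). This is again a contradiction.
\end{itemize}
Hence $\min\{a,b\}\le i$, as claimed.

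The main obstacle is the case analysis on piece types, which has to handle edge positions relative to $X_i$ precisely. I would organize it through the single observation that every piece removed after $X_i$ is, in the two-ended order, either a source piece placed after $X_i$ or a sink piece placed before $X_i$, split according to whether $X_i$ itself is a source piece or a sink piece.
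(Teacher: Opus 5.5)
Your reduction to showing $\min\{p,q\}\le i$ and your treatment of the case $p=q$ are fine. The two bullets of the case $p\ne q$, however, each rest on a false claim about the two-ended order.

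In the first bullet you infer from ``$X_a$ is a source piece with $a>i$ placed before $X_i$'' that $X_i$ must be a source piece. This does not follow: if $X_i$ is a sink piece, then every source piece, whatever its removal index, is placed before $X_i$. In the second bullet you claim that a sink piece $X_b$ with $b>i$ precedes $X_i$ also when $X_i$ is a source piece. This is false: all sink pieces come after all source pieces, so $X_b$ follows $X_i$.

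Hence two sub-cases are left unproved: $c=a$ with $X_i$ a sink piece, and $c=b$ with $X_i$ a source piece. The contradiction does exist there, but it must come from the \emph{other} endpoint. A piece placed strictly after a sink piece $X_i$ is a sink piece of index less than $i$. A piece placed strictly before a source piece $X_i$ is a source piece of index less than $i$.

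The clean fix is to split on the type of $X_i$ rather than of $X_c$; then neither arc directions nor \cref{lem:two-ended} are needed. An arc of $\delta_G(P)$ with no endpoint in $X_i$ has one endpoint in $P\setminus X_i$, the union of the blocks placed before $X_i$. Its other endpoint lies in $V\setminus(P\cup X_i)$, the union of the blocks placed after $X_i$. If $X_i$ is a source piece, the first set is the union of the source pieces $X_j$ with $j<i$. If $X_i$ is a sink piece, the second set is the union of the sink pieces $X_j$ with $j<i$. Either way the arc lies in $\delta_G(X_j)$ for some $j<i$, which is exactly the paper's argument.

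Your closing remark gestures at this, but as worded it is wrong, since a sink piece removed after a source piece $X_i$ is placed after $X_i$, not before. The correct observation is: all pieces removed after $X_i$ are placed after $X_i$ if it is a source piece, and before $X_i$ if it is a sink piece. Since $X_a$ lies strictly before and $X_b$ strictly after $X_i$, this immediately rules out having both $a>i$ and $b>i$.
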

\begin{proof}
Consider an arc $e\in\delta_G(P)$ that is not an arc of $G[X_i]$. If $e$ has an endpoint in $X_i$, then its other endpoint is outside $X_i$, so $e\in\delta_G(X_i)$. Otherwise, suppose first that $X_i$ is a source piece. Then $P\setminus X_i$ is the union of the source pieces $X_j$ with $j<i$, and $e$ has an endpoint in $P\setminus X_i$, so $e\in\delta_G(X_j)$ for such a $j$. If $X_i$ is a sink piece, then $V\setminus(P\cup X_i)$ is the union of the sink pieces $X_j$ with $j<i$, and $e$ has an endpoint there, so again $e\in\delta_G(X_j)$ for such a $j$.
\end{proof}

\subsection{Proof of \texorpdfstring{\cref{thm:minimum-storage}}{the theorem}}
The final ordering is computed recursively. Process weak components separately; a single vertex is ordered trivially. On a weakly connected instance with at least two vertices, solve \eqref{eq:directed-lp} and run \cref{alg:directed}; then recursively order the weak components of each piece, and concatenate these orders following the two-ended order of the pieces. By \cref{lem:two-ended} and induction, the result is a topological ordering.

\begin{proof}[Proof of \cref{thm:minimum-storage}]
There are no arcs between weak components, so we may assume that $G$ is weakly connected with at least two vertices. Fix a prefix $P$ of the final ordering, and let $v$ be its last vertex. Let $H_0=G,H_1,\ldots,H_T$ be the recursive instances with at least two vertices that contain $v$. For $t\le T$, let $Y_t$ be the piece of $H_t$ containing $v$; $H_{t+1}$ is the weak component of $H_t[Y_t]$ containing $v$, where $H_{T+1}=\{v\}$. Write $\Vol_{H_t}$ for the volume computed in $H_t$.

Every arc of $\delta_G(P)$ is charged to the last level $t\le T$ at which both of its endpoints lie in $H_t$. At level $t$, the charged arcs are arcs of $\delta_G(P)$ inside $H_t$ that are not arcs of $H_t[Y_t]$; an arc of $H_t[Y_t]$ outside $H_{t+1}$ does not cross $P$, since it lies in another weak component of $H_t[Y_t]$, which is ordered as a contiguous block not containing $v$, and hence lies entirely inside or entirely outside $P$. Since $P\cap V(H_t)$ is a prefix of the ordering of $H_t$, \cref{lem:directed-blocks,lem:directed-carving} bound the number of arcs charged to level $t$ by
\[
640\,\frac{\DLP(H_t)}{|V(H_t)|}\log\frac{2e\,\DLP(H_t)}{\Vol_{H_t}(Y_t)}
\le640\,\MS(G)\log\frac{2e\,\DLP(H_t)}{\Vol_{H_t}(Y_t)}.
\]
Here we used that the volume of $Y_t$ is at most its volume when it is carved (\cref{lem:directed-heaviest}(i)), and that $\DLP(H_t)\le|V(H_t)|\MS(H_t)\le|V(H_t)|\MS(G)$ by \cref{lem:directed-lower}.

It remains to bound the sum of the logarithms. Since restricting the solution of $H_t$ to $H_{t+1}$ is feasible, $\DLP(H_{t+1})\le\Vol_{H_t}(Y_t)$ for $t<T$, and $\Vol_{H_T}(Y_T)\ge\DLP(H_T)/|V(H_T)|$, since every vertex contributes $\DLP(H_T)/|V(H_T)|$ to the volume. Hence
\[
\sum_{t=0}^T\log\frac{2e\,\DLP(H_t)}{\Vol_{H_t}(Y_t)}
\le T\log(2e)+\log\frac{\DLP(H_0)}{\DLP(H_T)}+\log\bigl(2e|V(H_T)|\bigr)=O(\log n),
\]
since $T=O(\log n)$ by \cref{lem:directed-well-defined}, $\DLP(G)\le n\MS(G)\le n^3$, and $\DLP(H_T)\ge3/8$. So $|\delta_G(P)|=O(\log n)\MS(G)$.
\end{proof}
\Cref{thm:minimum-storage} is the \textsc{Minimum Storage} statement of \cref{thm:main-directed}.

\begin{remark}[Weighted arcs]\label{rem:weighted-storage}
For positive arc weights $w$, the \emph{weighted storage} of a topological ordering $\pi$ is $\MS^w_\pi(G)=\max_iw(\delta_G(S_i^\pi))$, and $\MS_w(G)=\min_\pi\MS^w_\pi(G)$. \Cref{thm:minimum-storage} also holds for $\MS_w$, for positive rational arc weights $w$ whose largest-to-smallest ratio is polynomial in $n$.
\end{remark}
\begin{proof}
Scale $w$ so that its minimum is $1$, and let $w'_{xy}=\lceil w_{xy}\rceil$. Then $\MS^w_\pi(G)\le\MS^{w'}_\pi(G)\le2\MS^w_\pi(G)$ for every $\pi$. Let $G''$ be obtained from $G$ by replacing every arc $xy$ by $w'_{xy}$ paths $x\to z\to y$, each through a new vertex $z$. Let $\sigma$ be a topological ordering of $G''$, and let $\pi$ be its restriction to $V$. Each such $z$ lies between $x$ and $y$ in $\sigma$, so a path crosses a prefix of $\sigma$ if and only if $x$ lies in the prefix and $y$ does not, and then it crosses it exactly once. Hence $\MS_\sigma(G'')=\MS^{w'}_\pi(G)$. Conversely, every $\pi$ extends to such a $\sigma$ by placing each $z$ right after $x$. Since $G''$ has polynomially many vertices, \cref{thm:minimum-storage} applied to $G''$ proves the claim.
\end{proof}

\section{Pathwidth and Treedepth}\label{sec:vertex}
This section proves \cref{thm:main-treedepth} and the \textsc{Pathwidth} statement of \cref{thm:main-layout}.

\subsection{The relaxation}
For a graph $G$ with $n$ vertices, we use the vertex-separation relaxation of~\cite{BKS24}:
\begin{equation}\label{eq:vertex-lp}
\begin{aligned}
\LPVS(G)=\min\quad &\sum_{v\in V(G)}x_v\\
\text{s.t.}\quad &x_u+x_v\ge d(u,v) &&uv\in E(G),\\
&d(u,w)\le d(u,v)+d(v,w) &&u,v,w\in V(G),\\
&\sum_{v\in S}d(u,v)\ge\frac{|S|^2-1}{4} &&u\in V(G),\ S\subseteq V(G),\\
&d(u,v)=d(v,u),\quad d(u,u)=0 &&u,v\in V(G),\\
&d(u,v)\ge0,\quad x_v\ge0 &&u,v\in V(G).
\end{aligned}
\end{equation}
It has a polynomial-time separation oracle (\cref{app:implementation}).

\begin{lemma}\label{lem:vertex-lower}
For every graph $H$, $\LPVS(H)\le |V(H)|\pw(H)$.
\end{lemma}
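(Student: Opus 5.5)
We exhibit a feasible solution of \eqref{eq:vertex-lp} on $H$ with objective at most $|V(H)|\pw(H)$, built from an ordering of optimal vertex-separation cost, exactly as \eqref{eq:cw-density} was derived for cutwidth. By \cref{lem:vs-pw}, there is an ordering $\pi$ of $V(H)$ with $\VS_\pi(H)=\pw(H)$.

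Take $d=d_\pi$, that is, $d(u,v)=|\pi(u)-\pi(v)|$. The triangle inequality, symmetry, nonnegativity and $d(u,u)=0$ are immediate. The spreading constraints hold by \cref{lem:line-spreading}.

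For the vertex variables, let $x_v=\max\{0,\max_{w:\,vw\in E,\ \pi(w)>\pi(v)}(\pi(w)-\pi(v))\}$. This is the distance from $v$ to its furthest later neighbour, and $0$ if $v$ has no later neighbour. For an edge $uv$ with $\pi(u)<\pi(v)$, we get $x_u\ge\pi(v)-\pi(u)=d(u,v)$, so $x_u+x_v\ge d(u,v)$.

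For the objective, $x_v$ equals the number of indices $i\in\{\pi(v),\ldots,\pi(v)+x_v-1\}$. For each such $i$, $v$ lies in the prefix $S_i^\pi$ and has a neighbour outside it, namely its furthest later neighbour. So each $v$ is counted in exactly $x_v$ of the active sets of the prefixes $S_i^\pi$, and
\[
\sum_v x_v=\sum_{i<|V(H)|}\bigl|\{v\in S_i^\pi: v \text{ has a neighbour outside } S_i^\pi\}\bigr|\le |V(H)|\,\VS_\pi(H)=|V(H)|\pw(H).
\]

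The only step that needs care is this double count. One must verify that $v$ is active in precisely the prefixes $S_i^\pi$ with $\pi(v)\le i<\pi(v)+x_v$. This holds because $v$ is active in $S_i^\pi$ exactly when $v$ is in the prefix and its furthest later neighbour $w$ is not, that is, when $\pi(v)\le i<\pi(w)$. Everything else is direct.
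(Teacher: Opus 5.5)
Your proof is correct and follows the paper's own argument. The paper also takes an ordering of optimal vertex-separation cost, the line metric $d_\pi$, $x_v$ as the distance to the furthest later neighbour, and double counts the active vertices over all prefixes; your check that $v$ is active exactly in the prefixes $S_i^\pi$ with $\pi(v)\le i<\pi(v)+x_v$ spells out a step the paper states in one line.
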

\begin{proof}
By \cref{lem:vs-pw}, some ordering $\pi$ has vertex-separation cost $\pw(H)$. Set $d(u,v)=|\pi(u)-\pi(v)|$ and $x_u=\max(\{\pi(v)-\pi(u):uv\in E(H)\}\cup\{0\})$. This is a metric, and it satisfies spreading by \cref{lem:line-spreading}. For every edge $uv$ with $\pi(u)<\pi(v)$, we have $x_u+x_v\ge x_u\ge d(u,v)$. Finally, $x_u$ is the number of prefixes of $\pi$ in which $u$ is active, that is, in which $u$ is processed and has an unprocessed neighbor. So $\sum_ux_u$ is the sum over all prefixes of the number of active vertices, which is at most $|V(H)|\pw(H)$.
\end{proof}

\begin{lemma}\label{lem:forest-order}
Given an elimination forest of height $h$, a depth-first preorder has vertex-separation cost at most $h-1$. In particular, $\pw(H)\le\td(H)-1$ for every graph $H$.
\end{lemma}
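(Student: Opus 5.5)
We fix an elimination forest $\mathcal{F}$ of height $h$ and let $\pi$ be a depth-first preorder of $\mathcal{F}$: the trees are visited one after another, and within each tree every vertex precedes its descendants, with each subtree occupying a contiguous block. The plan is to show that every prefix $S_i^\pi$ has at most $h-1$ active vertices, that is, processed vertices with an unprocessed neighbor.

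Fix a proper prefix $P=S_i^\pi$ and let $v=v_{i+1}$ be the first unprocessed vertex. We claim that every active vertex of $P$ is a proper ancestor of $v$ in $\mathcal{F}$. Let $u\in P$ be active, with unprocessed neighbor $y$. Since $uy$ is an edge, $u$ and $y$ are comparable in the ancestor order. As $u$ precedes $y$ in the preorder, and a descendant never precedes its ancestor, $u$ is a proper ancestor of $y$. Hence $y$ lies in the subtree of $u$.

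This subtree is a contiguous block of $\pi$ that starts at $u\in P$ and contains $y\notin P$. It therefore contains the boundary position, so it also contains $v$. Thus $u$ is an ancestor of $v$, and $u\ne v$ because $u\in P$.

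The proper ancestors of $v$ lie on a root-to-$v$ path, which has at most $h$ vertices including $v$. So there are at most $h-1$ of them, and hence at most $h-1$ active vertices in $P$. This gives vertex-separation cost at most $h-1$.

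For the corollary, take an elimination forest of height $\td(H)$. The resulting ordering has vertex-separation cost at most $\td(H)-1$, and \cref{lem:vs-pw} converts it into a path decomposition of that width, so $\pw(H)\le\td(H)-1$.

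The only delicate point is the contiguity argument showing that every active vertex is an ancestor of the next vertex $v$. This requires that height counts vertices on a root-to-leaf path, so that a vertex has at most $h-1$ proper ancestors, which is the standard convention for treedepth.
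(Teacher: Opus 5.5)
Your proposal is correct and follows the paper's proof: comparability makes every active vertex an ancestor of its unprocessed neighbor, hence a strict ancestor of the next vertex, which has at most $h-1$ such ancestors, and \cref{lem:vs-pw} gives the corollary. Your contiguity argument for the subtree of $u$ spells out what the paper compresses into ``at a depth-first boundary.''
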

\begin{proof}
Consider a proper prefix of the preorder, and let $z$ be the next vertex. If a processed vertex $u$ has an unprocessed neighbor $v$, then $u$ and $v$ are comparable, so $u$ is an ancestor of $v$. At a depth-first boundary, every processed vertex with an unprocessed descendant is a strict ancestor of $z$, and there are at most $h-1$ of them. The last claim follows from \cref{lem:vs-pw}.
\end{proof}

\subsection{Cores and separators}
From now on, $G$ is connected with $n\ge2$ vertices and $(d,x)$ is an optimal solution of \eqref{eq:vertex-lp}. Around a vertex $u$, each vertex $v$ occupies the interval of radii $[d(u,v)-x_v,\,d(u,v)+x_v]$. For a set $U\subseteq V(G)$, a vertex $u\in U$, and a radius $r$, the \emph{core} consists of the vertices of $U$ whose interval lies below $r$, and the \emph{separator} of those whose interval contains $r$:
\[
S^U(u,r)=\{v\in U:d(u,v)+x_v<r\},
\qquad
D^U(u,r)=\{v\in U:d(u,v)-x_v<r<d(u,v)+x_v\}.
\]
We only use radii that differ from all values $d(u,v)$ and $d(u,v)\pm x_v$. For such a radius, every $v\in U$ with $d(u,v)-x_v<r$ lies in $S^U(u,r)\cup D^U(u,r)$, and every other $v\in U$ satisfies $d(u,v)-x_v>r$.

\begin{lemma}[Core--separator property]\label{lem:core-separator}
No edge joins $S^U(u,r)$ to $U\setminus(S^U(u,r)\cup D^U(u,r))$. If $r\le n/10$, then $|S^U(u,r)|\le n/2$.
\end{lemma}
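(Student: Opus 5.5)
The lemma has two parts, and I would prove each directly from the LP constraints in \eqref{eq:vertex-lp}.

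\emph{No edge leaves the core except into the separator.} Take an edge $vw$ with $v\in S^U(u,r)$ and $w\in U$, and suppose $w\notin S^U(u,r)\cup D^U(u,r)$. By the convention on radii, $r$ differs from all values $d(u,\cdot)$ and $d(u,\cdot)\pm x_\cdot$, so $w$ satisfies $d(u,w)-x_w>r$. Since $v$ is in the core, $d(u,v)+x_v<r$. Subtracting the two inequalities gives
\[
d(u,w)-d(u,v)>x_v+x_w.
\]
The edge constraint gives $x_v+x_w\ge d(v,w)$, and the triangle inequality gives $d(v,w)\ge d(u,w)-d(u,v)$. Chaining these yields $d(u,w)-d(u,v)>d(u,w)-d(u,v)$, a contradiction. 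I would also remark that the argument never uses $U$: it only uses that both endpoints are compared against the same center $u$.

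\emph{Size bound.} Every $v\in S^U(u,r)$ satisfies $d(u,v)\le d(u,v)+x_v<r\le n/10$, because $x_v\ge0$. So the core is contained in the ball $\{v:d(u,v)<n/10\}$. I would then repeat the computation of \cref{lem:cw-volume}(iv). Suppose the core contains a set $X$ with $s=\lfloor n/2\rfloor+1$ elements. Then
\[
\sum_{x\in X}d(u,x)<\frac{sn}{10}\le\frac{s(2s-1)}{10}\le\frac{s^2-1}{4},
\]
which contradicts the spreading constraint of \eqref{eq:vertex-lp} with center $u$ and set $X$. This constraint has exactly the same form as in \eqref{eq:cw-lp}, so the earlier arithmetic applies unchanged.

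Neither step is a real obstacle. The only subtle point is the genericity convention on $r$. It is what rules out the boundary case $d(u,w)-x_w=r$, so I would invoke it explicitly there.
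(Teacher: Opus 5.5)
Your proof is correct and follows the paper's argument step for step. The first part is the chain $d(u,v)+x_v<r<d(u,w)-x_w$ combined with the edge constraint and the triangle inequality, and the second is the spreading computation of \cref{lem:cw-volume}(iv), applied to the core, which lies in the ball of radius $n/10$.

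One small correction to your closing remark concerns the genericity of $r$. It is really needed to exclude $r=d(u,w)+x_w$ with $x_w>0$: there $w$ lies in neither $S^U(u,r)$ nor $D^U(u,r)$, yet $d(u,w)-x_w<r$. The case $d(u,w)-x_w=r$ is harmless, because strictness already comes from $d(u,v)+x_v<r$.
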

\begin{proof}
Let $p\in S^U(u,r)$ and $q\in U\setminus(S^U(u,r)\cup D^U(u,r))$. Then $d(u,p)+x_p<r<d(u,q)-x_q$, so $d(u,q)-d(u,p)>x_p+x_q$. If $pq$ were an edge, the edge constraint would give $x_p+x_q\ge d(p,q)$, and hence $d(u,q)>d(u,p)+d(p,q)$, contradicting the triangle inequality.

For the second claim, every vertex $v$ of the core satisfies $d(u,v)\le d(u,v)+x_v<r\le n/10$. Suppose that the core contains a set $X$ of size $s=\lfloor n/2\rfloor+1$. Then $s\ge2$ and $n\le2s-1$, so $\sum_{v\in X}d(u,v)<sn/10\le s(2s-1)/10\le(s^2-1)/4$, contradicting the spreading constraint for $u$ and $X$.
\end{proof}
So the separator disconnects the core from the rest of $U$, and it plays the role of the edge boundary in \cref{sec:cutwidth}.

\paragraph{Volumes.}
Let $B^U(u,r)=\{v\in U:d(u,v)<r\}$. The volume counts $\LPVS(G)/n$ for every vertex of this ball, and the part below $r$ of every interval:
\[
\Vol_u^U(r)=\frac{\LPVS(G)}{n}\bigl|B^U(u,r)\bigr|
+\sum_{v\in U}\min\bigl\{2x_v,\max\{0,r-d(u,v)+x_v\}\bigr\}.
\]
This formula makes sense for every real $r$.

\begin{lemma}\label{lem:vertex-volume}
For every $U\subseteq V(G)$, every $u\in U$, and every $r>0$:
\begin{enumerate}[label=(\roman*),leftmargin=*]
\item $\Vol_u^U$ is a volume function (\cref{sec:prelim-rg}) with boundary cost $|D^U(u,r)|$ and normalization $3\LPVS(G)$; its events are the values $d(u,v)$ and $d(u,v)\pm x_v$;
\item $\LPVS(G)/n\le\Vol_u^U(r)\le3\LPVS(G)$;
\item removing vertices other than $u$ from $U$ does not increase $\Vol_u^U(r)$.
\end{enumerate}
\end{lemma}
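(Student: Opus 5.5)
I would prove the three items directly from the formula for $\Vol_u^U$, following the proofs of \cref{lem:cw-volume} and \cref{lem:directed-volume}. Write the volume as a sum of two kinds of terms: the \emph{vertex terms} $\frac{\LPVS(G)}{n}[d(u,v)<r]$ and the \emph{interval terms} $g_v(r)=\min\{2x_v,\max\{0,r-d(u,v)+x_v\}\}$, one for each $v\in U$.

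For (i), each interval term $g_v$ is continuous and piecewise affine. It is $0$ for $r\le d(u,v)-x_v$, has slope one on $(d(u,v)-x_v,\,d(u,v)+x_v)$, and is constant at $2x_v$ afterwards. So away from the values $d(u,v)$ and $d(u,v)\pm x_v$, the derivative of the sum of interval terms is exactly the number of $v\in U$ with $d(u,v)-x_v<r<d(u,v)+x_v$, which is $|D^U(u,r)|$. If $x_v=0$, the interval is degenerate and contributes nothing, which is consistent with $D^U$ being defined by strict inequalities. Each vertex term is a step function that jumps upward by $\LPVS(G)/n$ at $r=d(u,v)$. Hence $\Vol_u^U$ is nondecreasing and piecewise affine with finitely many events, namely the stated values. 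It jumps only upward, and its boundary cost is $|D^U(u,r)|$. The normalization $3\LPVS(G)$ will follow from (ii).

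For (ii), the vertex terms sum to at most $n\cdot\LPVS(G)/n=\LPVS(G)$. The interval terms sum to at most $\sum_v2x_v=2\LPVS(G)$, since $(d,x)$ is optimal, so $\sum_vx_v=\LPVS(G)$. This gives the upper bound $3\LPVS(G)$. For the lower bound, $d(u,u)=0<r$, so $u\in B^U(u,r)$ and its vertex term alone gives $\LPVS(G)/n$.

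For (iii), every term is indexed by a single vertex $v\in U$, and its value depends only on $d$, $x$, $u$ and $r$, not on $U$. Removing a vertex $y\neq u$ from $U$ therefore deletes the nonnegative terms of $y$ and leaves every other term unchanged. I do not expect any real obstacle here. The only point needing care is that, unlike the edge versions, there is no triangle-inequality argument for the jumps: the interval terms are continuous, and the only jumps come from the vertex terms, which are upward.
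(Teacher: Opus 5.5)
Your proposal is correct and follows the paper's proof essentially step by step. You split the volume into vertex terms and interval terms, and read off the slope $|D^U(u,r)|$ and the upward jumps at $d(u,v)$ for (i). For (ii) you bound the sum by $\LPVS(G)+2\LPVS(G)$ and take the lower bound from the vertex term of $u$. For (iii) you observe that every term depends only on its own vertex and not on $U$.
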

\begin{proof}
\begin{enumerate}[label=(\roman*),leftmargin=*]
\item The term $\min\{2x_v,\max\{0,r-d(u,v)+x_v\}\}$ of a vertex $v$ is continuous in $r$, and away from the values $d(u,v)\pm x_v$ it has slope one exactly when $d(u,v)-x_v<r<d(u,v)+x_v$, that is, when $v\in D^U(u,r)$. So the derivative of $\Vol_u^U$ is $|D^U(u,r)|$. The only jumps come from the term $\LPVS(G)/n$ of a vertex $v$, which is added when $r$ reaches $d(u,v)$; so $\Vol_u^U$ only jumps upward.
\item The vertex terms sum to at most $\LPVS(G)$, and the interval terms to at most $\sum_v2x_v=2\LPVS(G)$. Since $u\in B^U(u,r)$, the volume is at least $\LPVS(G)/n$.
\item Removing a vertex $v\ne u$ from $U$ removes its two terms and leaves all other terms unchanged.
\end{enumerate}
\end{proof}

\subsection{The algorithm}
\Cref{alg:vertex} is \cref{alg:cutwidth} with these volumes, except that each iteration removes a core together with its separator.

\begin{framed}
\noindent\refstepcounter{algorithm}\label{alg:vertex}\textbf{Algorithm \thealgorithm} (Global Carving with separators)\\[2pt]
\textbf{Input:} a connected graph $G$ with $n\ge2$ vertices, and an optimal solution $(d,x)$ of \eqref{eq:vertex-lp}.\\
\textbf{Output:} pairs $(S_1,D_1),\ldots,(S_k,D_k)$ of cores and separators.
\begin{enumerate}[label=\arabic*:,leftmargin=2.2em,itemsep=1pt,topsep=4pt]
\item Initialize $U\leftarrow V(G)$, $\rho\leftarrow n/10$, and $i\leftarrow0$.
\item \textbf{while} $U\ne\emptyset$ \textbf{do}
\item \quad $i\leftarrow i+1$, $U_i\leftarrow U$, $\rho_i\leftarrow\rho$, $K_i\leftarrow\max_{u\in U}\Vol_u^U(\rho)$, and $w_i\leftarrow\frac{n}{240}\cdot\frac{K_i}{\LPVS(G)}$.
\item \quad Let $u_i\in U$ be a vertex maximizing $\Vol_u^U(\rho-w_i)$.
\item \quad Let $r_i\in(\rho-w_i,\rho)$ be a radius that is not an event of $\Vol_{u_i}^U$ and minimizes $|D^U(u_i,r_i)|$.
\item \quad $S_i\leftarrow S^U(u_i,r_i)$, $D_i\leftarrow D^U(u_i,r_i)$, $U\leftarrow U\setminus(S_i\cup D_i)$, and $\rho\leftarrow\rho-w_i$.
\item \textbf{return} the pairs $(S_i,D_i)$ in order of $i$.
\end{enumerate}
\end{framed}

\noindent As in \cref{sec:cw-algorithm}, Line~5 is well defined as long as $\rho>0$; we show in \cref{lem:separator-peeling} that $\rho$ stays above $n/20$. The window of iteration $i$ is $(\rho_{i+1},\rho_i)$, where $\rho_{i+1}=\rho_i-w_i$, and we write $W_i=\Vol_{u_i}^{U_i}(r_i)$ for the volume of iteration $i$. As in Line~6, $U_{i+1}=U_i\setminus(S_i\cup D_i)$.

\subsection{Analysis}
The following three lemmas are the analogues of \cref{lem:heaviest,lem:carving-props,lem:cw-prefix}, and their proofs are the same.

\begin{lemma}[Volumes]\label{lem:separator-heaviest}
For every iteration $i$:
\begin{enumerate}[label=(\roman*),leftmargin=*]
\item $W_i\le K_i$, and $\LPVS(G[C])\le\sum_{v\in C}x_v\le\frac12W_i$ for every $C\in\cc(G[S_i])$;
\item if $i<k$, then $K_{i+1}\le\Vol_{u_i}^{U_i}(\rho_{i+1})\le W_i$, and hence $W_{i+1}\le W_i$.
\end{enumerate}
Moreover, $\sum_iW_i\le3\LPVS(G)$.
\end{lemma}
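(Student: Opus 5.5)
The plan is to mirror the proof of \cref{lem:heaviest}, with vertex intervals playing the role of edges. The one genuinely new ingredient is the claim about $\LPVS(G[C])$ in (i).

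\emph{Part (i), first claim.} Since $r_i<\rho_i$ and $\Vol_{u_i}^{U_i}$ is nondecreasing by \cref{lem:vertex-volume}(i), we get $W_i\le\Vol_{u_i}^{U_i}(\rho_i)\le K_i$ by the choice of $K_i$ in Line~3.

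\emph{Part (i), second claim.} Every $v\in S_i$ satisfies $d(u_i,v)+x_v<r_i$. Its volume term is therefore the full $2x_v$. Hence $\sum_{v\in S_i}2x_v\le W_i$, which gives $\sum_{v\in C}x_v\le\frac12W_i$ for every component $C$ of $G[S_i]$.

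For $\LPVS(G[C])\le\sum_{v\in C}x_v$, restrict the optimal solution $(d,x)$ to $C$. The edge constraints, the triangle inequalities, symmetry and nonnegativity all survive restriction. The spreading constraints for $u\in C$ and $S\subseteq C$ are among the original constraints. So the restriction is feasible for \eqref{eq:vertex-lp} on $G[C]$, and its objective is $\sum_{v\in C}x_v$. This is the step that differs from the edge case. There the analogous bound came from edge lengths counted in $\Vol(X_i)$; here it comes from the interval terms of core vertices being complete.

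\emph{Part (ii).} Since $\rho_{i+1}<r_i$, monotonicity gives $\Vol_{u_i}^{U_i}(\rho_{i+1})\le W_i$. For any $u\in U_{i+1}$, \cref{lem:vertex-volume}(iii) gives $\Vol_u^{U_{i+1}}(\rho_{i+1})\le\Vol_u^{U_i}(\rho_{i+1})$, and the choice of $u_i$ in Line~4 gives $\Vol_u^{U_i}(\rho_{i+1})\le\Vol_{u_i}^{U_i}(\rho_{i+1})$. Taking the maximum over $u$ yields $K_{i+1}\le\Vol_{u_i}^{U_i}(\rho_{i+1})$. Then $W_{i+1}\le K_{i+1}$ follows from (i).

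\emph{The sum bound.} I expect this to be the only point needing care, since a vertex can contribute to $W_i$ without being removed in iteration $i$. I would charge each term to the iteration in which it is counted.

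A vertex $v$ contributes $\LPVS(G)/n$ to $W_i$ only if $v\in B^{U_i}(u_i,r_i)$, i.e.\ $d(u_i,v)<r_i$. Then $d(u_i,v)-x_v<r_i$, so $v\in S_i\cup D_i$ by the genericity remark after the definition of cores. Hence $v$ is removed in iteration $i$ and never counted again.

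Its interval term $\min\{2x_v,\cdot\}$ is positive only if $d(u_i,v)-x_v<r_i$, which again forces $v\in S_i\cup D_i$.

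So each vertex contributes to at most one $W_i$, and at most $\LPVS(G)/n+2x_v$ there. Summing over $v$ gives $\sum_iW_i\le\LPVS(G)+2\LPVS(G)=3\LPVS(G)$.
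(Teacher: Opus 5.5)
Your proof is correct and follows the paper's argument essentially step for step: monotonicity and the choice of $K_i$ give $W_i\le K_i$; the complete interval terms of core vertices, together with restricting $(d,x)$ to $C$, give the $\LPVS(G[C])$ bound; monotonicity, \cref{lem:vertex-volume}(iii) and Line~4 give (ii); and charging each vertex's terms to the unique iteration in which it is removed gives $\sum_iW_i\le3\LPVS(G)$. Your separate treatment of the vertex term and the interval term in the sum bound is just a more explicit version of the paper's single observation that any contributing vertex satisfies $d(u_i,v)-x_v<r_i$ and hence lies in $S_i\cup D_i$.
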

\begin{proof}
\begin{enumerate}[label=(\roman*),leftmargin=*]
\item As in \cref{lem:heaviest}, $W_i\le K_i$. Every $v\in S_i$ satisfies $r_i-d(u_i,v)+x_v>2x_v$, so it contributes its full interval length $2x_v$ to $W_i$; thus $\sum_{v\in C}2x_v\le W_i$. Restricting $(d,x)$ to $C$ is feasible for \eqref{eq:vertex-lp} on $G[C]$, which gives $\LPVS(G[C])\le\sum_{v\in C}x_v$.
\item As in the proof of \cref{lem:heaviest}, using \cref{lem:vertex-volume}(iii) and the choice of $u_i$ in Line~4.
\end{enumerate}
For the sum, a vertex $v$ contributes to $W_i$ only if $v\in U_i$ and $d(u_i,v)-x_v<r_i$, and then it lies in $S_i\cup D_i$. So every vertex contributes only to the volume of the iteration in which it is removed, and there it contributes at most $\LPVS(G)/n+2x_v$. Hence $\sum_iW_i\le\LPVS(G)+2\LPVS(G)$.
\end{proof}

\begin{lemma}[Radius]\label{lem:separator-peeling}
The radius never drops below $n/20$. \Cref{alg:vertex} terminates after at most $n$ iterations, in polynomial time, and the sets $S_i,D_i$ partition $V(G)$. Moreover, for every $i$, $|S_i|\le n/2$, and no edge joins $S_i$ to $U_{i+1}$.
\end{lemma}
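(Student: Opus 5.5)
The plan follows the proof of \cref{lem:carving-props}, with the vertex volumes in place of the edge volumes. The argument has three parts.

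\emph{Radius bound.} Fix $j<k$. By \cref{lem:separator-heaviest}(ii), $K_{j+1}\le\Vol_{u_j}^{U_j}(\rho_{j+1})\le W_j$. Split into two cases:
\begin{itemize}
\item if $\Vol_{u_j}^{U_j}(\rho_{j+1})\ge K_j/2$, then $K_j\le 2W_j$;
\item otherwise $K_{j+1}<K_j/2$, so $K_j\le 2(K_j-K_{j+1})$.
\end{itemize}
In both cases $K_j\le 2W_j+2(K_j-K_{j+1})$. Summing over $j<i$ and adding $K_i\le K_1$ gives
\[
\sum_{j\le i}K_j\le 2\sum_{j<i}W_j+2K_1.
\]
Now use $\sum_jW_j\le3\LPVS(G)$ (\cref{lem:separator-heaviest}) and $K_1\le3\LPVS(G)$ (\cref{lem:vertex-volume}(ii)). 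This yields $\sum_{j\le i}K_j\le 12\LPVS(G)$, and therefore
\[
\rho_{i+1}=\frac n{10}-\frac{n}{240\,\LPVS(G)}\sum_{j\le i}K_j\ge\frac n{10}-\frac n{20}=\frac n{20}.
\]
The constant $240$ was chosen exactly to absorb the normalization $3$ instead of $2$. Every window therefore lies in $(n/20,n/10)$, and Line~5 is well defined: $\rho>0$ and $K_i>0$, and the separator size is constant between consecutive events, so a minimizer exists among finitely many candidates.

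\emph{Termination and partition.} Since $r_i>0$ and $x_{u_i}\ge0$, the center $u_i$ has $d(u_i,u_i)-x_{u_i}\le0<r_i$. As $r_i$ is not an event, the remark after the definition of cores places $u_i$ in $S_i\cup D_i$. So each iteration removes at least one vertex, and there are at most $n$ iterations. The sets $S_i$ and $D_i$ are disjoint by definition, and the sets removed in different iterations are disjoint because they are taken from the current $U$. The loop runs until $U=\emptyset$, so these sets partition $V(G)$. Each iteration evaluates piecewise affine functions at polynomially many event points, which takes polynomial time.

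\emph{Size and separation.} Since $r_i<\rho_i\le n/10$, \cref{lem:core-separator} applied in $U_i$ gives $|S_i|\le n/2$. The same lemma says that no edge joins $S_i$ to $U_i\setminus(S_i\cup D_i)=U_{i+1}$. The only delicate point is the radius bound, and even that is a direct transcription with the normalization $3\LPVS(G)$ and the rescaled window width $n/240$. The rest is bookkeeping.
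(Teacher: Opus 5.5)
Your proof is correct and follows the paper's argument essentially step for step. The shared steps are the case split giving $K_j\le 2W_j+2(K_j-K_{j+1})$, the bound $\sum_{j\le i}K_j\le 12\LPVS(G)$ matched to the window constant $240$, the removal of $u_i$ using $r_i\ne x_{u_i}$, and \cref{lem:core-separator} applied in $U_i$ for the size and separation claims. The extra details you add, namely well-definedness of Line~5 and disjointness of the removed sets, are sound.
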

\begin{proof}
As in the proof of \cref{lem:carving-props}, \cref{lem:separator-heaviest}(ii) gives $K_j\le2W_j+2(K_j-K_{j+1})$ for $j<k$. So $\sum_{j\le i}K_j\le2\sum_{j<i}W_j+2K_1\le12\LPVS(G)$, and $\rho_{i+1}=\frac n{10}-\frac{n}{240\,\LPVS(G)}\sum_{j\le i}K_j\ge\frac n{20}$.

Every iteration removes $u_i$: since $d(u_i,u_i)=0$, $r_i>0$ and $r_i\ne x_{u_i}$, the vertex $u_i$ lies in $S_i$ if $x_{u_i}<r_i$ and in $D_i$ otherwise. So there are at most $n$ iterations, and the sets $S_i,D_i$ partition $V(G)$. Since $r_i<n/10$, \cref{lem:core-separator} applied in $U_i$ gives $|S_i|\le n/2$ and shows that no edge joins $S_i$ to $U_i\setminus(S_i\cup D_i)=U_{i+1}$.
\end{proof}

\begin{lemma}[Prefix bound]\label{lem:separator-prefix}
For every $i$,
\begin{equation}\label{eq:separator-prefix}
\sum_{j\le i}|D_j|\le240\,\frac{\LPVS(G)}{n}\log\frac{3e\,\LPVS(G)}{W_i}.
\end{equation}
\end{lemma}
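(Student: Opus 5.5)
The plan is to follow the proof of \cref{lem:cw-prefix} verbatim, with the separator $D_j$ in place of the edge boundary. One simplification helps: the left-hand side of \eqref{eq:separator-prefix} already counts each removed separator vertex exactly once, since the sets $D_j$ are disjoint by \cref{lem:separator-peeling}. So there is no factor two from double counting.

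We set the level $L_j=\log\frac{3e\,\LPVS(G)}{K_j}$, which is at least one by \cref{lem:vertex-volume}(ii) since $K_j\le3\LPVS(G)$. For each iteration we let
\[
\Delta_j=\log\frac{\Vol_{u_j}^{U_j}(\rho_j)}{\Vol_{u_j}^{U_j}(\rho_{j+1})}\ge0.
\]
This is finite because $\rho_{j+1}\ge n/20>0$ by \cref{lem:separator-peeling}.

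\emph{Cost of a step.} By \cref{lem:vertex-volume}(i), $\Vol_{u_j}^{U_j}$ jumps only upward, and away from events its derivative is $|D^{U_j}(u_j,r)|$. Hence the integral of $|D^{U_j}(u_j,r)|$ over the window $(\rho_{j+1},\rho_j)$ is at most $\Vol_{u_j}^{U_j}(\rho_j)-\Vol_{u_j}^{U_j}(\rho_{j+1})=\Vol_{u_j}^{U_j}(\rho_j)(1-e^{-\Delta_j})$. Line~5 picks the event-free radius with the smallest separator, so
\[
|D_j|\le\frac{\Vol_{u_j}^{U_j}(\rho_j)}{w_j}(1-e^{-\Delta_j})\le\frac{K_j}{w_j}\min\{1,\Delta_j\}=240\,\frac{\LPVS(G)}{n}\min\{1,\Delta_j\},
\]
using $\Vol_{u_j}^{U_j}(\rho_j)\le K_j$ and the choice $w_j=\frac{n}{240}\frac{K_j}{\LPVS(G)}$.

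\emph{Rise of the level.} For $j<k$, \cref{lem:separator-heaviest}(ii) gives
\[
K_{j+1}\le\Vol_{u_j}^{U_j}(\rho_{j+1})=e^{-\Delta_j}\Vol_{u_j}^{U_j}(\rho_j)\le e^{-\Delta_j}K_j,
\]
so $L_{j+1}\ge L_j+\Delta_j$.

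\emph{Summing.} Fix $i$. Each step $j<i$ contributes at most $240\frac{\LPVS(G)}{n}(L_{j+1}-L_j)$, and these telescope to $240\frac{\LPVS(G)}{n}(L_i-L_1)$. Step $i$ contributes at most $240\frac{\LPVS(G)}{n}$, which is at most $240\frac{\LPVS(G)}{n}L_1$. Adding the two gives
\[
\sum_{j\le i}|D_j|\le240\,\frac{\LPVS(G)}{n}\,L_i\le240\,\frac{\LPVS(G)}{n}\log\frac{3e\,\LPVS(G)}{W_i},
\]
where the last step uses $W_i\le K_i$ from \cref{lem:separator-heaviest}(i).

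The only point needing care is the averaging step: the derivative identity must hold on the whole window except at finitely many events. This is exactly \cref{lem:vertex-volume}(i), because the interval terms are continuous and the vertex terms only jump up. The rest is routine bookkeeping, and no new obstacle is expected.
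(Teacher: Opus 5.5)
Your proposal is correct and follows the paper's own proof essentially line for line. It bounds each $|D_j|$ by averaging the derivative of $\Vol_{u_j}^{U_j}$ over the window, derives the level rise $L_{j+1}\ge L_j+\Delta_j$ from \cref{lem:separator-heaviest}(ii), and telescopes, absorbing the last step into $L_1\ge1$; as you note, there is no factor two because the left-hand side directly sums the separators removed in each iteration.
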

\begin{proof}
Let $L_j=\log\frac{3e\,\LPVS(G)}{K_j}\ge1$ be the level at the start of iteration $j$, and let $\Delta_j=\log\frac{\Vol_{u_j}^{U_j}(\rho_j)}{\Vol_{u_j}^{U_j}(\rho_{j+1})}$. As in the proof of \cref{lem:cw-prefix}, averaging over the window, with the separator size as the derivative of $\Vol_{u_j}^{U_j}$ (\cref{lem:vertex-volume}(i)), gives
\[
|D_j|\le\frac{K_j}{w_j}\min\{1,\Delta_j\}=240\,\frac{\LPVS(G)}{n}\min\{1,\Delta_j\},
\]
and \cref{lem:separator-heaviest}(ii) gives $L_{j+1}\ge L_j+\Delta_j$ for $j<k$. Summing over $j<i$, and adding at most $240\frac{\LPVS(G)}{n}\le240\frac{\LPVS(G)}{n}L_1$ for iteration $i$, we get $\sum_{j\le i}|D_j|\le240\frac{\LPVS(G)}{n}L_i\le240\frac{\LPVS(G)}{n}\log\frac{3e\,\LPVS(G)}{W_i}$, since $W_i\le K_i$.
\end{proof}

\subsection{Constructing the forest}
The final forest is computed recursively. Process connected components separately; a single vertex forms a forest of height one. On a connected graph $G$ with $n\ge2$ vertices, solve \eqref{eq:vertex-lp}, run \cref{alg:vertex}, recursively build forests for the components of each $G[S_i]$, and assemble the pairs backward: for $i$ from last to first, the forest on $U_i$ consists of the vertices of $D_i$ on a chain, with the forest already built on $U_{i+1}$ and the recursive forests for the components of $G[S_i]$ attached, as siblings, below the last vertex of the chain. If $D_i$ is empty, the forest on $U_i$ is their disjoint union. The output is the forest on $U_1=V(G)$; see \cref{fig:forest}.

\begin{figure}[t]
\centering
\begin{tikzpicture}[x=1cm,y=1cm,
  vtx/.style={circle,fill=black,inner sep=1.3pt},
  link/.style={draw=black!70},
  tri/.style={draw=blue!60!black,fill=blue!10},
  rest/.style={draw=black!60,densely dashed,fill=black!4}]
  % ---------- (a) one gluing step ----------
  \begin{scope}
    \node[vtx] (a1) at (0,0) {};
    \node[vtx] (a2) at (0,-0.55) {};
    \node[vtx] (a3) at (0,-1.1) {};
    \draw[link] (a1)--(a2)--(a3);
    \draw[decorate,decoration={brace,amplitude=4pt}] (-0.25,-1.2) -- (-0.25,0.1) node[midway,left=5pt,font=\footnotesize] {$D_i$};
    \foreach \x in {-1.95,-1.05} {
      \draw[link] (a3) -- (\x,-1.85);
      \draw[tri] (\x,-1.85) -- ++(-0.3,-0.75) -- ++(0.6,0) -- cycle;
    }
    \node[font=\footnotesize,align=center] at (-1.5,-3.05) {forests of the\\components of $G[S_i]$};
    \draw[link] (a3) -- (1.3,-1.85);
    \draw[rest] (1.3,-1.85) -- ++(-0.55,-0.75) -- ++(1.1,0) -- cycle;
    \node[font=\footnotesize,align=center] at (1.3,-3.05) {forest on\\$U_{i+1}$};
    \node[font=\footnotesize] at (-0.1,-4.7) {(a) the forest on $U_i$};
  \end{scope}
  % ---------- (b) the whole forest ----------
  \begin{scope}[xshift=6.0cm]
    \draw[red!35,line width=4pt,line cap=round,line join=round] (0,0)--(0,-0.5)--(1.2,-1.1)--(1.2,-1.6)--(1.15,-2.2)--(1.15,-2.75);
    \draw[rest,rounded corners=3pt,fill=none] (0.15,-0.85) rectangle (3.05,-4.0);
    \node[font=\scriptsize,anchor=north east] at (3.05,-4.02) {forest on $U_2$};
    % level 1
    \node[vtx] (b1) at (0,0) {};
    \node[vtx] (b2) at (0,-0.5) {};
    \draw[link] (b1)--(b2);
    \draw[decorate,decoration={brace,amplitude=3pt}] (-0.18,-0.58) -- (-0.18,0.08) node[midway,left=4pt,font=\footnotesize] {$D_1$};
    \foreach \x in {-1.0,-0.35} {
      \draw[link] (b2) -- (\x,-1.1);
      \draw[tri] (\x,-1.1) -- ++(-0.25,-0.6) -- ++(0.5,0) -- cycle;
    }
    \node[font=\scriptsize] at (-0.68,-1.95) {$G[S_1]$};
    % level 2
    \node[vtx] (c1) at (1.2,-1.1) {};
    \node[vtx] (c2) at (1.2,-1.6) {};
    \draw[link] (b2)--(c1)--(c2);
    \draw[decorate,decoration={brace,amplitude=3pt}] (1.38,-1.02) -- (1.38,-1.68) node[midway,right=4pt,font=\footnotesize] {$D_2$};
    \foreach \x in {0.5,1.15} {
      \draw[link] (c2) -- (\x,-2.2);
      \draw[tri] (\x,-2.2) -- ++(-0.25,-0.6) -- ++(0.5,0) -- cycle;
    }
    \node[font=\scriptsize] at (0.82,-3.05) {$G[S_2]$};
    % last level
    \draw[link,densely dotted] (c2) -- (2.1,-2.2);
    \node[vtx] (e1) at (2.1,-2.2) {};
    \node[vtx] (e2) at (2.1,-2.7) {};
    \draw[link] (e1)--(e2);
    \draw[decorate,decoration={brace,amplitude=3pt}] (2.28,-2.12) -- (2.28,-2.78) node[midway,right=4pt,font=\footnotesize] {$D_k$};
    \foreach \x in {1.8,2.45} {
      \draw[link] (e2) -- (\x,-3.0);
      \draw[tri] (\x,-3.0) -- ++(-0.22,-0.5) -- ++(0.44,0) -- cycle;
    }
    \node[font=\scriptsize] at (2.12,-3.75) {$G[S_k]$};
    \node[font=\footnotesize] at (0.9,-4.7) {(b) the forest on $V(G)=U_1$};
  \end{scope}
\end{tikzpicture}
\caption{Assembling the elimination forest. (a)~The vertices of the separator $D_i$ form a chain; below its last vertex hang, as siblings, the recursive forests of the components of $G[S_i]$ and the forest already built on $U_{i+1}$. (b)~Applying (a) for $i=k,k-1,\ldots,1$: the dashed box is the forest on $U_2$, glued below the last vertex of $D_1$, and the dotted edge stands for the omitted levels $D_3,\ldots,D_{k-1}$. A root-to-leaf path passes through the chains $D_1,\ldots,D_i$ and then ends on $D_i$ or enters a component of $G[S_i]$ (highlighted), which gives the height bound \eqref{eq:forest-recurrence}.}
\label{fig:forest}
\end{figure}

\begin{lemma}[Assembly]\label{lem:forest-assembly}
The construction gives an elimination forest of $G$ of height
\begin{equation}\label{eq:forest-recurrence}
h(G)\le\max_i\Big\{\sum_{j\le i}|D_j|
+\max_{C\in\cc(G[S_i])}h(G[C])\Big\},
\end{equation}
where the inner maximum is zero when $S_i$ is empty.
\end{lemma}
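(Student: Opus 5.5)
The plan is a backward induction on $i=k,k-1,\ldots,1$, proving two things about the forest $\mathcal F_i$ built on $U_i$: (a) it is an elimination forest of $G[U_i]$, and (b) its height is at most
\[
h_i:=\max_{i'\ge i}\Big\{\sum_{i\le j\le i'}|D_j|+\max_{C\in\cc(G[S_{i'}])}h(G[C])\Big\}.
\]
For $i=1$ this gives \eqref{eq:forest-recurrence}, since $U_1=V(G)$. For the base of the induction, set $U_{k+1}=\emptyset$ with the empty forest. All sets involved are disjoint, because \cref{lem:separator-peeling} says the $S_i,D_i$ partition $V(G)$. So $\mathcal F_i$ is a rooted forest on exactly $U_i=S_i\cup D_i\cup U_{i+1}$.

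\emph{Height.} Take a root-to-leaf path in $\mathcal F_i$. If $D_i\neq\emptyset$, the path starts at the top of the chain $D_i$. It either ends on the chain, or it passes through all $|D_i|$ chain vertices and then continues into one recursive forest of a component $C$ of $G[S_i]$ or into $\mathcal F_{i+1}$. The recursive forest has height $h(G[C])$ by definition. By induction, $\mathcal F_{i+1}$ has height at most $h_{i+1}$. Adding $|D_i|$ to each case gives at most $h_i$. If $D_i$ is empty, the forest is a disjoint union and the same bound follows directly.

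\emph{Elimination property.} Let $pq$ be an edge of $G[U_i]$; I split into cases by where $p$ and $q$ lie.
\begin{itemize}
\item If some endpoint is in $D_i$, that endpoint is on the chain, and the chain vertices are ancestors of every other vertex of $\mathcal F_i$ (and comparable among themselves). So $p$ and $q$ are comparable.
\item If both endpoints are in $U_{i+1}$, they are comparable by induction.
\item If both endpoints are in $S_i$, they lie in the same component $C$ of $G[S_i]$. The recursive forest of $C$ is an elimination forest by the outer induction on $|V|$; its base case is a single vertex, and the recursion applies because $|S_i|\le n/2<n$ by \cref{lem:separator-peeling}. So $p$ and $q$ are comparable.
\item The remaining case, $p\in S_i$ and $q\in U_{i+1}$, is impossible: \cref{lem:separator-peeling} says no edge joins $S_i$ to $U_{i+1}$.
\end{itemize}
These cases exhaust all edges of $G[U_i]$. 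Note also that edges between different components of $G[S_i]$ do not exist, so the recursive forests of distinct components need not be comparable to each other, which is why hanging them as siblings is fine.

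Two small points need care. First, here $h(\cdot)$ of a component denotes the height of the forest the recursion actually builds, so the statement is about the algorithm's output rather than an optimum. Second, the recursion on components of $G[S_i]$ uses fresh LP solutions, but this does not affect the argument: only the elimination property of those forests and their heights enter.

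The main obstacle is the bookkeeping in the height claim: checking that the path through the chains $D_1,\ldots,D_{i'}$ correctly yields the prefix sum $\sum_{j\le i'}|D_j|$. This follows directly by unrolling the induction.
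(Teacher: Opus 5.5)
Your proof is correct and follows essentially the same route as the paper's: backward induction on $i$, with the same case analysis for the elimination property (chain endpoints are comparable to everything below them, edges within one component of $G[S_i]$ or within $U_{i+1}$ are handled by induction, and no edge joins $S_i$ to $U_{i+1}$), and the height bound obtained from root-to-leaf paths through the chains $D_1,\ldots,D_i$. The only difference is presentational: you make the height argument an explicit induction on the auxiliary bound $h_i$, whereas the paper reads it off directly from the shape of a root-to-leaf path.
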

\begin{proof}
We show by backward induction on $i$ that the forest on $U_i$ is an elimination forest of $G[U_i]$; the forest on $U_{i+1}$ is one of $G[U_{i+1}]$ by induction, and each recursive forest is an elimination forest of its component by induction on the size. Consider an edge $pq$ of $G[U_i]$. If $p\in D_i$, then $p$ lies on the chain, and every vertex of $U_i$, in particular $q$, lies on the chain or below its last vertex, so $p$ and $q$ are comparable; the case $q\in D_i$ is symmetric. If $p$ and $q$ lie in the same component of $G[S_i]$ or both lie in $U_{i+1}$, they are comparable by induction. The remaining cases do not occur: distinct components of $G[S_i]$ are not adjacent, and no edge joins $S_i$ to $U_{i+1}$ by \cref{lem:separator-peeling}.

For the height, a root-to-leaf path of the forest on $V(G)$ passes through the chains $D_1,\ldots,D_{i-1}$ and then either ends on the chain $D_i$ or enters a component of $G[S_i]$, for some $i$. In the first case it has at most $\sum_{j\le i}|D_j|$ vertices; in the second case, at most $\sum_{j\le i}|D_j|+h(G[C])$ vertices for the component $C$ that it enters. This proves \eqref{eq:forest-recurrence}.
\end{proof}

\begin{lemma}[Height]\label{lem:forest-height}
For every connected graph $G$ with $n$ vertices, the construction gives an elimination forest of height
\begin{equation}\label{eq:forest-pw-bound}
h(G)=O(\pw(G)\log n).
\end{equation}
\end{lemma}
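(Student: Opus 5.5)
The plan is to unroll the recurrence \eqref{eq:forest-recurrence} of \cref{lem:forest-assembly} along a root-to-leaf path, exactly as in the proof of \cref{lem:wlp-to-cw}. The prefix bound \cref{lem:separator-prefix} plays the role of the weak-local-partition inequality. \cref{lem:vertex-lower} replaces \eqref{eq:cw-density}.

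Concretely, I would prove by induction on $n$ the stronger statement
\[
h(G)\le c\,\pw(G)\,\Phi(G),\qquad \Phi(G)=\log\bigl(\alpha\,n^{\beta}\,\LPVS(G)\bigr)+\gamma\log_2 n,
\]
for suitable constants, when $G$ is connected with at least two vertices; a single vertex has height one. Alternatively, and more transparently, I would fix a root-to-leaf path and let $G=H_0\supseteq H_1\supseteq\cdots\supseteq H_T$ be the nested connected recursive instances with at least two vertices that it visits. At each level $t$, the path passes through the chains $D_1,\ldots,D_{i_t}$ of $H_t$ and then enters a component $H_{t+1}$ of $H_t[S_{i_t}]$, or ends.

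By \cref{lem:separator-prefix} and \cref{lem:vertex-lower}, together with $\pw(H_t)\le\pw(G)$ for induced subgraphs, level $t$ contributes at most
\[
240\,\pw(G)\log\frac{3e\,\LPVS(H_t)}{W^{(t)}_{i_t}}
\]
vertices. By \cref{lem:separator-heaviest}(i), $\LPVS(H_{t+1})\le\tfrac12 W^{(t)}_{i_t}$, so the logarithms telescope. The total is at most
\[
T\log(6e)+\log\frac{\LPVS(G)}{\LPVS(H_T)}+\log\frac{3e\,\LPVS(H_T)}{W^{(T)}}.
\]
The last term is at most $\log(3en)$, since $W\ge\LPVS(H_T)/|V(H_T)|$. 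The depth satisfies $T=O(\log n)$ because $|S_i|\le n/2$ by \cref{lem:separator-peeling}. Finally, $\LPVS(G)\le n\,\pw(G)\le n^2$.

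The step needing care is the lower bound $\LPVS(H)\ge\Omega(1)$, or at least $1/\mathrm{poly}(n)$, for a connected $H$ with $s\ge2$ vertices. I would argue it as follows. Spreading gives a vertex pair with $d(u,v)\ge(s^2-1)/(4s)\cdot\frac1{s}$, or more simply some $v$ with $d(u,v)\ge (s^2-1)/(4s)$ after averaging. A path from $u$ to $v$ has $d(u,v)\le\sum_{\text{edges}}(x_p+x_q)\le2\LPVS(H)$, so $\LPVS(H)\ge3/16$.

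There is one more subtlety. The last level may end on a chain $D_i$ without entering a component, and the final path segment may be a single-vertex component of height one. Both add only $O(\pw(G)\log n)+1$, and the additive $1$ is absorbed because $\pw(G)\ge1$ for connected $G$ with $n\ge2$. Summing gives $h(G)=O(\pw(G)\log n)$.
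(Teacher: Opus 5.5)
Your proposal is correct and follows the paper's proof essentially step for step. You unroll \eqref{eq:forest-recurrence} along a root-to-leaf path, bound each level by \cref{lem:separator-prefix} together with \cref{lem:vertex-lower}, telescope via $\LPVS(H_{t+1})\le W_{i_t}$ (your extra factor $\tfrac12$ only helps), and close with $T=O(\log n)$, $\LPVS(G)\le n^2$, $\LPVS(H')\ge 3/16$ and $\pw(G)\ge1$. The only blemish is your first, garbled version of the spreading estimate (the spurious factor $\frac1s$); the averaging version you then give, $\max_v d(u,v)\ge(s^2-1)/(4s)$ combined with $d(u,v)\le 2\LPVS(H)$ along a simple path, is exactly the paper's argument.
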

\begin{proof}
Since $G$ has an edge, $\pw(G)\ge1$. Unrolling \eqref{eq:forest-recurrence}, there are instances $H_0=G,H_1,\ldots,H_T$, each with at least two vertices, and indices $i_0,\ldots,i_T$ such that $H_{t+1}$ is a component of $H_t[S_{i_t}]$ for $t<T$ and
\[
h(G)\le1+\sum_{t=0}^T\,\sum_{j\le i_t}|D_j|,
\]
where the inner sum is taken in $H_t$, and the term $1$ bounds the height of a forest on a single vertex. In what follows, $W_{i_t}$ also refers to the run of \cref{alg:vertex} on $H_t$. By \cref{lem:vertex-lower}, $\LPVS(H_t)/|V(H_t)|\le\pw(H_t)\le\pw(G)$, since pathwidth does not increase under taking induced subgraphs. So by \cref{lem:separator-prefix},
\[
\sum_{j\le i_t}|D_j|\le240\,\pw(G)\log\frac{3e\,\LPVS(H_t)}{W_{i_t}}.
\]
By \cref{lem:separator-heaviest}(i), $\LPVS(H_{t+1})\le W_{i_t}$ for $t<T$, and by \cref{lem:vertex-volume}(ii), $W_{i_T}\ge\LPVS(H_T)/|V(H_T)|$. Therefore
\[
\sum_{t=0}^T\log\frac{3e\,\LPVS(H_t)}{W_{i_t}}
\le(T+1)\log(3e)+\log\frac{\LPVS(H_0)}{\LPVS(H_T)}+\log|V(H_T)|.
\]
By \cref{lem:separator-peeling}, $|V(H_{t+1})|\le|V(H_t)|/2$, so $T=O(\log n)$. By \cref{lem:vertex-lower}, $\LPVS(G)\le n\pw(G)\le n^2$. Finally, every connected graph $H'$ with $s\ge2$ vertices has $\LPVS(H')\ge3/16$: for two vertices $u,v$ joined by a simple path, the edge constraints and the triangle inequality give $d(u,v)\le\sum_{pq}(x_p+x_q)\le2\LPVS(H')$, the sum ranging over the edges of the path, while spreading with $S=V(H')$ gives $\sum_vd(u,v)\ge(s^2-1)/4$; so $\LPVS(H')\ge(s^2-1)/(8s)\ge3/16$. Hence the sum above is $O(\log n)$, and $h(G)=O(\pw(G)\log n)$.
\end{proof}
All steps take polynomial time, by \cref{lem:separator-peeling} and since the recursion has at most $n$ instances with at least two vertices at each depth.

\begin{proof}[Proof of \cref{thm:main-treedepth}]
The construction processes connected components separately. Apply \eqref{eq:forest-pw-bound} to every component with an edge, using that pathwidth does not increase under taking subgraphs, and note that single vertices have height one; then combine with $\pw(G)\le\td(G)$.
\end{proof}

\begin{proof}[Proof of the pathwidth statement in \cref{thm:main-layout}]
Take a depth-first preorder of the forest. By \cref{lem:forest-order} and \eqref{eq:forest-pw-bound}, applied to every component with an edge, its vertex-separation cost is $O(\pw(G)\log n)$, and \cref{lem:vs-pw} turns it into a path decomposition of the same width.
\end{proof}

\section{Conclusion and Open Problems}\label{sec:conclusion}
We introduced Greedy Region Growing, with two algorithms: Greedy Seymour Carving, a greedy version of Seymour recursion that controls cumulative boundary costs, and Global Carving, which pays for every cut with the growth of a global level rather than with its own piece. Together, they give $O(\log n)$-approximations for \textsc{Cutwidth}, \textsc{Pathwidth}, \textsc{Treedepth}, \textsc{Minimum Storage}, and \textsc{Minimum Register Sufficiency}, and a flow-expander decomposition with $O(\log n\log\log n)$ overhead.

Our algorithms for the layout problems, like those of~\cite{BKS24}, are LP-based, and the standard integrality gaps on expanders~\cite{BKS24} show that $O(\log n)$ is the limit of these relaxations for all of them (\cref{app:gaps}). On the hardness side, under the Small Set Expansion Hypothesis, several of these problems admit no constant-factor approximation~\cite{WAPL14}, and for \textsc{Treedepth} the known NP-hardness of approximation is only for a factor of $1.0003$~\cite{BNS25}. Can SDP relaxations give better bounds for these problems?

For flow-expander decompositions, the $\Omega(\log n)$ graph-theoretic lower bound on the overhead shows that the logarithmic loss is unavoidable, so only the factor $O(\log\log n)$ remains. Global Carving does not remove it: the fine-grained flow-cut gap (\cref{thm:fine-grained}) needs a bound on the sum of the costs of all pieces, each relative to its own weight, whereas Global Carving controls only the prefixes of the carving. We conjecture that the factor $O(\log\log n)$ can be removed from the fine-grained flow-cut gap, which would give a flow-expander decomposition with optimal $O(\log n)$ overhead.

\section*{Acknowledgements}
The author thanks Nikhil Bansal, Ian DeHaan, Euiwoong Lee, Yury Makarychev and Thatchaphol Saranurak for useful discussions.
Generative AI tools assisted in developing the algorithms and proofs and preparing the manuscript; details appear in \cref{app:ai}.

\bibliographystyle{alpha}
\bibliography{references}

\appendix
\section{Comparison with the Weak Local Partitions of~\cite{BKS24}}\label{app:bks-wlp}
The weak-local-partition condition in~\cite{BKS24} compares parts by volume without fixing their order: for every part, the sum ranges over all parts of at least the same volume. Our ordered condition \eqref{eq:wlp} implies it. Indeed, given a part $X_i$, let $t$ be the largest index with $\Vol(X_t)\ge\Vol(X_i)$; then all parts of volume at least $\Vol(X_i)$ are among $X_1,\ldots,X_t$, and \eqref{eq:wlp} for $t$ gives the bound, since $\Vol(X_t)\ge\Vol(X_i)$.

The presentation of~\cite{BKS24} also notes that the technical volume includes partial boundary-edge contributions and a seed. Our condition uses the volume \eqref{eq:static-volume}. We show here that the carving of \cref{alg:cutwidth} also satisfies the unordered bound with the carving-time volumes $W_i$, which include both of those contributions. By \cref{lem:heaviest}, $W_1\ge W_2\ge\cdots$. Fix $i$, and let $t$ be the largest index with $W_t\ge W_i$. Then $t\ge i$, so $W_t=W_i$, and every $j$ with $W_j\ge W_i$ satisfies $j\le t$. By \cref{lem:cw-prefix},
\[
\sum_{j:\,W_j\ge W_i}|\delta_G(X_j)|
\le\sum_{j\le t}|\delta_G(X_j)|
\le320\,\frac{\LP(G)}{n}\log\frac{2e\,\LP(G)}{W_i}.
\]

\section{Minimum Register Sufficiency}\label{app:register}
We reduce \textsc{Minimum Register Sufficiency} to weighted \textsc{Minimum Storage}. For a topological ordering $\pi$ of a DAG $G=(V,E)$, let
\[
R_i^\pi=\set{u:\pi(u)\le i,\ uv\in E\text{ for some }\pi(v)>i}.
\]
The register sufficiency of $\pi$ is $\max_i|R_i^\pi|$, and $\MRS(G)$ is the minimum over topological orderings.

Let $G=(V,E)$ have $m$ arcs, and let $\eps=1/(10m)$. We make two copies $v^-$ and $v^+$ of every vertex $v$, and join them by an arc $v^-v^+$ of weight $1$. For every arc $uv\in E$, we add the arcs $u^-v^-$ and $v^-u^+$ of weight $\eps$. Call the resulting weighted graph $G'$. The copies $v^-$ induce a copy of $G$, and the copies $v^+$ are sinks, so $G'$ is acyclic.

The idea is the following. In any topological ordering of $G'$, the vertex $u^+$ can appear only after $v^-$ for all out-neighbors $v$ of $u$. So if the copies $v^-$ appear in the order $\pi$, then for every $u\in R_i^\pi$ the arc $u^-u^+$ crosses the prefix ending just before the $(i+1)$-th copy $v^-$. Conversely, if every $u^+$ is placed as early as possible, then essentially no other arcs of weight $1$ cross this prefix. The $\eps$-arcs have total weight $2m\eps=1/5$, so they do not affect the analysis.

\begin{lemma}\label{lem:register-reduction}
The graph $G'$ has $2n$ vertices and arc weights in $[1/(10m),1]$, and
\begin{equation}\label{eq:register-reduction}
\MRS(G)\le\MS_w(G')\le3\MRS(G).
\end{equation}
Moreover, the restriction of any topological ordering $\sigma$ of $G'$ to $\{v^-:v\in V\}$ is a topological ordering of $G$ of register sufficiency at most $\MS^w_\sigma(G')$.
\end{lemma}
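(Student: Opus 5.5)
The plan is to prove the two inequalities of \eqref{eq:register-reduction} separately, together with the restriction claim, which yields the lower bound directly. The vertex count and the weight range are immediate: $G'$ has the $2n$ copies $v^\pm$, and its arcs have weight $1$ or $\eps=1/(10m)$. Acyclicity was noted above.

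\emph{Restriction claim and lower bound.} Let $\sigma$ be a topological ordering of $G'$, and let $\pi$ be the induced order of the copies $v^-$. Every arc $uv\in E$ gives an arc $u^-v^-$, so $\pi$ is a topological ordering of $G$. Fix $i$, and let $P$ be the prefix of $\sigma$ ending just before the $(i+1)$-th copy $v^-$ in $\pi$ (take all of $\sigma$ if $i=n$). Consider $u\in R_i^\pi$ with an out-neighbor $v$ satisfying $\pi(v)>i$. Then $u^-\in P$. Since $v^-u^+$ is an arc, $u^+$ comes after $v^-$, which is not in $P$, so $u^+\notin P$. Hence the arc $u^-u^+$ of weight $1$ crosses $P$. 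These arcs are distinct for distinct $u$, so $|R_i^\pi|\le w(\delta_{G'}(P))\le\MS^w_\sigma(G')$. Taking the maximum over $i$ gives the restriction claim, and minimizing over $\sigma$ gives $\MRS(G)\le\MS_w(G')$.

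\emph{Upper bound.} Take an optimal $\pi$ for $\MRS(G)$ and build $\sigma$ by listing the copies $v^-$ in the order $\pi$. Insert each $u^+$ immediately after $v^-$, where $v$ is the last out-neighbor of $u$ in $\pi$, or immediately after $u^-$ if $u$ is a sink. Order the several $u^+$ inserted at the same spot arbitrarily. The arcs into $u^+$ come from $u^-$ and from the $v^-$ with $uv\in E$, and all of these precede $u^+$, so $\sigma$ is topological.

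Now bound the weighted cut of an arbitrary prefix $P$ of $\sigma$. Let $i$ be the number of copies $v^-$ in $P$.
\begin{itemize}
\item The $\eps$-arcs contribute at most $2m\eps=1/5$ in total.
\item A weight-$1$ arc $u^-u^+$ crosses $P$ only if $u^-\in P$ and $u^+\notin P$.
\item If $u$ is a sink, then $u^+$ directly follows $u^-$ up to the other $+$-copies inserted at that position. Such a $u$ is crossing only if $u^-$ is the last $-$-copy in $P$. The same reasoning shows that, apart from vertices in $R_i^\pi$, the only crossing vertices are those $u$ whose $+$-copy sits in the block right after the $i$-th copy $v^-$ but outside $P$.
\item Those $u$ are in $R_{i-1}^\pi$ (their last out-neighbor is the $i$-th vertex) or equal to the $i$-th vertex itself.
\end{itemize}
So the weighted cut is at most $|R_i^\pi|+|R_{i-1}^\pi|+1+1/5$. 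Both register sets have size at most $\MRS(G)$. Since weights are integral up to the $1/5$ error, this is at most $2\MRS(G)+1$ rounded, hence $\le3\MRS(G)$ whenever $\MRS(G)\ge1$. If $\MRS(G)=0$, then $E=\emptyset$ and only the sink case occurs; this corner case, with a single partially placed $u^+$, has to be handled or excluded separately.

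The main obstacle is this careful accounting of the $+$-copies in partially passed blocks, so that each crossing unit arc is charged to $R_i^\pi$, to $R_{i-1}^\pi$, or to the single newest vertex. This is where the factor $3$ comes from.
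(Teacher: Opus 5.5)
Your construction of $\sigma$ is the paper's: placing $u^+$ right after the copy of the last out-neighbor of $u$, or right after $u^-$ for a sink, is exactly ``directly after the first $v_i^-$ with $u\notin R_i^\pi$ and $\pi(u)\le i$''. Your restriction and lower-bound argument also matches the paper's. The gap is in the final count of the upper bound. You bound the crossing unit arcs by $|R_i^\pi|+|R_{i-1}^\pi|+1$, which only gives $w(\delta_{G'}(P))\le2\MRS(G)+1+\frac15$. For $\MRS(G)=1$ this is $\frac{16}{5}>3$, so $\MS_w(G')\le3\MRS(G)$ is not established in that case. Rounding does not rescue it: the weight is not an integer, and your count allows three crossing unit arcs plus a crossing $\eps$-arc, of total weight more than $3$.

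The missing observation is that these sets overlap, so their sizes should not be added. Write $v_i$ for the $i$-th vertex of $\pi$. Every $u\in R_i^\pi$ other than $v_i$ has $\pi(u)\le i-1$ and an out-neighbor beyond position $i$, so $R_i^\pi\subseteq R_{i-1}^\pi\cup\{v_i\}$. Every $u$ whose $+$-copy lies in the block after $v_i^-$ is either $v_i$ or has last out-neighbor $v_i$, and in the latter case $u\in R_{i-1}^\pi$. So all crossing unit arcs belong to vertices of $R_{i-1}^\pi\cup\{v_i\}$. This gives $w(\delta_{G'}(P))\le\MRS(G)+1+\frac15\le3\MRS(G)$ for $\MRS(G)\ge1$, which is the paper's accounting. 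The case $\MRS(G)=0$ cannot be handled, only excluded. If $E=\emptyset$, then in any topological ordering of $G'$ the prefix ending at some $v^-$ cuts $v^-v^+$, so $\MS_w(G')\ge1>0$. This case is ruled out by $m\ge1$, which the definition $\eps=1/(10m)$ already presupposes, and $m\ge1$ forces $\MRS(G)\ge1$.
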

\begin{proof}
\emph{Upper bound.} Let $\pi=(v_1,\ldots,v_n)$ be a topological ordering of $G$ with register sufficiency $P=\MRS(G)\ge1$. Let $\sigma$ list $v_1^-,\ldots,v_n^-$ in this order, and insert each $u^+$ directly after the first $v_i^-$ with $u\in V\setminus R_i^\pi$ and $\pi(u)\le i$. Then $\sigma$ is topological. A prefix $Z$ of $\sigma$ that ends between $v_i^-$ and $v_{i+1}^-$ is crossed by the unit arcs $u^-u^+$ with $u\in R_{i-1}^\pi\cup\{v_i\}$ only. Indeed, if $u^-\in Z$ and $u^+\notin Z$, then either $u=v_i$, or $\pi(u)<i$ and, by the placement of $u^+$, $u\in R_j^\pi$ for all $\pi(u)\le j<i$, in particular $u\in R_{i-1}^\pi$. Hence $w(\delta_{G'}(Z))\le P+1+1/5\le3P$. Hence $\MS_w(G')\le\MS^w_\sigma(G')\le3P$.

\emph{Lower bound and restriction.} Let $\sigma$ be a topological ordering of $G'$, and let $\pi$ be its restriction to $\{v^-:v\in V\}$. The arcs $u^-v^-$ make $\pi$ topological in $G$; write $\pi=(v_1,\ldots,v_n)$. Fix $i<n$, let $Z$ be the prefix of $\sigma$ ending just before $v_{i+1}^-$, and let $u\in R_i^\pi$ with an arc $uv$, $\pi(v)>i$. Then $u^-\in Z$ and $v^-\notin Z$, so $u^+\notin Z$ by the arc $v^-u^+$. Thus $u^-u^+\in\delta_{G'}(Z)$ for every $u\in R_i^\pi$, and $|R_i^\pi|\le w(\delta_{G'}(Z))\le\MS^w_\sigma(G')$. With $\sigma$ optimal, this also gives $\MRS(G)\le\MS_w(G')$.
\end{proof}

\begin{proof}[Proof of the Minimum Register Sufficiency statement in \cref{thm:main-directed}]
Apply \cref{rem:weighted-storage} to the graph $G'$ of \cref{lem:register-reduction}, which has $2n$ vertices and weight ratio at most $10n^2$, so the approximation factor is $O(\log n)$. Restricting the output order to the copies $v^-$ gives the same approximation factor for $\MRS(G)$.
\end{proof}

\section{Nonnegative Node-Weightings}\label{app:nonnegative}
The assumption $A\ge1$ in \cref{thm:weighted-expander} is used only to bound the deficits in \cref{lem:expander-volume}(ii), to ensure $A(X_i)\ge1$ in \cref{lem:expander-well-defined}, and in the last step of the charging argument in the proof of \cref{thm:weighted-expander}. It can be removed as follows.

\begin{lemma}\label{lem:nonnegative-weights}
Let $G$ be a graph with $n$ vertices, let $A\ge0$ be a rational node-weighting, and let $\phi>0$ be rational. One can delete in polynomial time $O(\phi|A|\log n\log\log n)$ edges so that $A$ is $\phi$-flow-expanding in every remaining connected component.
\end{lemma}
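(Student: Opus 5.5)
We reduce to \cref{thm:weighted-expander} by modifying the weighting so that every weight is at least one, at a cost of $O(|A|)$ in total weight and with the logarithms becoming $O(\log n)$. If $|A|=0$ there is nothing to do, since the zero weighting is trivially flow-expanding. Otherwise we scale: $A$ is $\phi$-flow-expanding exactly when $cA$ is $(\phi/c)$-flow-expanding, because demands and congestion scale linearly. So we set $c=n/|A|$ and $A_1=cA$, which gives $|A_1|=n$. The bound we must meet becomes $O((\phi/c)|A_1|\log n\log\log n)=O(\phi|A|\log n\log\log n)$, so it is enough to solve the problem for $A_1$ with parameter $\phi_1=\phi/c$ and delete $O(\phi_1 n\log n\log\log n)$ edges.

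Next we raise all weights to at least one. Put $A_2=A_1+\mathbf 1$, so $A_2\ge1$ and $|A_2|=2n$. Apply \cref{thm:weighted-expander} to $A_2$ with parameter $\phi_1$. This deletes $O(\phi_1|A_2|\log|A_2|\log\log|A_2|)=O(\phi_1 n\log n\log\log n)$ edges, and $A_2$ is $\phi_1$-flow-expanding in every remaining component. Every $A_1$-respecting demand is also $A_2$-respecting, because $A_1\le A_2$ pointwise. Hence $A_1$ is $\phi_1$-flow-expanding in each component, and scaling back shows that $A$ is $\phi$-flow-expanding. The edge count is $O(\phi_1 n\log n\log\log n)=O((\phi|A|/n)\,n\log n\log\log n)=O(\phi|A|\log n\log\log n)$, as required.

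All weights stay rational and the reduction runs in polynomial time. The main point to verify is that $\phi$-flow-expansion is monotone in the weighting and scales correctly, and both follow directly from the definition. One might worry that \cref{thm:weighted-expander} also needs $\phi_1\le1$. It does not: it allows any rational $\phi>0$. And if $\phi_1 n\ge m$, deleting all edges is already within the bound. I expect no real obstacle. The only subtlety is the choice of scaling $c=n/|A|$, which makes the padding by one cost at most a constant factor in $|A_2|$ relative to $|A_1|$ while keeping $\log|A_2|=O(\log n)$.
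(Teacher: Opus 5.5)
Your proposal is correct and matches the paper's proof: the paper uses the same weighting $A'=\frac{n}{|A|}A+1$ with $\phi'=\frac{|A|}{n}\phi$, applies \cref{thm:weighted-expander}, and scales an $A$-respecting demand by $n/|A|$ to make it $A'$-respecting, which combines your scaling and monotonicity steps. Your separate treatment of $|A|=0$ covers a case the paper leaves implicit, since its proof divides by $|A|$.
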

\begin{proof}
Let $A'=\frac{n}{|A|}A+1$ and $\phi'=\frac{|A|}{n}\phi$, so that $A'\ge1$ and $|A'|=2n$. Apply \cref{thm:weighted-expander} to $A'$ and $\phi'$; it deletes $O(\phi'|A'|\log|A'|\log\log|A'|)=O(\phi|A|\log n\log\log n)$ edges. Let $H$ be a remaining component, and let $D$ be an $A$-respecting demand in $H$. Then $\frac{n}{|A|}D$ is $A'$-respecting, so it can be routed in $H$ with congestion at most $1/\phi'$. Scaling this routing by $\frac{|A|}{n}$ routes $D$ with congestion at most $\frac{|A|}{n}\cdot\frac1{\phi'}=\frac1\phi$.
\end{proof}

\section{Separation Oracles and Implementation}\label{app:implementation}
We give the algorithmic details for the LPs, the region-growing steps, and the multicommodity flow LP of \cref{sec:expander}.

\subsection{Separating the spreading constraints}
All triangle, symmetry, nonnegativity, precedence-zero, and edge constraints in \eqref{eq:cw-lp}, \eqref{eq:directed-lp}, and \eqref{eq:vertex-lp} form polynomial-size families. Only the spreading constraints are exponential.

Fix a candidate solution and a center $u$. For either symmetric relaxation, sort the distances as $q_1\le\cdots\le q_n$. Among all sets of size $s$, the minimum of $\sum_{v\in S}d(u,v)$ is $\sum_{i=1}^sq_i$. A spreading constraint centered at $u$ is therefore violated if and only if some prefix satisfies
\[
\sum_{i=1}^sq_i<\frac{s^2-1}{4}.
\]
Check every prefix after sorting, and repeat for every center. For the directed relaxation, sort $q_v=d(u,v)+d(v,u)$ and apply the same procedure. The vertex variables in \eqref{eq:vertex-lp} do not affect this separation oracle.

For completeness, the variables may be bounded by $n^2$ without changing the optimum. Capping a nonnegative metric or quasi-metric at $n^2$ preserves the triangle inequalities. Any spreading sum containing a capped term is already at least its required right-hand side; all other sums are unchanged. In the vertex LP, cap the $x_v$ variables at $n^2$ as well. If an endpoint variable is capped, it still dominates every capped edge distance, so the edge constraints remain valid. All coefficients have polynomial encoding length, and the ellipsoid method computes an optimum rational solution in polynomial time.

\subsection{Implementing the carving steps}
Each LP solution is rational with polynomial encoding length, so all volumes, weights, event radii, keys, and window endpoints are rational, and so is every quantity that the algorithms compare. Global Carving (\cref{alg:cutwidth,alg:directed,alg:vertex}) only compares volumes. In Line~5, the boundary is constant between consecutive events, so it suffices to take the midpoint of an event-free interval in the window with the smallest boundary cost.

In \cref{sec:expander}, logarithms enter the algorithm only through the stability tests and through $\Gamma$ in Line~7. A stability test $F(a)\le2F(b)$ for a deficit $F(r)=\log(e\Lambda/\mathcal{V}(r))$ is equivalent to $\mathcal{V}(b)^2\le e\Lambda\,\mathcal{V}(a)$, and the test for $\Gamma$ is equivalent to $A(B_b)^2\le e|A|\,A(B_a)$ for the two balls $B_a\subseteq B_b$. Replacing the constant $e$ by $4$ inside every deficit and inside $\Gamma$ makes these rational comparisons. The new deficits are still at least one and exceed the old ones by less than $\frac12$, so the analysis changes only by absolute constants. The annulus count only needs to be an integer of order $\log\log|A|$ that is large enough for \cref{lem:expander-well-defined}; such an integer can be computed by comparisons with powers of two.

In Line~7, it suffices to use $1+\lceil\log_2x\rceil$ for $x=|A|/A(X_i)$ in place of $\Gamma(X_i)$. For $x\ge1$, both quantities lie between $\log(ex)$ and $3\log(ex)$, so the averaging argument in the proof of \cref{lem:size-cut} loses only a constant factor.

Finally, the radius of \cref{lem:stable-annulus} can be taken as the midpoint $m$ of an event-free interval inside the annulus that minimizes $c(m)/\mathcal{V}(m)$. On each such interval $\mathcal{V}$ is affine and nondecreasing, so $\mathcal{V}(m)\ge\frac12\mathcal{V}(r)$ for every $r$ in it, and the minimizing midpoint has $c(m)/\mathcal{V}(m)$ at most twice the value at the radius given by \cref{lem:rg}. Since $F(m)\ge F(b)$, the proof of \cref{lem:stable-annulus} shows that $m$ satisfies its bound up to a factor of two. All carving estimates allow these constant factors, and all algorithms run in polynomial time.

\subsection{The multicommodity flow LP}
There is one commodity $p=\{s,t\}$ for each unordered vertex pair. Replace each undirected edge by two orientations, and let $f_{xy}^p\ge0$ be the amount of commodity $p$ sent on orientation $xy$. Flow conservation requires net outflow $\Pi(s,t)$ at $s$, net inflow $\Pi(s,t)$ at $t$, and zero net flow elsewhere. Minimize $\lambda\ge0$ subject to
\[
\sum_p(f_{xy}^p+f_{yx}^p)\le\lambda
\qquad(xy\in E(G)).
\]
This LP has polynomial size. Its length dual, normalized by $\sum_e\ell_e\le1$, maximizes
\[
\sum_{\{s,t\}}\Pi(s,t)d_\ell(s,t).
\]
By strong duality, a routing of congestion at most $1/(2\phi)$ exists exactly when every normalized length vector has dual value at most $1/(2\phi)$. So either the LP returns the routing required by \cref{lem:product-certificate}, or its dual returns lengths satisfying \eqref{eq:failed-dual}.

\section{Integrality Gaps}\label{app:gaps}
We sketch why $O(\log n)$ is the limit of the relaxations we use for the layout problems. Here the \emph{integrality gap} for \textsc{Cutwidth} is the ratio of $\cw(G)$ to
\[
\max_{\emptyset\ne S\subseteq V(G)}\frac{\LP(G[S])}{|S|},
\]
the strongest normalized LP lower bound over all nonempty induced subgraphs, which captures the LP bounds used throughout the recursion. We use the analogous bound for the other layout problems, with the corresponding LP (after the reduction for \textsc{Minimum Register Sufficiency}). For \textsc{Cutwidth} and \textsc{Pathwidth}, the gap is witnessed by the standard example of bounded-degree expanders~\cite{BKS24}; the same example works for the other problems. Let $H$ be an $n$-vertex expander of maximum degree $O(1)$, in which every set of $\lceil n/2\rceil$ vertices has $\Omega(n)$ edges leaving it, and let $s=Cn/\log n$ for a large constant $C$. Write $\operatorname{hop}(u,v)$ for the number of edges on a shortest path in the underlying undirected graph.

\medskip
\noindent\emph{Spreading.} Let $F$ be a graph with $N=O(n)$ vertices and maximum degree $O(1)$, and let $\rho$ be a symmetric function with $\rho(u,v)\ge s\max\{1,c\operatorname{hop}(u,v)\}$ for all $u\ne v$, where $c>0$ is a constant. Then $\rho$ satisfies the spreading constraints. Indeed, fix $u$ and $S$ with $|S|=k\ge2$. If $k\le4s-1$, then $\sum_{v\in S}\rho(u,v)\ge s(k-1)\ge(k^2-1)/4$. Otherwise, since a ball of $r$ hops contains $2^{O(r)}$ vertices, at least half of $S$ is $\Omega(\log k)=\Omega(\log n)$ hops away from $u$, so the sum is $\Omega(sk\log n)=\Omega(Cnk)\ge k^2/4$.

\medskip
\noindent\emph{Undirected problems.} Let $d=s\operatorname{hop}_H$, and $x_v=s/2$ for every $v$. This is feasible for \eqref{eq:cw-lp} and \eqref{eq:vertex-lp}, with objective $O(sn)=O(n^2/\log n)$. On the other hand, in every ordering, the prefix of $\lceil n/2\rceil$ vertices has $\Omega(n)$ edges, and hence $\Omega(n)$ vertices with a neighbor, outside it. So $\cw(H)$, $\pw(H)$, and $\td(H)>\pw(H)$ are $\Omega(n)$, which is $\Omega(\log n)$ times $\LP(H)/n$ and $\LPVS(H)/n$.

\medskip
\noindent\emph{Directed problems.} Let $G$ be the DAG on $V(H)\cup E(H)$ with an arc $ue$ whenever $u\in e$. Every topological ordering has a prefix containing exactly $\lceil n/2\rceil$ vertices of $V(H)$. Every edge $e=uv$ of $H$ with $u$ in this prefix and $v$ outside lies outside it, so the arc $ue$ crosses the prefix and $u$ has an out-neighbor outside it. Hence $\MS(G)$ and $\MRS(G)$ are $\Omega(n)$. Now let $d(x,y)$ be $s$ times the minimum number of arcs traversed forward on an undirected $x$--$y$ path. It satisfies the triangle inequality and $d(y,x)=0$ for every arc $xy$, and its objective in \eqref{eq:directed-lp} is $s|E(G)|=O(n^2/\log n)$. Since every arc goes from $V(H)$ to $E(H)$, forward and backward arcs alternate along every path, and since $G$ is acyclic, $\bar d(x,y)\ge s$ for $x\ne y$. Hence $\bar d(x,y)\ge s\max\{1,\operatorname{hop}(x,y)-1\}$, and spreading holds. For \textsc{Minimum Register Sufficiency}, the relaxation is \eqref{eq:directed-lp} on the graph obtained from $G$ by \cref{lem:register-reduction} and \cref{rem:weighted-storage}. The same construction, with $s=CN/\log N$ for its number $N=\Theta(n^2)$ of vertices, gives an $\Omega(\log n)$ gap there: that graph has no directed path with more than four arcs, so forward arcs form a constant fraction of every long path, and it has $O(n)2^{O(r)}$ vertices within $r$ hops of any vertex, which suffices for the computation above.

Finally, these bounds hold uniformly over induced subgraphs. The displayed feasible solutions restrict to every induced subgraph $J$, with objective $O(s|V(J)|)$: the original constructions have bounded degree, the vertex solution has $x_v=s/2$, and every arc of the subdivided construction is incident to a subdivision vertex of degree two. Thus maximizing the normalized LP value over induced subgraphs preserves the $\Omega(\log n)$ gaps.

\paragraph{The direct cutwidth LP.}
Let $H$ be the same $n$-vertex bounded-degree expander as above, so that $\cw(H)=\Omega(n)$.
We also show that the stronger direct relaxation $\mathrm{LPCW}$ introduced in~\cite[Section~6.2]{BKS24} has an $\Omega(\log n)$ integrality gap. This LP minimizes $C$ over metrics $d_x$ on $V(H)$, one for each $x$, subject to $\sum_{uv\in E(H)}d_x(u,v)\le C$ for every $x$, the pair constraints $d_x(x,y)+d_y(x,y)\ge1$ for distinct $x,y$, and the triple constraints $d_x(y,z)+d_y(x,z)+d_z(x,y)\ge1$ for distinct $x,y,z$.

Write $h(u,v)=\operatorname{hop}_H(u,v)$. For each $x$, give every edge $uv$ length
\[
\ell_x(uv)=\frac{2}{1+\min\{h(x,u),h(x,v)\}},
\]
and let $d_x$ be the resulting shortest-path metric. The metric constraints hold automatically. Every path from $x$ to a distinct vertex starts with an edge of length $2$, so the pair constraints hold. For a triple, relabel its vertices so that $y,z$ are a farthest pair in hop distance, and put $a=h(y,z)$. Any $y$--$z$ path has at least $a$ edges. For $0\le j<a$, its $j$-th vertex $v_j$, counting from $v_0=y$, satisfies $h(x,v_j)\le h(x,y)+j\le a+j\le2a-1$. Thus each of the first $a$ edges has length at least $1/a$, giving $d_x(y,z)\ge1$ and hence the triple constraint.

Let $\Delta\ge3$ be a fixed upper bound on the maximum degree of $H$, and put $r=\lfloor\log n/(2\log\Delta)\rfloor$. Only $O(\Delta^r)=O(\sqrt n)$ edges have an endpoint within $r$ hops of $x$; each has length at most $2$, and all other edges have length at most $2/(1+r)$. Therefore, uniformly over $x$,
\[
\sum_{uv\in E(H)}d_x(u,v)
\le\sum_{uv\in E(H)}\ell_x(uv)
=O\left(\sqrt n+\frac{n}{\log n}\right)
=O\left(\frac{n}{\log n}\right).
\]
This gives a feasible direct LP solution of value $O(n/\log n)$, while $\cw(H)=\Omega(n)$, proving the claimed gap.

\section{AI Disclosure}\label{app:ai}
The author formulated the fine-grained flow-cut gap problem, specified the LP relaxation, and proposed greedy Seymour recursion as the approach. A central difficulty was choosing a greedy order when the radii returned by Seymour's region growing were not known in advance. Drawing on the stability idea in~\cite{BKS24}, the author proposed maintaining a stable interval $[r,r']$ for each vertex and selecting centers greedily by ball volume at an interval endpoint, rather than at the eventual carving radius. The author also anticipated that using stability within greedy Seymour recursion would avoid the loss from the multiscale packing-covering construction of~\cite{BKS24} and improve its bound. The author supplied this algorithmic template and explained the relevant Seymour bound. Through iterative discussions, GPT-5.6 Sol developed the proof establishing the desired guarantee (\cref{thm:fine-grained}). It subsequently extended the result to the other problems, leading to $O(\log n\log\log n)$ bounds for all of them.

The improvement to $O(\log n)$ for the layout problems (Global Carving, \cref{sec:ov-paced,sec:cutwidth,sec:directed,sec:vertex}) was obtained by Claude Opus 5.5, when given the paper with these results. GPT-6 Astra, Sol and Opus proved the integrality gaps and assisted with writing, and together drew the figures. All results were verified by the author.
\end{document}